\newtheorem{theorem}{Theorem}[section]
\newtheorem{lemma}[theorem]{Lemma}
\newtheorem{corollary}[theorem]{Corollary}
\newtheorem{conjecture}[theorem]{Conjecture}
\newtheorem{claim}[theorem]{Claim}
\theoremstyle{remark}
\newtheorem{assumption}[theorem]{Assumption}
\theoremstyle{definition}
\newtheorem{definition}[theorem]{Definition}
\newcommand{\Xomit}[1]{}
\newcommand{\opt}{{\sf OPT}}
\newcommand{\eps}{\varepsilon}
\newcommand{\prob}{\mathrm{Prob}}
\newcommand {\brc}   [1] {\left(#1\right)}
\newcommand {\Probab}  [1] {\Pr \brc{#1 }}
\newcommand {\set}   [1] {\left\{ #1 \right\}}
\DeclareMathOperator {\cost}  {cost}
\newcommand{\sopt}{S_{\opt}}
\newcommand{\topt}{T_{\opt}}
\begin{document}

\title{Minimizing the Union: Tight Approximations for Small Set Bipartite Vertex Expansion}

\author{Eden Chlamt\'a\v{c}\thanks{Partially supported by ISF grant 1002/14.}\\Ben Gurion University \and Michael Dinitz\thanks{Supported by NSF awards 1464239 and 1535887.}\\Johns Hopkins University \and Yury Makarychev\thanks{Supported by NSF awards CAREER CCF-1150062 and IIS-1302662.}\\TTIC}
%\date{}

\begin{titlepage}
\def\thepage{}
\maketitle
\begin{abstract} %\small\baselineskip=9pt
In the Minimum $k$-Union problem (M$k$U) we are given a set system with $n$ sets and
are asked to select $k$ sets in order to minimize the size of their
union.  Despite being a very natural problem, it has received
surprisingly little attention: the only known approximation algorithm
is an $O(\sqrt{n})$-approximation due to [Chlamt\'a\v{c} et al
APPROX~'16].  This problem can also be viewed as the bipartite version
of the Small Set Vertex Expansion problem (SSVE), which we call the
Small Set Bipartite Vertex Expansion problem (SSBVE).  SSVE, in which
we are asked to find a set of $k$ nodes to minimize their vertex
expansion, has not been as well studied as its edge-based counterpart
Small Set Expansion (SSE), but has recently received significant
attention, e.g.~[Louis-Makarychev APPROX '15].  However, due to the
connection to Unique Games and hardness of approximation the focus has
mostly been on sets of size $k = \Omega(n)$, while we focus on the
case of general $k$, for which no polylogarithmic approximation is
known.

We improve the upper bound for this problem by giving an
$n^{1/4+\eps}$ approximation for SSBVE for any constant $\eps > 0$.
Our algorithm follows in the footsteps of Densest $k$-Subgraph (DkS)
and related problems, by designing a tight algorithm for random
models, and then extending it to give the same guarantee for arbitrary
instances.  Moreover, we show that this is tight under plausible
complexity conjectures: it cannot be approximated better than
$O(n^{1/4})$ assuming an extension of the so-called ``Dense versus
Random" conjecture for DkS to hypergraphs.

In addition to conjectured hardness via our reduction, we show that
the same lower bound is also matched by an integrality gap for a
super-constant number of rounds of the Sherali-Adams LP hierarchy,
and an even worse integrality gap for the natural SDP relaxation.
Finally, we note that there exists a simple bicriteria $\tilde O(\sqrt{n})$
approximation for the more general SSVE problem (where no non-trivial approximations
were known for general $k$).
\end{abstract}

\end{titlepage}

% !TEX root = SSBVE.tex

\section{Introduction}
Suppose we are given a ground set $U$, a set system $\mathcal S \subseteq 2^U$ on $U$, and an integer $k$.  One very natural problem is to choose $k$ sets from $\mathcal S$ in order to maximize the number of elements of $U$ that are covered.  This is precisely the classical Maximum Coverage problem, which has been well-studied and is known to admit a $(1-1/e)$-approximation (which is also known to be tight)~\cite{Feige98}.  But just as natural a problem is to instead choose $k$ sets in order to \emph{minimize} the number of elements of $U$ that are covered.  This is known as the \emph{Minimum $k$-Union} problem (M$k$U), and unlike Maximum Coverage, it has not been studied until recently (to the best of our knowledge), when an $O(\sqrt{m})$-approximation algorithm was given by Chlamt\'a\v{c} et al.~\cite{CDKKR16} (where $m$ is the number of sets in the system).  This may in part be because M$k$U seems to be significantly harder than Maximum Coverage: when all sets have size exactly $2$ then M$k$U is precisely the Smallest $m$-Edge Subgraph problem (S$m$ES), which is the minimization version of the well-known Densest $k$-Subgraph (D$k$S) problem and is thought to be hard to approximate better than a polynomial.  M$k$U is the natural hypergraph extension of S$m$ES.

M$k$U is also related to another set of problems which are thought to be hard to approximate: problems similar to \emph{Small Set Expansion}.  Given an instance of M$k$U, we can construct the obvious bipartite graph in which the left side represents sets, the right side represents elements, and there is an edge between a set node and an element node if the set contains the element.  Then M$k$U is clearly equivalent to the problem of choosing $k$ left nodes in order to minimize the size of their neighborhood.  We call this the \emph{Small Set Bipartite Vertex Expansion} (SSBVE) problem, and is the way we will generally think of M$k$U throughout this paper.  This is the bipartite version of the Small Set Vertex Expansion (SSVE) problem (in which we are given an arbitrary graph and are asked to choose $k$ nodes to minimize the size of their neighborhood), which is in turn the vertex version of the Small Set Expansion (SSE) problem (in which we are asked to choose a set of $k$ nodes to minimize the number of \emph{edges} with exactly one endpoint in the set).  SSE, and SSVE to a lesser extent, have been extensively studied due to their connection to other hard problems (including the Unique Games Conjecture), but based on these connections have generally been considered only when $k$ is relatively large (in particular, when $k = \tilde \Omega(n)$).

\subsection{Random models and worst case approximations}

Given the immediate connection to D$k$S and S$m$ES, it is natural not only to examine the techniques used for those problems, but the general algorithmic framework -- the so called ``log-density" framework --  developed for these problems. This approach, first introduced in~\cite{BCCFV10}, can be summarized as follows.  Begin by considering the problem of distinguishing between a random structure (for D$k$S/S$m$ES, a random graph) and a random structure in which a small, statistically similar solution has been planted. Several algorithmic techniques (both combinatorial and LP/SDP based) fail to solve such distinguishing problems~\cite{BCCFV10,FS97-sdpgap,BCVGZ12}, and thus the gap (in the optimum) between the random and planted case can be seen as a natural lower bound for approximations. To match these lower bounds algorithmically, one develops robust algorithms for this distinguishing problem for the case when the planted solution \emph{does have} statistically significant properties that would not appear in a pure random instance, and then, with additional work, adapts these algorithms to work for worst-case instances while guaranteeing the same approximation.

While this framework was new when introduced in~\cite{BCCFV10}, the actual technical tools to adapt algorithms for random planted instances of D$k$S and S$m$ES\footnote{In~\cite{CDK12}, most of the technical work on S$m$ES focused on certain strong LP rounding properties, however the basic combinatorial algorithm was similar to the algorithm for D$k$S in~\cite{BCCFV10}.} to the adversarial setting were not particularly complex. While a great many problems have strong hardness based on reductions from D$k$S/S$m$ES (most importantly Label Cover and the great many problems with hardness reductions from Label Cover), tight approximations in this framework have not been achieved for any problem since~\cite{CDK12}. This despite the fact that a number of algorithms for such problems can be seen as a partial application of the same techniques (e.g.~\cite{CHK11,MM13,MMS14}). The reason is that the approach only provides a general framework. As with any tool (e.g. SDPs), to apply it, one must deal with the unique technical challenges offered by the specific problem one is attacking.

As we shall see, we are able to successfully apply this approach to M$k$U/SSBVE to achieve tight approximations, making this only the third complete application\footnote{It has also been applied to Label Cover in a recent submission which includes the first author, though only in the semirandom setting.} of the
framework, and the first one to overcome technical obstacles which deviate significantly from~\cite{BCCFV10}.

\subsection{Problem Definitions and Equivalence}

We study M$k$U/SSBVE, giving an improved upper bound which is tight in the log-density framework. We also strengthen the lower bounds from conjectures for random models in this framework by showing that they are matched by integrality gaps for the  Sherali-Adams LP hierarchy, and that the natural SDP relaxation has an even worse integrality gap.

Slightly more formally, we will mostly study the following two problems.  Given a graph $G = (V, E)$ and a subset $S \subseteq V$, define the neighborhood $N(S) = \{v : \exists \{u,v\} \in E \land u \in S\}$.  In a bipartite graph $G=(U,V,E)$, the \emph{expansion} of a set $S \subseteq U$ (or $S \subseteq V$) is $|N(S)| / |S|$.

\begin{definition} \label{def:MkU}
In the \emph{Minimum $k$-Union} problem (M$k$U), we are given a universe $V$ of $n$ elements and a collection of $m$ sets $\mathcal S \subseteq 2^U$, as well as an integer $k \leq m$.  The goal is to return a collection $T \subseteq \mathcal S$ with $|T| = k$ in order to minimize $\cup_{S \in T} S$.
\end{definition}

\begin{definition} \label{def:SSBVE}
In the \emph{Small Set Bipartite Vertex Expansion} problem (SSBVE) we are given a bipartite graph $(U, V, E)$ with $n = |U|$ and $n' = |V|$ and an integer $k \leq |U|$.  The goal is to return a subset $S \subseteq U$ with $|S| = k$ minimizing the expansion $|N(S)| / k$ (equivalently, minimizing $|N(S)|$).
\end{definition}

The next lemma is obvious, and will allow us to use the two problems interchangeably.
\begin{lemma} \label{lem:MkU-SSBVE}
M$k$U and SSBVE are equivalent: There is an $f(m)$-approximation for M$k$U if and only if there is an  $f(n)$-approximation for SSBVE.\footnote{Note that it is not immediately clear why $n(=|U|)$ is the natural input parameter for an approximation guarantee for SSBVE. This is discussed in Section~\ref{sec:discussion}.}
\end{lemma}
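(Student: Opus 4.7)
The proof is essentially a matter of spelling out the natural bijection between instances of the two problems, being careful to track which size parameter in one problem corresponds to which size parameter in the other. The plan is to establish the equivalence in both directions by exhibiting explicit reductions that preserve the optimal value exactly and map the relevant size parameter correctly.

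First, I would go from M$k$U to SSBVE. Given an M$k$U instance with universe $V$ of size $n$, collection $\mathcal{S}$ of $m$ sets, and parameter $k$, construct the bipartite graph $G = (U', V', E)$ where $U' = \mathcal{S}$ (so $|U'| = m$), $V' = V$ (so $|V'| = n$), and $\{S, v\} \in E$ iff $v \in S$. For any $T \subseteq \mathcal{S}$ with $|T| = k$, the neighborhood $N(T)$ in $G$ is exactly $\bigcup_{S \in T} S$, so the optima coincide. Since the SSBVE input parameter is $|U'| = m$, an $f(\cdot)$-approximation for SSBVE in its input size yields an $f(m)$-approximation for M$k$U.

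For the reverse direction, given an SSBVE instance on $(U, V, E)$ with $|U| = n$ and parameter $k$, define an M$k$U instance whose universe is $V$ and whose set collection is $\{N(u) : u \in U\}$, so the number of sets is $m = |U| = n$. Any $S \subseteq U$ with $|S| = k$ corresponds to a choice of $k$ sets whose union equals $N(S)$, giving the same objective. An $f(\cdot)$-approximation for M$k$U in its number-of-sets parameter then yields an $f(n)$-approximation for SSBVE.

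There is essentially no obstacle here; the only thing to be careful about is that the approximation guarantee is stated in terms of the number of sets $m$ for M$k$U and the size of the left side $n = |U|$ for SSBVE, and the two reductions above match these parameters on the nose. The footnote in the statement flags that using $|U|$ rather than $|U| + |V|$ as the SSBVE parameter is a nontrivial modeling choice (discussed later), but for the lemma itself this choice is exactly what makes the equivalence clean.
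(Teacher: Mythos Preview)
Your proposal is correct and matches the paper's (implicit) approach: the paper calls the lemma ``obvious'' immediately after describing exactly the bipartite incidence graph construction you spell out, and gives no further proof. The only minor imprecision is writing the set collection as $\{N(u) : u \in U\}$, which could collapse duplicates if two left vertices have identical neighborhoods; this is easily handled by treating the collection as a multiset (or tagging each $N(u)$ with a fresh dummy element), and does not affect the argument.
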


While the two problems are equivalent, in different contexts one may be more natural than the other. As we have noted, and will also discuss later, all previous work on these problems has been through the lens of M$k$U, which is especially natural in the graph case when studying D$k$S and S$m$ES. Moreover, the random distinguishing models used in our conjectured lower bounds are based on random graphs and their extension to $O(1)$-uniform hypergraphs, and thus are best understood as applied to M$k$U. However, our approximation algorithm is much more easily explained as an algorithm for SSBVE. Thus, we will use the M$k$U notation when discussing random models and conjectured hardness, and SSBVE when describing our algorithm.

\subsection{Our Results and Techniques}

As mentioned, we look at random distinguishing problems of the form studied by Bhaskara et al.~\cite{BCCFV10} and Chlamt\'a\v{c} et al.~\cite{CDK12}. Define the \emph{log-density} of a graph on $n$ nodes to be $\log_n(D_{avg})$, where $D_{avg}$ is the average degree.  One of the problems considered in~\cite{BCCFV10,CDK12} is the \textsc{Dense vs Random} problem, which is parameterized by $k$ and constants $0 < \alpha, \beta < 1$: Given a graph $G$, distinguish between the following two cases: 1) $G = G(n,p)$ where $p = n^{\alpha-1}$ (and thus the graph has log-density concentrated around $\alpha$), and 2) $G$ is adversarially chosen so that the densest $k$-subgraph has log-density $\beta$ where $k^{\beta} \gg pk$ (and thus the average degree inside this subgraph is approximately $k^{\beta}$).  The following conjecture was explicitly given in~\cite{CDK12}, and implies that the known algorithms for D$k$S and S$m$ES are tight:

\begin{conjecture}\label{con:DvR}
For all $0 < \alpha < 1$, for all sufficiently small $\eps > 0$, and for all $k \leq \sqrt{n}$, we cannot solve \textsc{Dense vs Random} with log-density $\alpha$ and planted log-density $\beta$ in polynomial time (w.h.p.) when $\beta \leq \alpha - \eps$.
\end{conjecture}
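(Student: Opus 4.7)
Since the statement is phrased as a conjecture rather than a theorem, there is no direct proof to sketch in the usual sense; what can be offered is evidence that renders it credible, and my plan would be to assemble such evidence along three complementary fronts. The first front is to verify that natural statistical tests cannot distinguish the two distributions when $\beta \leq \alpha - \eps$ and $k \leq \sqrt{n}$. For subgraph counts, the planted region of log-density $\beta$ contributes roughly $k^{\beta \cdot |E(H)|}$ copies of a fixed small connected $H$, while the random background contributes roughly $n^{|V(H)|} p^{|E(H)|}$; a short calculation shows that for $k \leq \sqrt{n}$ and $\beta < \alpha$ the random background dominates for every such $H$, so subgraph-counting tests, and by an analogous calculation the top eigenvalues of the adjacency matrix, do not expose the plant. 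This already rules out the standard combinatorial and spectral approaches used for D$k$S and S$m$ES in this regime.

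The second front is to exhibit matching integrality gaps for the strongest convex relaxations, mirroring the Sherali-Adams and SDP gaps this paper establishes for SSBVE. Starting from a graph sampled from $G(n,p)$ with $p = n^{\alpha-1}$, the plan is to construct a pseudo-distribution over $k$-subsets whose low-degree moments match those of the planted distribution of log-density $\beta = \alpha - \eps$; this directly yields Sherali-Adams feasibility for $\omega(1)$ rounds, and for Lasserre one must additionally verify the PSD constraint by exhibiting an explicit Gram decomposition of the moment matrix. The template of~\cite{BCVGZ12} for D$k$S would be the starting point, re-parameterized to the $(\alpha, \beta, k)$ regime of the conjecture, with the new work being the concentration bounds needed when the planted density is only barely above the random one.

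The third front, and the main obstacle, is a reduction from a canonical average-case hard problem such as planted clique or a random CSP refutation problem. The difficulty, and the reason this remains a conjecture, is that any such reduction must simultaneously preserve the global random marginal of the output graph and the fine-grained log-density calibration: the embedding has to insert a planted object of log-density exactly $\beta$ while keeping the ambient statistics indistinguishable from $G(n, n^{\alpha-1})$ at every log-density level up to $\alpha$. No existing reduction technique achieves this uniformly over the full range $0 < \alpha < 1$, so I would treat the first two fronts as the substantive content of the evidence and explicitly flag the reduction step as the place where current techniques stop, leaving Conjecture~\ref{con:DvR} as a hypothesis supported by algorithmic and hierarchy-lower-bound evidence rather than by a reduction from a more standard problem.
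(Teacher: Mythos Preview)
Your identification is correct: the statement is a conjecture, and the paper does not attempt to prove it. The paper simply cites it from~\cite{CDK12} as a standing hypothesis, then generalizes it to hypergraphs (Conjecture~\ref{con:HDvR}) and uses that generalization as the source of its conditional lower bounds; there is no proof or evidence-gathering argument in the paper for Conjecture~\ref{con:DvR} itself.

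Your three-front evidence plan goes well beyond what the paper does for this particular statement. The paper's own supporting evidence is indirect and concerns the related SSBVE problem rather than D$k$S: it proves Sherali--Adams and SDP integrality gaps for SSBVE (Section~\ref{sec:gaps}) that match the approximation barrier one would expect if the hypergraph version of the conjecture were true. So while your second front (hierarchy gaps) is in the same spirit as what the paper establishes for SSBVE, the paper does not carry out any of your three fronts for the D$k$S conjecture directly --- it simply assumes it.
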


This conjecture can quite naturally be extended to hypergraphs.  Let $\mathcal G_{n,p,r}$ denote the distribution over $r$-uniform hypergraphs obtained by choosing every subset of cardinality $r$ to be a hyperedge independently with probability $p$.  Define the \textsc{Hypergraph Dense vs Random} problem as follows, again parameterized by $k$ and constants $0 < \alpha,\beta < r-1$.  Given an $r$-uniform hypergraph $G$ on $n$ nodes, distinguish between the following two cases: 1) $G = \mathcal G_{n,p,r}$ where $p = n^{\alpha-(r-1)}$ (and thus the log-density is concentrated around $\alpha$), and 2) $G$ is adversarially chosen so that the densest subhypergraph on $k$ vertices has log-density $\beta$ (and thus the average degree in the subhypergraph is $k^{\beta}$). 

\begin{conjecture} \label{con:HDvR}
For all constant $r$ and $0 < \beta < r-1$, for all sufficiently small $\eps > 0$, and for all $k$ such that $k^{1+\beta} \leq n^{(1+\alpha)/2}$, we cannot solve \textsc{Hypergraph Dense vs Random} $\alpha$ and planted log-density $\beta$ in polynomial time (w.h.p.) when $\beta < \alpha - \eps$.
\end{conjecture}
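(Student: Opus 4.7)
Since the statement is a conjecture rather than a theorem, what I would propose is not an unconditional proof but a justification argument that parallels the support for the graph-case conjecture (Conjecture~\ref{con:DvR}) from which it is derived by analogy. My plan has three components.

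First, I would verify that the regime $k^{1+\beta} \leq n^{(1+\alpha)/2}$ arises naturally from a second-moment calculation on random $r$-uniform hypergraphs drawn from $\mathcal{G}_{n,p,r}$ with $p = n^{\alpha - (r-1)}$. In such an instance, the expected number of hyperedges induced on a fixed set of $k$ vertices is $\binom{k}{r}p \approx k^{r} p$, so the typical induced log-density is much smaller than $\beta$ under the planted scenario. The condition in the conjecture corresponds precisely to the point at which counting dense $k$-subsets (or more refined statistics built from them) ceases to separate the two distributions with high probability, because the variance across random choices of $k$ vertices dominates the mean shift. I would carry out this calculation explicitly to confirm that the exponents line up exactly as claimed.

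Second, I would check that the natural algorithmic approaches fail at exactly $\beta = \alpha$. In the graph case, the log-density threshold is tight because one can enumerate over common neighborhoods of $\lceil 1/\alpha \rceil$ vertices; the hypergraph analogue would enumerate common-hyperedge neighborhoods of $(r-1)$-sets. I would verify that a direct enumeration strategy of this form succeeds for $\beta > \alpha$ and fails for $\beta < \alpha$, mirroring the graph analysis. For additional evidence I would attempt to extend the Sherali-Adams and SDP integrality gap constructions known in the graph case (e.g.,~\cite{BCVGZ12}) to hypergraphs, using appropriate pseudo-distributions on $r$-tuples designed so that moments up to some polylogarithmic level match the random distribution.

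The principal obstacle is of course that an unconditional proof is beyond current techniques: even Conjecture~\ref{con:DvR} has remained open for over a decade. A more reasonable intermediate target would be an explicit reduction from Conjecture~\ref{con:DvR} to Conjecture~\ref{con:HDvR}, for instance by mapping a graph $G$ to the hypergraph of its $r$-cliques and checking that both the random and planted distributions translate correctly under the map. Making the parameters match on both sides --- the log-density of the random image and the induced log-density of the planted image --- is delicate because $r$-clique counts in $G(n, n^{\alpha-1})$ lie on a different exponent scale than $\mathcal{G}_{n,p,r}$, and this parameter-matching step is where I expect the main technical work to lie.
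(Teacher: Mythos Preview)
The statement is a conjecture, and the paper offers no proof or even a detailed justification for it: the only support given is the single sentence ``This conjecture can quite naturally be extended to hypergraphs,'' positioning it purely as the hypergraph analogue of Conjecture~\ref{con:DvR}. So there is no proof in the paper to compare your proposal against.

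You correctly recognize that the statement is a conjecture and frame your proposal as an evidence-gathering program rather than a proof. That is the right stance. Your three components --- a second-moment computation to pin down the threshold $k^{1+\beta}\le n^{(1+\alpha)/2}$, a check that neighborhood-enumeration algorithms break exactly at $\beta=\alpha$, and an attempted reduction from the graph conjecture via $r$-clique hypergraphs --- are all sensible lines of attack, and in fact go well beyond anything the paper attempts. The paper simply posits the conjecture and then \emph{uses} it (in Appendix~\ref{sec:random-app}) to derive conditional lower bounds for M$k$U; it makes no effort to justify the conjecture itself beyond analogy.

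One caution on your third component: the $r$-clique map from $G(n,n^{\alpha-1})$ does not land in $\mathcal{G}_{n,p,r}$ --- clique occurrences are highly correlated, so the image is far from an i.i.d.\ hypergraph --- and you flag exactly this as the delicate step. That is indeed the obstruction, and it is why the paper treats Conjecture~\ref{con:HDvR} as an independent assumption rather than a corollary of Conjecture~\ref{con:DvR}.
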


An easy corollary of this conjecture (proved in the appendices by setting parameters appropriately) is that for any constant $\eps > 0$, there is no polynomial-time algorithm which can distinguish between the two cases from \textsc{Hypergraph Dense vs Random} when the gap between the M$k$U objective function in the two instances is $\Omega(m^{1/4 - \eps})$.  By transforming to SSBVE, we get a similar gap of $\Omega(n^{1/4 - \eps})$.  This also clearly implies the same gap for the worst-case setting.

Complementing this lower bound, we indeed show that in the random planted setting, we can appropriately modify the basic structure of the algorithm of~\cite{BCCFV10} and achieve an $O(n^{1/4+\eps})$-approximation for any constant $\eps>0$, matching the above lower bound for this model. However, our main technical contribution is an algorithm which matches this guarantee in the \emph{worst case} setting 
(thus improving over \cite{CDKKR16}, who gave a $O(\sqrt{m})$-approximation for M$k$U, i.e.\ an $O(\sqrt{n})$-approximation for SSBVE).

\begin{theorem} \label{thm:main}
For any constant $\epsilon > 0$, there is a polynomial-time $O(n^{1/4 + \epsilon})$-approximation algorithm for SSBVE.
\end{theorem}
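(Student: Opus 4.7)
The plan is to adapt the log-density framework of~\cite{BCCFV10,CDK12} to SSBVE. First, standardize the instance: bucket the $n$ left-vertices by degree into $O(\log n)$ classes, and guess (losing a $\polylog(n)$ factor) which bucket contains the largest fraction of the optimum $k$-set $\sopt$. This reduces to instances where vertices of $\sopt$ have approximately uniform degree $d$. Next, guess $d$ and $s^* = |N(\sopt)|$ up to constant factors; the framework then partitions the problem into constantly many log-density regimes $\alpha = \log_n(d)$, and the $n^{1/4+\eps}$ bound should be tight at an interior regime (analogous to $\alpha \approx 1/2$ for D$k$S).

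For the random planted model, the core idea is as follows. If $|\sopt| = k$ and $|N(\sopt)| = s^* \ll kd$, then vertices of $\sopt$ --- whose neighborhoods all fit inside the small set $N(\sopt)$ --- exhibit abnormally large pairwise intersections: a random pair in $U$ has expected common right-neighborhood $\sim d^2/n'$, whereas a pair inside $\sopt$ has expected intersection $\sim d^2/s^*$. More generally, a $t$-tuple drawn from $\sopt$ shares roughly $d^t/(s^*)^{t-1}$ common neighbors. I would search for a $t$-tuple whose common right-neighborhood is unusually large, use it as a ``seed,'' and extend it greedily by collecting all left-vertices whose neighborhoods are absorbed by the seed's neighborhood. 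Choosing $t$ as a function of $\alpha$ and $\eps$ and invoking standard concentration yields an $O(n^{1/4+\eps})$-approximation in the planted setting.

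To handle worst-case inputs, combine this subgraph-counting routine with LP rounding. Take the natural $\iLP$ with variables $x_u, y_v$, constraints $y_v \geq x_u$ for each edge $(u,v)$ and $\sum_u x_u = k$, minimizing $\sum_v y_v$. Use the $\iLP$ solution to preprocess the graph --- discarding left-vertices whose neighborhoods are poorly covered by high-$y$ right-vertices at the level dictated by the guess of $s^*$ --- and then run the seed-and-extend routine on the surviving sub-instance, weighted by the fractional $\iLP$ values. The $\iLP$ solution plays the role of certifying global structure so that the probabilistic analysis from the planted case still applies locally around $\sopt$.

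The main obstacle is this last step: in an adversarial instance, many ``false seeds'' (small subsets with large common right-neighborhood but no near-optimal extension) can appear outside $\sopt$. The key technical lemma must show that, after the $\iLP$-guided preprocessing, any surviving seed is either part of a near-optimal $k$-union or fails to absorb enough left-vertices to reach size $k$. I expect this to require an averaging argument over all $t$-tuples weighted simultaneously by their common-neighborhood mass and by the $\iLP$ values they carry, adapting~\cite{CDK12} from the edge-density setting to the union-minimization setting, where ``many common neighbors of a tuple'' replaces ``many edges inside a subset.'' Cases where $d$ or $k$ lies near the boundary of the parameter range should be handled by a direct $\iLP$-rounding argument, and stitching all regimes together gives Theorem~\ref{thm:main}.
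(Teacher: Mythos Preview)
Your high-level framing (log-density framework, degree bucketing, guessing $d$ and $|N(\sopt)|$) matches the paper, but the core algorithmic idea contains a genuine gap.

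Your seed-and-extend routine searches for a $t$-tuple of \emph{left} vertices with an unusually large common \emph{right}-neighborhood, then collects left vertices absorbed by that neighborhood. This is precisely the D$k$S-style density signal the paper's introduction shows is insufficient for SSBVE: take $k$ left vertices of degree $r$ whose neighborhoods share $r-1$ right vertices but each also has one private right neighbor. Any tuple from this set has a huge common right-neighborhood, yet the union $|N(S)| = r-1+k$ is terrible (and $k$ can be arbitrarily larger than $r$). So a large common right-neighborhood is neither necessary nor sufficient for small union, and your ``false seed'' worry is not a corner case but the central obstruction. Your proposed fix---LP-weighted averaging over tuples---does not address this, and the paper in fact proves the natural LP (and its Sherali--Adams lifts for $\Omega(\log n/\log\log n)$ rounds) has integrality gap $n^{1/4-o(1)}$, so the fractional values carry essentially no information beyond what you already guessed.

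The paper's algorithm is oriented the other way: it guesses vertices on the \emph{right} (in $\topt$) and tracks their left-neighborhoods, which have large overlap with $\sopt$. It follows a caterpillar whose backbone steps replace the current $W\subseteq U$ by $N(N(W))$ and whose hair steps intersect $W$ with $N(v)$ for a freshly guessed $v\in V$. Two ingredients you are missing are essential to make this work in the worst case. First, an \emph{asymmetric pruning} lemma (Claim~\ref{clm:bad-neighbors}): one cannot simply delete low-degree vertices in $\topt$ (that would destroy the union bound), so instead the paper shows there exist large $S'\subseteq\sopt$, $T'\subseteq\topt$ where every $v\in T'$ has back-degree $\Omega(d/k^{c\eps})$ into $S'$ and every $u\in S'$ has only $O(r/k^{c\eps})$ neighbors in $\topt\setminus T'$; this bounds the ``penalty'' from low-degree right vertices. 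Second, the high-degree bucket $V_D=\{v:\deg(v)\ge n/k^{1-c\eps}\}$ is set aside globally; since $|V_D|\le rk^{1-c\eps}$, any $k$-set has small expansion into $V_D$ automatically, and the algorithm only needs to control expansion into $V\setminus V_D$. Finally, the role you assign to LP rounding is played instead by the exact \emph{Least Expanding Set} subroutine (Lemma~\ref{lem:charikar}, no cardinality constraint), invoked at every step: either it returns a set with expansion $\le r/k^{c\eps}$ and the algorithm halts, or its failure certifies the cardinality invariants needed to continue the caterpillar.
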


We prove Theorem~\ref{thm:main} in Section~\ref{sec:algo}. This implies an $O(m^{1/4 + \epsilon})$-approximation for M$k$U (recall that $m$ is the number of sets in an M$k$U instance).  It is natural to wonder whether we can instead get an approximation depending on $n$ (the number of elements).  Unfortunately, we show in Appendix~\ref{sec:random-app} that this is not possible assuming Conjecture~\ref{con:HDvR}.

While our aim is to apply the framework of~\cite{BCCFV10} to SSBVE, we note that SSBVE and D$k$S (or the minimization version S$m$ES) differ from each other in important ways, making it impossible to 
straightforwardly apply the ideas from~\cite{BCCFV10} or~\cite{CDK12}.  
The asymmetry between $U$ and $V$ in SSBVE
requires us to fundamentally change our approach; 
loosely speaking, in SSBVE, we are looking not for an arbitrary dense subgraph of a bipartite graph, but rather
for a dense subgraph of the form $S \cup N(S)$ (where $S$ is a subset of $U$ of size $k$),
since once we choose the set $S\subset U$ we \emph{must} take into account all neighbors of $S$ in $V$.

For example, suppose that there are $k$ nodes in $U$ with degree $r$ whose neighborhoods overlap on some set of $r-1$ nodes of $V$, but each of the $k$ nodes also has one neighbor that is not shared by any of the others.   Then a D$k$S algorithm (or any straightforward modification) might return the bipartite subgraph induced by those $k$ nodes and their $r-1$ common neighbors.  But even though the intersection of the $k$ neighborhoods is large, making the returned subgraph very dense, their \emph{union} is \emph{much} larger (since $k$ could be significantly larger than $r$).  So taking those $k$ left nodes as our SSBVE solution would be terrible, as would any straightforward pruning of this set.

This example shows that we cannot simply use a D$k$S algorithm, and there is also no reduction which lets us transform an arbitrary SSBVE instance into a D$k$S instance where we could use such an algorithm.  Instead, we must fundamentally change the approach of~\cite{BCCFV10} to take into account the asymmetry of SSBVE.  One novel aspect of our approach is a new asymmetric pruning idea which allows us to isolate a relatively small set of nodes in $V$ which will be responsible for collecting all of the ``bad" neighbors of small sets which would otherwise have small expansion. Even with this tool in place, we still need to trade off a number of procedures in each step to ensure that if the algorithm halts it will return a set that is both small and has small expansion (ignoring the pruned set on the right).

In addition to the conditional lower bound guaranteed by the log-density framework (matching our upper bound), we can show unconditional lower bounds 
 against certain types of algorithms: those that depend on Sherali--Adams (SA) lifts of the basic LP, or those that depend on the basic SDP relaxation.  We do this by showing integrality gaps which match or exceed our upper bound.

\begin{theorem} \label{thm:SDP-main}
The basic SDP relaxation of SSBVE has an integrality gap of $\tilde \Omega(n^{1/2})$.
\end{theorem}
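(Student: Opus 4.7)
The plan is to exhibit a random bipartite instance on which the natural SDP has value $O(\sqrt{n})$ while every integer solution has cost $\Omega(n)$. I first fix the natural SDP: a unit vector $v_0$ and vectors $v_x$ for each $x \in U \cup V$ satisfying $\langle v_0, v_x \rangle = \|v_x\|^2$ (indicator), $\sum_{u \in U} \|v_u\|^2 = k$ (cardinality), and $\langle v_u, v_w \rangle = \|v_u\|^2$ for every edge $(u,w) \in E$ with $u \in U,\, w \in V$ (propagation), with the objective of minimizing $\sum_{w \in V} \|v_w\|^2$.

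The instance is an Erd\H{o}s--R\'enyi random bipartite graph $G$ with $|U| = |V| = n$ and edge probability $p = n^{-1/2}$, and I set $k = n^{1/2}$. To upper bound the SDP value, I use a maximally symmetric solution: pick $v$ with $\|v\|^2 = \langle v_0, v \rangle = k/n$ (explicitly, $v = (k/n) v_0 + \sqrt{(k/n)(1 - k/n)}\, e_1$ for a unit vector $e_1 \perp v_0$) and set $v_x = v$ for every $x \in U \cup V$. All three SDP constraints then become trivial identities, and the objective evaluates to $|V| \cdot \|v\|^2 = k = \sqrt{n}$.

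For the integer lower bound, fix any $S \subseteq U$ with $|S| = k$. The events $\{w \in N(S)\}_{w \in V}$ are mutually independent (they involve disjoint edge sets), each with probability at least $1 - (1-p)^k \geq 1 - e^{-pk} = 1 - 1/e$, so $\mathbb{E}|N(S)| \geq (1 - 1/e) n$. A Chernoff bound gives $\Pr[|N(S)| \leq n/3] \leq e^{-\Omega(n)}$, which easily absorbs the union bound over the $\binom{n}{k} \leq e^{O(\sqrt{n} \log n)}$ choices of $S$. Thus with high probability every $k$-set has $|N(S)| = \Omega(n)$, and the integrality gap is $\Omega(n/\sqrt{n}) = \Omega(\sqrt{n})$, in particular $\tilde{\Omega}(n^{1/2})$.

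The main subtlety will be ensuring the symmetric solution satisfies every implicit constraint one might include in the ``natural SDP'' beyond the three listed above; fortunately, since all pairwise inner products equal $k/n \in [0, 1]$, all complementary inner products $\langle v_0 - v_x, v_0 - v_y \rangle = 1 - k/n \geq 0$, and all pairwise squared distances vanish, every standard additional constraint (non-negativity, upper-bound-by-min, triangle inequalities, $\ell_2^2$ triangle inequalities) holds trivially, so the construction is robust to any reasonable formalization of ``basic SDP.'' A secondary technicality is the concentration analysis: the choice $k = \sqrt{n}$ is precisely what makes the Chernoff deviation term $e^{-\Omega(n)}$ overwhelm the union-bound term $e^{O(\sqrt{n}\log n)}$, and one may replace $G$ with a slightly conditioned version to handle degenerate samples.
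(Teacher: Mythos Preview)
Your argument is correct for the SDP you wrote down, but that SDP is missing a constraint which the paper (and most reasonable formulations) include, and your symmetric solution violates it. The paper's ``basic SDP'' has the vector equality
\[
\sum_{u\in U}\bar u \;=\; k\,\bar v_0,
\]
which is the natural relaxation of the integral condition $|S|=k$ (in the intended solution $\bar u=\bar v_0$ for $u\in S$ and $\bar u=0$ otherwise). Your all-equal assignment gives $\sum_{u\in U} v_u = n\,v$, and $n\,v = k\,v_0$ would force the $e_1$-component $n\sqrt{(k/n)(1-k/n)}$ to vanish, i.e.\ $k=n$. Equivalently, the scalar consequence $\sum_{u,u'\in U}\langle \bar u,\bar u'\rangle = k^2$ fails: your solution gives $n^2\cdot(k/n)=nk\gg k^2$. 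So the robustness claim at the end of your write-up is wrong --- this constraint is not one of the triangle/nonnegativity constraints you enumerated, and it is exactly what rules out the trivial symmetric solution.

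This is precisely why the paper's construction is so much more involved. They work with a random bipartite graph with $|U|=n$, $|V|=s\approx\sqrt n$, and carefully design a Gram matrix whose $U\times U$ block encodes shared-neighbor structure (so that $\sum_u \bar u$ really does collapse to $k\bar v_0$) while keeping $\sum_v\|\bar v\|^2$ polylogarithmic; proving this matrix is PSD requires an explicit decomposition into rank-one pieces plus residual blocks. Your integer lower bound (Chernoff plus union bound over $\binom{n}{k}$ sets) is fine and essentially matches the paper's Lemma~\ref{lem:basic-properties}, but the SDP upper bound needs to be redone against the stronger relaxation.
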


\begin{theorem} \label{thm:LP-main}
When $r = O(\eps \log n / \log \log n)$, the integrality gap of the $r$-round Sherali--Adams relaxation of SSBVE is $n^{1/4 - O(\eps)}$.
\end{theorem}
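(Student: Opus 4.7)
The plan is to construct a random M$k$U instance (using the hypergraph distribution underlying Conjecture~\ref{con:HDvR}) on which the integral optimum is $\Omega(n^{1/4 - O(\eps)})$ times the value of the $r$-round Sherali--Adams relaxation. Concretely, I would draw an $r'$-uniform hypergraph $H \sim \mathcal G_{N,p,r'}$ for a constant $r'$ (depending on $\eps$) with $p = N^{\alpha - (r'-1)}$ and $\alpha \approx 1/3$, viewing the $M = p\binom{N}{r'} \approx N^{\alpha+1}$ random hyperedges as the sets of an M$k$U instance. In SSBVE terms this gives $n = M \approx N^{4/3}$, and the target gap $N^{\alpha} \approx n^{\alpha/(\alpha+1)} = n^{1/4-O(\eps)}$ emerges from the ratio between the random and ``planted'' density scales. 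Pick $k$ so that $r'k \ll N$ (say $k = N^{\gamma}$ for small $\gamma$) so that $k$-subsets are typically in the ``no overlap'' regime.

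For the lower bound on the integral optimum, a first-moment calculation suffices: the expected number of $k$-subsets of hyperedges whose union has size at most $t$ is at most $\binom{N}{t}\binom{M}{k}(t/N)^{r'k}$, and for $t = o(r'k)$ this is $o(1)$ in the chosen regime. Thus every $k$-subset of hyperedges in $H$ covers $\Omega(r'k)$ vertices, and the integral M$k$U optimum is $\Omega(r'k)$ w.h.p.

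For the upper bound on the Sherali--Adams value, I would build a level-$r$ pseudo-distribution with objective $O(Nk/M) = O(k/N^{\alpha})$, matching the basic LP. For each subset $T$ of at most $r = O(\eps \log n / \log\log n)$ variables I define a local distribution $\mu_T$ on $\{0,1\}^T$ obtained by a canonical coupling along the incidence sub-hypergraph of $T$. The coupling is designed so that marginally $\bar x_u = k/M$ for every $u$ and $\bar y_v = k/M$ for every $v$, with the $x_u$'s incident to a common $v$ positively correlated enough to ensure $\bar y_v = k/M$ rather than $\Omega(d_v k/M)$. The key enabling structural fact is that for $r = O(\log N / \log\log N)$ and the chosen $p$, any $r$-subset of hyperedges spans a tree-like incidence sub-hypergraph w.h.p. -- a standard counting bound on short cycles in random hypergraphs shows the probability of a dense (non-tree-like) sub-hypergraph of $\leq r$ edges is negligible. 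On tree-like substructures, the coupling can be built recursively along the tree, guaranteeing consistency for $T_1 \subset T_2$ and all basic LP inequalities ($y_v \geq x_u$, $\bar x_u \leq 1$, $\bar y_v \leq 1$, $\sum_u \bar x_u = k$) pointwise.

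The main obstacle is exactly this pseudo-distribution construction: there is a genuine tension because SA consistency tends to force independence among the $x_u$'s (which would inflate $\bar y_v$ to $\Omega(d_v k/M)$ and collapse the gap), whereas hitting the target objective requires positive correlation among $x_u$'s sharing a neighbor. Resolving this requires exploiting the absence of short cycles: in a tree-like local neighborhood, one can correlate the $x_u$'s ``along the tree'' without creating contradictions when two local views are glued together. Carrying this out rigorously requires (a) a uniform high-probability bound on tree-like local structure over all $\binom{M}{\leq r}$ subsets simultaneously, and (b) verifying that the canonical tree coupling restricts correctly when passing from a larger $T_2$ to a smaller $T_1$. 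Once this is in place, combining the two bounds yields the claimed integrality gap $\Omega(r'k)/O(k/N^{\alpha}) = \Omega(N^{\alpha}) = n^{1/4 - O(\eps)}$.
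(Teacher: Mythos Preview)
Your proposal has a genuine gap at its core: the target marginals $\bar x_u = \bar y_v = k/M$ cannot be achieved by any feasible SA solution on your instance once $r$ exceeds a small constant. Here is why. For an edge $(u,v)$, the lifted constraint $x_{S\cup\{v\},T}\ge x_{S\cup\{u\},T}$ applied with $T=\{v\}$ gives $x_{\{u\},\{v\}}\le 0$, hence $x_{\{u,v\}}=x_{\{u\}}$. If moreover $x_{\{v\}}=x_{\{u\}}=x_{\{u'\}}$ for two hyperedges $u,u'$ through $v$, then nonnegativity of $x_{\{v\},\{u,u'\}}$ forces $x_{\{u,u',v\}}=x_{\{v\}}$ and hence $x_{\{u,u'\}}=x_{\{u\}}$: perfect pairwise correlation. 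This propagates along paths in the hyperedge--intersection graph using local views of size $O(\text{path length})$. But in your random $r'$-uniform hypergraph with $M\approx N^{4/3}$ edges, that intersection graph has degree $\Theta(N^{1/3})$ and diameter $O(1)$, so already for $r$ a moderate constant every pair $u_1,u_2\in U$ is forced to satisfy $x_{\{u_1,u_2\}}=x_{\{u_1\}}$. Now take $S=\varnothing$, $T=\{u_0\}$ in the constraint $\sum_u x_{\{u\},T}\ge k\,x_{\varnothing,T}$: each summand is $x_{\{u\}}-x_{\{u,u_0\}}=0$, while the right side is $k(1-k/M)>0$. So the system is infeasible. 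Your tree-like premise (``any $r$ hyperedges span a tree-like sub-hypergraph'') is true but beside the point: SA subsets mix $U$- and $V$-variables, and a single $v$ together with a handful of its $\Theta(N^{1/3})$ incident hyperedges already creates the rigidity above.

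The paper's construction avoids exactly this trap, and its choices are not incidental. It works on a different instance (a random bipartite graph with $|V|=\sqrt{n}$ and left-degree $\Theta(\log n)$), and crucially it does \emph{not} set $x_{\{v\}}=x_{\{u\}}$: it takes $x_{\{v\}}/x_{\{u\}}=\alpha/\beta=(2(r+1))^{\,r}=n^{O(\eps)}$, which is precisely the $n^{-O(\eps)}$ loss in the final gap. Rather than a coupling argument, it defines $x_S$ via a combinatorial cost function (the size of a minimum ``tree cover'' of $S$), using the fact that in this instance any two $U$-vertices are joined by a path of length $4$; this is what makes the lifted cardinality constraint go through while keeping $x_{S\cup\{u\}}=x_{S\cup\{u\}\cup B}$ for all $B\subset N(u)$. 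The tension you correctly identify --- hard covering constraints pushing toward correlation, while the cardinality constraint resists it --- is resolved not by tree-likeness of small edge sets but by this slack between $x_{\{v\}}$ and $x_{\{u\}}$ and the cover-based definition of $x_S$. If you want to salvage your route, you would at minimum need $\bar y_v$ strictly larger than $\bar x_u$ (by an $n^{O(\eps)}$ factor), and a concrete assignment of all $x_S$ that survives the propagation argument above; the vague ``canonical tree coupling'' does not supply this.
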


We show these integrality gaps in Section~\ref{sec:gaps}. In our SA gap construction, we use the general framework from~\cite{BCCFV10},
where they present a SA integrality gap for Densest $k$-Subgraph. However, we have to make significant changes
to their gap construction, since there is an important difference between D$k$S and SSBVE
  --- the former problem does not have hard constraints, while
the latter one does. Specifically, in SSBVE, if a vertex $u\in U$ belongs to the solution $S$, then
every neighbor $v$ of $u$ is in the neighborhood of $S$. This means, in particular, that in the SA solution, variable $x_{\{u\}}$
(the indicator variable for the event that $u\in S$) must be exactly equal to $x_{\{u\} \cup B}$
(the indicator variable for the event that $u\in S$ and each vertex in $B$ is in the neighborhood of $S$)
for every subset $B$ of neighbors of $u$. 
More generally, if $A \subset U \cup V$ and $B\subset N(A\cap U)$, then $x_{A}$ must be equal to $x_{A\cup B}$.
However, in the SA integrality gap construction for D$k$S, $x_{A \cup B}$ is \textit{exponentially} smaller than $x_{A}$:
$x_{A \cup B} \ll e^{-\Omega(|B\setminus A| \log\log n)} x_{A}$; this inequality is crucial because it guarantees
 that the SA solution for D$k$S is feasible.
In our construction, we have to very carefully define variables $x_A$ in order to ensure that,
on one hand, $x_{A} = x_{A\cup B}$ and, on the other hand, the solution is feasible.

While not the main focus of this paper, we also give an improved approximation for the \emph{Small Set Vertex Expansion} problem (SSVE) and explore its relationship to SSBVE.  In SSVE we are given a graph $G = (V, E)$ and an integer $k$, and the goal is to find a set $S \subseteq V$ with $|S| \leq k$ such that $|N(S) \setminus S|/|S|$ is minimized.  Louis and Makarychev~\cite{LM14} gave a polylogarithmic bicriteria approximation for this problem for $k \geq n/\mathrm{polylog}(n)$, but to the best of our knowledge there are no current bounds for general $k$.  We give the first nontrivial upper bound, which is detailed in Section~\ref{sec:SSVE}:

\begin{theorem} \label{thm:SSVE-main}
There is an $(1+\eps, \tilde O(\sqrt{n}))$-approximation for SSVE (the algorithm chooses a set of size at most $(1+\eps) k$ but is compared to the optimum of size at most $k$).
\end{theorem}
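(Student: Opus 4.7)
The plan is to reduce SSVE to SSBVE through a natural bipartite encoding and then apply the $\tilde O(\sqrt{n})$-approximation for SSBVE from~\cite{CDKKR16} (or equivalently Theorem~\ref{thm:main}). Given an SSVE instance $(G, k)$ with $G=(V,E)$, construct the bipartite graph $H=(U, W, E_H)$ where $U$ and $W$ are both copies of $V$ and $(u, w) \in E_H$ iff $u = w$ or $\{u, w\} \in E$. Identifying $U$ with $V$, the bipartite neighborhood satisfies $N_H(S) = S \cup N_G(S)$ for every $S \subseteq U$, so $|N_H(S)| = |S| + |N_G(S) \setminus S|$ and the SSBVE objective $|N_H(S)|/|S|$ is exactly $1 + |N_G(S) \setminus S|/|S|$, i.e., the SSVE objective plus one. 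In particular, $\sopt$ is a feasible SSBVE solution on $H$ with objective $1 + \epsilon^*$, where $\epsilon^* = \mathrm{OPT}_{\mathrm{SSVE}}$.

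Running the $\tilde O(\sqrt{n})$-approximation for SSBVE on $H$ with parameter $k$ returns a set $S$ of size $k$ with $|N_H(S)| \le \tilde O(\sqrt{n})(1+\epsilon^*)k$. In the regime $\epsilon^* \ge 1$ this immediately yields an $\tilde O(\sqrt{n})$-approximation with no bicriteria slack: setting $T := S$ gives $|N_G(T) \setminus T| \le |N_H(S)| - k \le \tilde O(\sqrt{n})(1+\epsilon^*)k - k \le 2\tilde O(\sqrt{n})\epsilon^* k$. The complementary regime $\epsilon^* < 1$, however, is harder: here the additive $k$ term in $|N_H(S)|$ dominates the useful contribution $\epsilon^* k$, and so the direct reduction alone fails to give an $\tilde O(\sqrt{n})$-approximation.

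To handle the small-$\epsilon^*$ regime we exploit the bicriteria relaxation, allowing output sets of size up to $(1+\eps)k$. The plan is to guess $\epsilon^*$ up to a factor of $2$ by enumerating $O(\log n)$ powers of two and, for each guess $\hat\epsilon \le \eps$, run SSBVE on $H$ with parameter $k' = \lceil(1+\hat\epsilon)k\rceil$. When $\hat\epsilon$ correctly estimates $\epsilon^*$, the set $\sopt \cup N_G(\sopt)$ is a feasible size-$k'$ candidate, so the SSBVE approximation returns a set $T \subseteq U$ whose bipartite neighborhood $|N_H(T)|$ is within a $\tilde O(\sqrt{n})$ factor of $|N_H(\sopt \cup N_G(\sopt))|$; from this output we then extract a subset of size at most $(1+\eps)k$ whose external boundary in $G$ is controlled by $\tilde O(\sqrt{n})\epsilon^* k$. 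We take the best candidate over all guesses.

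The main technical obstacle is precisely the post-processing step in the small-$\epsilon^*$ regime: the $H$-neighborhood of the returned set may be much larger than $(1+\eps)k$, so we must carefully shrink it while preserving the small external-boundary property. The resolution will rely on a structural argument that exploits the closed-neighborhood encoding of $H$ --- vertices of the returned set whose neighbors are already contained in the remaining set can be pruned without any growth in the external boundary --- together with a standard averaging/greedy argument to bound the trade-off between size reduction and boundary growth.
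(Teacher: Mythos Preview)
Your approach diverges from the paper's, and it has a genuine gap in the regime $\epsilon^* < 1$ that you yourself flag as the main obstacle but do not actually resolve.

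The concrete failure is that your proposed candidate $\sopt \cup N_G(\sopt)$ does \emph{not} have small $H$-neighborhood: by your own formula, $N_H(\sopt \cup N_G(\sopt)) = (\sopt \cup N_G(\sopt)) \cup N_G(\sopt \cup N_G(\sopt))$, which includes all vertices at distance~$2$ from $\sopt$ in $G$ and can be as large as $n$ even when $\epsilon^*$ is tiny. So running SSBVE with parameter $k'$ and comparing to this candidate gives no useful bound. More fundamentally, the reduction $|N_H(S)|/|S| = 1 + |N_G(S)\setminus S|/|S|$ collapses the entire range $\epsilon^* \in [0,1)$ to SSBVE objective values in $[1,2)$; a multiplicative approximation on the latter cannot in general recover a multiplicative approximation on the former (take $\epsilon^* = 0$, where the optimal SSVE set is a union of connected components). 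Your gestured post-processing --- pruning vertices whose neighbors already lie inside the current set, plus an unspecified averaging argument --- does not manufacture the missing factor of $\epsilon^*$.

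The paper's proof does not go through SSBVE at all. It splits on $k$ rather than on $\epsilon^*$: for $k \geq \sqrt{n}$ it invokes the Louis--Makarychev bicriteria algorithm for SSVE directly, whose guarantee is $\tilde O(n/k) \leq \tilde O(\sqrt{n})$ in this regime (this is where the $(1+\eps)$ size slack originates). For $k \leq \sqrt{n}$ it reduces to Small Set \emph{Edge} Expansion. The key inequality there is $|E(S^*, \bar S^*)| \leq |N(S^*)\setminus S^*| \cdot |S^*|$, so an $O(\log n)$-approximation for SSE returns a set $S$ with edge expansion --- hence vertex expansion --- at most $O(\log n) \cdot |N(S^*)\setminus S^*| \leq O(k \log n) \cdot \mathrm{OPT}_{\mathrm{SSVE}}$, which is $\tilde O(\sqrt{n})\cdot \mathrm{OPT}_{\mathrm{SSVE}}$ when $k \leq \sqrt{n}$. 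This argument is uniform over all values of $\epsilon^*$, including $\epsilon^* \ll 1$.
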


Finally, we note that as written, SSVE is an ``at most" problem, where we are allowed any set of size \emph{at most} $k$ but the set size appears in the denominator.  On the other hand, our definition of SSBVE requires picking \emph{exactly} $k$ nodes.  We could define an equivalent exact problem for SSVE, where we require $|S| = k$, and an equivalent at most problem for SSBVE, where we allow sets of size at most $k$ but instead of minimizing $|N(S)|$  minimize $|N(S)| / |S|$.  It is straightforward to show that up to a logarithmic factor the at most and the exact versions of the problems are the same, and the following lemma appears in~\cite{CDKKR16} for SSBVE (the equivalent for SSVE is just as easy).
\begin{lemma} \label{lem:at-most-exact}
An $f$-approximation algorithm for the at-most version of SSBVE (SSVE) implies an $\tilde O(f)$-approximation for the exact version of SSBVE (SSVE), and vice versa.
\end{lemma}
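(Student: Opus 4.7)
The plan is to prove both directions of the equivalence for SSBVE; the SSVE case is analogous and the adjustments are sketched at the end.

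For the easier \emph{exact-to-at-most} direction, suppose we have an $f$-approximation $A$ for exact SSBVE and want to solve an at-most instance with parameter $k$. I would simply run $A$ for each $k' \in \{1, 2, \ldots, k\}$, obtaining sets $T_{k'}$ with $|T_{k'}| = k'$, and output the one of smallest expansion $|N(T_{k'})|/k'$. If $S^*$ is an optimal at-most solution of size $s^*$, then invoking $A$ with parameter $s^*$ returns a set satisfying $|N(T_{s^*})| \le f \cdot |N(S^*)|$, since $S^*$ itself certifies that the exact optimum with parameter $s^*$ is at most $|N(S^*)|$. This gives expansion at most $f$ times the at-most optimum, with no logarithmic overhead.

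For the harder direction \emph{at-most-to-exact}, my approach would be to apply the $f$-approximation $B$ for at-most SSBVE iteratively. Starting from $T_0 = \emptyset$, in iteration $i$ set the remaining budget $r_i := k - |T_{i-1}|$, run $B$ on $G[(U \setminus T_{i-1}) \cup V]$ with budget $r_i$ to obtain $S_i$ with $|S_i| \le r_i$, and let $T_i := T_{i-1} \cup S_i$, stopping when $|T_T| = k$. To analyze, let $S^*$ be an optimal exact solution with $|N(S^*)| = \opt$. Since $|S^* \setminus T_{i-1}| \ge r_i$, any $r_i$-subset of $S^* \setminus T_{i-1}$ witnesses that the at-most optimum on the residual graph is at most $\opt/r_i$, and hence $|N(S_i)| \le f \cdot \alpha_i \cdot \opt$ where $\alpha_i := |S_i|/r_i \in (0,1]$.

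The main step will be a telescoping bound on $\sum_i \alpha_i$. Using $r_{i+1} = r_i(1 - \alpha_i)$ together with $\alpha_i \le -\ln(1 - \alpha_i)$ on all but the final iteration (where $\alpha_T = 1$), one obtains $\sum_i \alpha_i \le \ln(r_0/r_{T-1}) + 1 \le \ln k + 1$, so $|N(T_T)| \le \sum_i |N(S_i)| \le O(\log n)\cdot f\cdot \opt$, yielding the claimed $\tilde O(f)$-approximation. The subtlety I expect to be most delicate is that $B$ may return a set much smaller than $r_i$; the telescoping argument handles this by charging each iteration's cost against the logarithm of the decrease in remaining budget. The SSVE variant uses the same template, but one must track the contribution of neighbors absorbed into later $T_j$'s so that the $|N(\cdot)\setminus\cdot|$ objective composes additively across iterations; the core inequalities should be essentially unchanged.
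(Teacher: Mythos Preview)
Your proposal is correct and follows essentially the same approach as the paper: the paper does not give a full proof (it cites~\cite{CDKKR16}), but the sketch in the proof of Claim~\ref{clm:good-expansion} confirms exactly your iterative scheme --- repeatedly run the at-most algorithm on the residual graph, remove the returned vertices, and continue until $k$ vertices have been collected, incurring only an $O(\log n)$ loss. Your telescoping bound on $\sum_i \alpha_i$ via $\alpha_i \le \ln(r_i/r_{i+1})$ is the standard way to carry out that accounting; aside from a harmless index slip (your iteration starts at $r_1=k$, not $r_0$), the argument is complete.
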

Hence we will generally feel free to consider one version or the other depending on which is easier to analyze.

\subsection{Related Work}
As mentioned, SSBVE is a generalization of the Smallest $m$-Edge Subgraph problem~\cite{CDK12}, which is the minimization version of Densest $k$-Subgraph~\cite{FKP01,BCCFV10}.  The best-known approximation for S$m$ES is $O(n^{3-2\sqrt{2} + \eps})$~\cite{CDK12}, while the best-known approximation for D$k$S is $O(n^{1/4 + \eps})$~\cite{BCCFV10}.

The immediate predecessor to this work is~\cite{CDKKR16}, which provided an $O(\sqrt{n})$-approximation for SSBVE ($O(\sqrt{m})$ for M$k$U) using relatively straightforward techniques.  SSVE was also studied by Louis and Makarychev~\cite{LM14}, who provided a polylogarithmic approximation when $k$ is very close to $n$ (namely, $k \geq n/\mathrm{polylog}(n)$).  To the best of our knowledge, no approximation was known for SSVE for general $k$.

While defined slightly differently, the maximization version of M$k$U, the Densest $k$-Subhypergraph problem (D$k$SH), was defined earlier by Applebaum~\cite{Applebaum13} in the context of cryptography: he showed that if certain one way functions exist (or that certain pseudorandom generators exist) then D$k$SH is hard to approximate within $n^{\epsilon}$ for some constant $\epsilon > 0$.  Based on this result, D$k$SH and M$p$U were used to prove hardness for other problems, such as the $k$-route cut problem~\cite{CMVZ16}.  He also explicitly considered something similar to \textsc{Hypergraph Dense vs Random}, but instead of distinguishing between a random instance and an adversarial instance with essentially the same log-density, he considered the problem of distinguishing a random instance from a random instance which has a planted dense solution, and then where every hyperedge not in the planted solution is randomly removed with some extra probability.  He showed that even this problem is hard if certain pseudorandom generators exist.

% !TEX root = SSBVE.tex

\section{Approximation algorithm for random planted instances}\label{sec:algo}

Before describing our algorithm for SSBVE (in either the random or worst case setting), let us mention a key tool, which was also used in~\cite{CDKKR16}, and can be seen as a straightforward generalization of an LP (and rounding) in~\cite{Charikar00}:
\begin{lemma}\label{lem:charikar} There is a polynomial time algorithm which exactly solves the Least Expanding Set problem, in which we are given a bipartite graph $(U,V,E)$, and wish to find a set $S\subseteq U$ with minimum expansion $|N(S)|/|S|$.
\end{lemma}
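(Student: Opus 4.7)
The plan is to formulate the Least Expanding Set problem as a small linear program and then round it optimally via thresholding, following Charikar's LP for densest subgraph~\cite{Charikar00}. Introduce variables $x_u$ for each $u \in U$ (intended to represent $\mathbf{1}[u\in S]/|S|$) and $y_v$ for each $v \in V$ (intended to represent $\mathbf{1}[v\in N(S)]/|S|$), and consider
\begin{align*}
\min \sum_{v\in V} y_v \quad \text{subject to} \quad \sum_{u\in U} x_u &= 1, \\
y_v &\geq x_u \quad \forall (u,v)\in E, \\
x_u, y_v &\geq 0.
\end{align*}
The normalization $\sum_u x_u = 1$ encodes the denominator $|S|$: for any nonempty $S \subseteq U$, setting $x_u = \mathbf{1}[u\in S]/|S|$ and $y_v = \mathbf{1}[v\in N(S)]/|S|$ is feasible with objective value $|N(S)|/|S|$. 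Hence the LP optimum is a lower bound on $\min_{\emptyset \neq S \subseteq U} |N(S)|/|S|$.

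For the rounding, given an optimal fractional $(x^*, y^*)$, for each $t > 0$ let $S_t = \{u \in U : x^*_u \geq t\}$. The edge constraints imply that if $u \in S_t$ and $(u,v) \in E$, then $y^*_v \geq x^*_u \geq t$, so $N(S_t) \subseteq \{v : y^*_v \geq t\}$. By the layer-cake identity, $\int_0^{\infty} |S_t|\, dt = \sum_u x^*_u = 1$ and $\int_0^{\infty} |N(S_t)|\, dt \leq \int_0^\infty |\{v : y^*_v \geq t\}|\, dt = \sum_v y^*_v = \mathrm{OPT}_{\mathrm{LP}}$. If every nonempty $S_t$ satisfied $|N(S_t)|/|S_t| > \mathrm{OPT}_{\mathrm{LP}}$, integrating $|N(S_t)| > \mathrm{OPT}_{\mathrm{LP}} \cdot |S_t|$ would contradict these bounds; hence some threshold $t$ yields $|N(S_t)|/|S_t| \leq \mathrm{OPT}_{\mathrm{LP}}$. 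Combined with the LP lower-bound property, this $S_t$ attains the minimum ratio \emph{exactly}.

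Finally, $S_t$ changes only at $t \in \{x^*_u : u \in U\}$, so trying these $O(|U|)$ thresholds and returning the best $S_t$ runs in polynomial time; the LP itself has $|U|+|V|$ variables and $|E|+1$ nontrivial constraints and is solvable in polynomial time. There is no serious obstacle: the argument is a direct adaptation of Charikar's threshold rounding for densest subgraph, specialized to the bipartite vertex-covering setting, and the only point requiring verification is the averaging step, which is standard.
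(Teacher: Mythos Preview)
Your proof is correct and is precisely the approach the paper has in mind: the paper does not spell out a proof of this lemma, merely noting that it ``can be seen as a straightforward generalization of an LP (and rounding) in~\cite{Charikar00}.'' Your LP and threshold-rounding argument is exactly that generalization, so there is nothing to compare.
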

Note that in the Least Expanding Set problem there is no constraint on the cardinality of the set $S$, and so the above lemma has no immediate implication for SSBVE.

Recall that Lemma~\ref{lem:at-most-exact} shows the equivalence of the ``at most" and ``exact" versions of SSBVE interchangeably. Thus, throughout this section and the next, we will use these two version interchangeably.

To understand our approximation algorithm, let us first consider the following random model: For constants $\alpha,\beta,\gamma\in(0,1)$ where $\gamma<\beta$, let $G=(U,V,E)$ be a random bipartite graph where $|U|=n$, $|V|=n^{\beta}$, and let $k=n^{1-\alpha}$. The edges $E$ are defined by first choosing $r=\log n$ neighbors in $V$ independently at random for every $u\in U$, and then choosing subsets $S\subset U$ and $T\subset V$ of size $|S|=k=n^{1-\alpha}$ and $|T|=n^{\gamma}$ independently at random, and for every node $u\in S$, remove its $r$ neighbors and now sample them uniformly at random from $T$.

Suppose first that $\gamma\geq(1-\alpha)(\beta-\eps)$. That is, $|T|\leq n^{(1-\alpha)(\beta-\eps)}=k^{\beta-\eps}$. This means that the log-density gap between the random graph and the planted subgraph is small.  Note that any arbitrary $k$-subset of $U$ will expand to at most $\min\{k\log n,|V|\}=\tilde O(n^{\min\{1-\alpha,\beta\}})$ vertices in $V$, compared to the optimum (planted) solution $S$ which expands to $n^\gamma$ vertices. In this case, up to a logarithmic factor, choosing any such set gives us  an approximation ratio of 
\begin{align*}n^{\min\{1-\alpha,\beta\}-\gamma}&\leq n^{\min\{1-\alpha,\beta\}-(1-\alpha)(\beta-\eps)}\\&=n^{\min\{(1-\alpha)(1-\beta),\alpha\beta\}+\eps(1-\alpha)}.\end{align*}
It is easily seen that this expression is maximized for $\beta=1-\alpha$, giving an approximation ratio of $n^{(1-\alpha)(\alpha+\eps)}=n^{\alpha(1-\alpha)}k^{\eps}$, which is clearly at most $n^{1/4+\eps}$. Since this is the approximation ratio we are aiming for, we may focus on the case when $\gamma\leq(1-\alpha)(\beta-\eps)$. For simplicity, let us look at the tight case, when $\beta=1-\alpha$, and $\gamma=(\beta-\eps)(1-\alpha)$. That is, $|S|=|V|=k=n^{1-\alpha}$, and $|T|=k^{1-\alpha-\eps}$. The expansion of the hidden subset is $|T|/|S|=k^{-\alpha-\eps}$. 

Consider as a simple example the case of $\alpha=\frac12$. For simplicity, let us think of the left-degree $r$ as some large constant rather than $\log n$. In this case $k=|V|=\sqrt{n}$, and every vertex in $V$ has $\Theta(\sqrt{n})$ neighbors in $U$. Choosing a vertex $v\in T$ (say, by guessing all possible vertices in $V$) gives us the following: The neighborhood $N(v)$ has size $\Theta(\sqrt{n})$, however because of the planted solution, $\Omega(k^{1/2+\eps})$ of the vertices in $N(v)$ also belong to $S$. We know that $N(v)\cap S$ expands to $T$ which has size $k^{1/2-\eps}$. That is, it has expansion $O(k^{-2\eps})$. 
 Thus by Lemma~\ref{lem:charikar} applied to the subgraph induced on $(N(v),V)$, we can find a set $S\subseteq N(v)$ with at most this expansion, which gives an approximation ratio of $O(k^{-2\eps}/k^{-1/2-\eps})=O(k^{1/2-\eps})\ll n^{1/4}$, and moreover $|S|\leq|N(v)|=O(k)$, so we are done.

Now let us consider the general case. Suppose $\alpha=p/q$ for some relatively prime $q>p>0$. Note that the degree of every vertex in $V$ is tightly concentrated around $r|U|/|V|=\Theta(n^{p/q})$ and the $S$-degree of every vertex in $T$ (the cardinality of its neighborhood intersected with $S$) is concentrated around $r|S|/|T|=\Theta(k^{p/q+\eps})$.

Following the approach of~\cite{BCCFV10} for D$k$S, we can think of an algorithm which inductively constructs all possible copies of a caterpillar with fixed leaves which are chosen (guessed) along the way. In our case, the caterpillar is similar, but not identical to the corresponding caterpillar used in~\cite{BCCFV10}. Every step in the construction of the caterpillar corresponds to an interval of the form $((j-1)\alpha,j\alpha)$. The overall construction is as follows:

\begin{itemize}[itemsep=-1pt, topsep=4pt, partopsep=0pt]
  \item First step (step 1): This step corresponds to the interval $(0,\alpha)$. In this step we guess a vertex $v$, and add an edge, which forms the first edge of the ``backbone" (the main path in the caterpillar).
  \item Final step (step q): This step corresponds to the interval $((q-1)\alpha,q\alpha)=(p-\alpha,p)$. In this step we add a final edge to the backbone.
  \item Intermediate step: For every $j=2,\ldots,q-1$, step $j$ corresponds to the interval $((j-1)\alpha,j\alpha)$. If this interval contains an integer, choose a new vertex and attach an edge (a ``hair") from it to the (currently) last vertex in the backbone. Otherwise, extend the backbone by \emph{two} edges.\footnote{This is the only difference from the caterpillar of~\cite{BCCFV10}, where they add one edge to the backbone in this case.}
\end{itemize}

Note that if we start the caterpillar with a vertex in $V$, then the next backbone vertex will be in $U$, and since we then add two backbone edges each time, the current backbone vertex will always be in $U$ (until the last step), and all leaves guessed along the way will be in $V$. 

How do we turn this into an algorithm for random planted instances as above? Start by guessing a vertex $v\in T$ (by exhaustive enumeration over all vertices in $V$), and start with set $W=\{v\}$. Whenever the caterpillar construction adds an edge to the backbone, update the set $W$ to $N(W)$. Whenever the construction caterpillar adds a hair, guess a new vertex $v'\in T$ (again we can assume we guess a vertex in $T$ by exhaustive enumeration) and update the set $W$ to $W\cap N(v')$. Do this for all edges except for the last edge in the caterpillar. Note that whenever we have a ``backbone step", the caterpillar construction adds two edges to the backbone, so if we had a set $W\subseteq U$ (as we do at the beginning of every intermediate step), 
 we end up with the set $N(N(W))\subseteq U$.

An easy inductive argument using concentration in random graphs shows that after every step $j\in[q-1]$, w.h.p.\ $W$ is a subset of $U$ of size $\Theta(n^{j\alpha-\lfloor j\alpha\rfloor})$, and as long as the vertices we guessed were indeed in $T$, for sufficiently small $\eps$ (at most $1/q^2$), the set $W\cap S$ has size $k^{j\alpha-\lfloor j\alpha\rfloor +j\eps}$.

In particular, right after step $q-1$, we have $|W|=\Theta(n^{1-\alpha})=\Theta(k)$ and $|W\cap S|=\Theta(k^{\alpha+(q-1)\eps})$. At this time the set $W\cap S$ expands to $T$ which has size $k^{\alpha-\eps}$, and so $W\cap T$ has expansion $O(k^{-q\eps})$. Thus, by Lemma~\ref{lem:charikar} applied to $W$, we can find a set $S\subseteq W$ (of size at most $|W|=O(k)$) with at most this expansion, giving an approximation ratio of $O(k^{-q\eps}/k^{\alpha-1-\eps})=O(k^{1-\alpha-(q-1)\eps})\ll k^{1-\alpha}.$ While we cannot guarantee that $k$ will be exactly $k=n^{p/q}$ for some reasonably small constants $q>p>0$, we can guarantee that we will not lose more than a $k^{O(1/q)}$-factor by running this algorithm on nearby values of $k$, and so at least in this random model, when $k=n^{\alpha}$, we can always achieve an approximation guarantee of $k^{(1-\alpha)+\eps}=n^{\alpha(1-\alpha)+\eps}\leq n^{1/4+\eps}$. Our main technical contribution is translating this overall framework into an algorithm that achieves the same approximation guarantee for worst-case instances, which we do in the next section.

\section{Approximation algorithm for worst case instances}
In this section, we show the desired $n^{1/4+\eps}$ approximation for worst case instances. As planned, we follow the above caterpillar structure for random planted instances, so that (after some preprocessing) at every step either set sizes behave like in the random planted setting, or we can find set with small expansion. We start with some preprocessing which will be useful in the analysis.

\subsection{Preprocessing}

Using standard bucketing and subsampling techniques, and the trivial algorithm (taking an arbitrary $k$-subset when there is no log-density gap, or only a small gap), we can restrict our attention to a fairly uniform setting:

\begin{lemma}\label{lem:preprocessing} Suppose for every sufficently small $\eps>0$ and for all integers $0<p<q=O(1/\eps)$ there is some constant $c=c(\eps,p,q)>0$ such that we can obtain an $O(k^{p/q+(1-c)\eps})$-approximation for SSBVE on instances of the following form:
\begin{itemize}[itemsep=-1pt, topsep=4pt, partopsep=0pt]
  \item Every vertex in $U$ has the same degree $r$.
  \item The size $|N(\sopt)|$ of the neighbor set of the least expanding $k$-subset of $U$ is known, and thus so is the average degree from this set back to the least expanding $k$-set, $d=kr/|N(\sopt)|$.
  \item We have $k=n^{1-p/q}$, 
  \item The optimum average back-degree satisfies $d=k^{p/q+\eps}$. 
\end{itemize}
Then there is an $O(n^{1/4+\eps})$-approximation for SSBVE for every sufficiently small $\eps>0$.
\end{lemma}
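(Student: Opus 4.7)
My plan is to reduce an arbitrary SSBVE instance to one satisfying the four bulleted properties by a combination of guessing, bucketing, and subsampling (losing only $O(\log n)$ and $n^{O(\eps)}$ factors), and then to take the best of the resulting hypothesized algorithm and a trivial baseline. First, I would partition $U$ into $O(\log n)$ degree buckets; some bucket $U^*$ contains an $\Omega(1/\log n)$ fraction of $\sopt$, and after guessing $U^*$ I can restrict attention to $U^*$ and uniformly subsample each vertex's edges down to a common value $r$, making the graph left-$r$-regular while only perturbing $|N(S)|$ by a constant factor for every $S$. I would then guess $|N(\sopt \cap U^*)|$ as a power of two, fixing $d = k'r/|N(\sopt \cap U^*)|$ up to constants, where $k' = \Theta(k/\log n)$ is the post-bucketing target size.

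Let $\alpha := \log_n k'$ and $\beta := \log_{k'} d$, so that $d = n^{\alpha\beta}$. The trivial algorithm (return any $k'$ vertices of $U^*$) has $|N(S)|/|N(\sopt)| \leq k'r/(k'r/d) = O(d)$; if $\alpha\beta \leq 1/4 + \eps$ then this is already $O(n^{1/4+\eps})$ and we are done. Otherwise $\alpha\beta > 1/4 + \eps$, which by AM-GM forces $\alpha + \beta > 1$, giving a nontrivial log-density gap $\gamma := \beta - (1-\alpha) > 0$. I would then match to the hypothesized form by picking $q = \Theta(1/\eps)$ and the integer $p \in (0,q)$ minimizing $|\alpha - (1-p/q)|$, so $|\alpha - (1-p/q)| \leq 1/q \leq \eps$, and similarly rounding $\gamma$ to $\eps^*$ on an $\eps$-granularity grid so that $\beta \approx p/q + \eps^*$. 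Working in the at-most version of SSBVE (using Lemma~\ref{lem:at-most-exact}), I would then apply the assumed $O((k'')^{p/q+(1-c)\eps^*})$-approximation to the sub-instance of finding a size-$k'' := n^{1-p/q}$ subset of $U^*$. Since $k''/k' \in [n^{-\eps}, n^{\eps}]$, the rounded back-degree $(k'')^{p/q+\eps^*}$ matches the true value up to $n^{O(\eps)}$, and the output is a valid at-most-$k'$ SSBVE solution with approximation ratio $O(n^{1/4+(1-c)\eps^*+O(\eps)})$, using $(1-p/q)(p/q) \leq 1/4$ to bound the main exponent by $1/4$.

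The main obstacle is to ensure that $O(n^{1/4+(1-c)\eps^*+O(\eps)})$ collapses to $O(n^{1/4+\eps})$ in all parameter regimes. This is immediate when $\eps^* = O(\eps)$, but when $\eps^*$ is substantially larger than $\eps$ it requires the constant $c = c(\eps^*, p, q) > 0$ to be bounded away from zero. Since there are only finitely many relevant $(p,q)$ pairs (given $q = \Theta(1/\eps)$) and $\eps^*$ is discretized on a grid, one can pick the target accuracy $\eps$ small enough — as a function of the minimum such $c$ over these triples — to guarantee $(1-c)\eps^* + O(\eps) \leq \eps$ uniformly. This is precisely why the hypothesis is phrased with exponent $(1-c)\eps$ rather than just $\eps$: the constant $c$-saving in the assumed guarantee is exactly the slack needed to absorb the $O(\log n)$ and $n^{O(\eps)}$ overheads introduced by the guessing and rounding, yielding the final $O(n^{1/4+\eps})$ approximation after taking the better of the two algorithms over all polylogarithmically many guesses.
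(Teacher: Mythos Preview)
Your reduction has a genuine gap in the regime where the log-density gap $\eps^{*} = \beta - (1-\alpha)$ is not small. The hypothesis is only stated for sufficiently small $\eps$ together with $q = O(1/\eps)$; you fix $q=\Theta(1/\eps)$ (with $\eps$ the final target) and then try to invoke the hypothesis at $\eps^{*}$, but when $\eps^{*}\gg\eps$ the requirement $q=O(1/\eps^{*})$ is violated, so the assumed algorithm is simply unavailable. Even ignoring that, your absorption step asks for $(1-c)\eps^{*} + O(\eps)\le\eps$, i.e.\ $\eps^{*}\le\eps/(1-c)$; since the hypothesis only promises \emph{some} $c>0$ (possibly tiny), this caps $\eps^{*}$ at roughly $\eps$, whereas $\eps^{*}$ can be $\Theta(1)$ (e.g.\ $k\approx\sqrt{n}$ and $d\approx k$, where the trivial bound is $n^{1/2}$ and the hypothesis yields only $n^{1/2-c/4}$, so one would need $c\approx 1$). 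The proposed fix of choosing $\eps$ small as a function of $\min c$ is circular --- the family of triples $(\eps^{*},p,q)$ depends on $q=\Theta(1/\eps)$ --- and shrinking $\eps$ tightens rather than loosens the bound on $\eps^{*}$.

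The paper handles exactly this obstacle by a different mechanism: it subsamples $U$, retaining each vertex independently with probability $n^{-\gamma}$. This shrinks $n$, $k$, and the back-degree $d$ by the common factor $n^{-\gamma}$, and an intermediate-value argument lets one choose $\gamma$ so that the new parameters satisfy $d_\gamma = k_\gamma^{\alpha_\gamma+\eps}$ for the \emph{target} small $\eps$ (and simultaneously $k_\gamma=n_\gamma^{1-p/q}$ for bounded $p,q$). A Chernoff argument then lifts a low-expansion set in the subsampled instance back to the original. This vertex-subsampling of $U$ --- not merely rounding $k$ --- is the missing idea. A smaller issue: subsampling \emph{edges} to equalize left-degrees does not keep $|N(S)|$ within a constant factor for every $S$, since reinstating the deleted edges can add up to $|S|\cdot r$ new neighbors to a set found in the thinned graph; the paper instead pads degrees upward with $r$ dummy vertices, which changes every $|N(S)|$ by at most an additive $r$.
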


We defer the proof of this lemma to Appendix~\ref{sec:preprocessing-app}. From now on, we will assume the above setting, and denote $\alpha=p/q$ as before. We will also denote by $\sopt$ some (unknown) least expanding $k$-subset, and let $\topt=N(\sopt)$ be its neighbor set (the set that $\sopt$ expands to). Note that the optimum expansion in such an instance is $|\topt|/k=r/d=r/k^{\alpha+\eps}$, and so to get an $O(k^{\alpha+(1-c)\eps})$-approximation, we need to find a set with expansion at most $O(r/k^{c\eps})$.

As we noted, the optimum neighbor set $\topt$ has average back-degree $d$ into $\sopt$. However, this might not be the case for all vertices in $\topt$. A common approach to avoiding non-uniformity for D$k$S and other problems is to prune small degree vertices. For example, a common pruning argument shows that a large fraction of edges is retained even after deleting all vertices with at most $1/4$ the average degree on each side (solely as a thought experiment, since we do not know $\sopt$ and $\topt$). However, the fundamental challenge in SSBVE is that we cannot delete small degree vertices in $V$, since this can severely skew the expansion of any set. A different and somewhat more subtle pruning argument, which explicitly bounds the ``penalty" we pay for small degree vertices in $\topt$, gives the following. 

\begin{claim}\label{clm:bad-neighbors} For every $\delta>0$, there exists a set $S'\subseteq\sopt$ of size at least $k/2$ and a set $T'\subseteq\topt$ with the following properties:
\begin{itemize}[itemsep=-2pt, topsep=4pt, partopsep=0pt]
  \item Every vertex $v\in T'$ has at least $\delta d/2$ neighbors in $S'$.
  \item Every vertex $u\in S'$ has at most $\delta r$ neighbors in $\topt\setminus T'$.
\end{itemize}
\end{claim}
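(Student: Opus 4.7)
The plan is a double-counting argument driven by a single definition: call a vertex $v\in \topt$ \emph{bad} if it has fewer than $\delta d/2$ neighbors in $\sopt$, and let $T_{\text{bad}}\subseteq \topt$ denote the set of bad vertices. The natural candidate is then $T' := \topt\setminus T_{\text{bad}}$, which by definition satisfies the first property. It remains to carve out $S'\subseteq \sopt$ of size at least $k/2$ on which the second property holds. The obvious choice is to take $S'$ to be the set of $u\in\sopt$ with at most $\delta r$ neighbors in $T_{\text{bad}}$, and show that its complement $B := \sopt\setminus S'$ in $\sopt$ is small.

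First, I would upper-bound the number of edges in the bipartite graph between $\sopt$ and $T_{\text{bad}}$. Using $|\topt| = kr/d$ (from the preprocessing assumption $d = kr/|N(\sopt)|$) and the fact that every $v\in T_{\text{bad}}$ contributes fewer than $\delta d/2$ edges to $\sopt$, the total number of such edges is strictly less than $(kr/d)\cdot(\delta d/2) = \delta kr/2$. On the other hand, by the definition of $B$, the same edge count is at least $|B|\cdot \delta r$. Combining these two bounds gives $|B| < k/2$, so $|S'| > k/2$, which yields both desired properties.

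I do not anticipate a genuine obstacle here; the two constraints are naturally "dual" in that the edge count controls them simultaneously, which is why picking $\delta d/2$ (rather than $\delta d$) as the bad-vertex threshold lets us afford $\delta r$ as the bad-neighbor threshold on the $\sopt$ side. The only thing to be careful about is using the correct value $|\topt|=kr/d$ from the preprocessing step rather than an a priori bound, since otherwise the numerator $\delta d/2$ in the bad-vertex threshold would not cancel against $|\topt|$ to produce exactly $\delta kr/2$. Once that cancellation is set up, the argument is a one-line double count.
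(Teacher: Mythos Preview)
Your double-counting argument correctly shows that $|S'|>k/2$, but it does not establish the first property as stated. You define $T'$ as the set of $v\in\topt$ with at least $\delta d/2$ neighbors in $\sopt$, and then carve out $S'\subsetneq\sopt$. The claim, however, requires every $v\in T'$ to have at least $\delta d/2$ neighbors \emph{in $S'$}, not in $\sopt$. Since you have just thrown away up to half of $\sopt$ to form $S'$, a vertex $v\in T'$ could lose many (even all) of its $\sopt$-neighbors and end up with fewer than $\delta d/2$ neighbors in $S'$. So your $T'$ satisfies the right bullet for the wrong set.

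This is exactly why the paper does not use a one-shot definition. Instead it runs an iterative pruning: starting from $S'=\sopt$, $T'=\topt$, it repeatedly deletes from $T'$ any $v$ with at most $\delta d/2$ neighbors in the \emph{current} $S'$, and then deletes from $S'$ any $u$ with at least $\delta r$ neighbors in $\topt\setminus T'$. At termination both bullets hold by construction with respect to the final $S'$ and $T'$. The size bound $|S'|\geq k/2$ then comes from a timed double-count: each removed $v$ has at most $\delta d/2$ neighbors among the $u$'s removed at the same time or later, while each removed $u$ has at least $\delta r$ neighbors among the already-removed $v$'s; comparing these on the edge set $E'=\{(u,v):\text{time}(u)\geq\text{time}(v)\}$ gives $\delta r\,|\sopt\setminus S'|\leq |E'|\leq(\delta d/2)|\topt|=\delta kr/2$. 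Your counting idea is the right engine, but it has to be applied to the iterated process so that the first bullet is about $S'$ rather than $\sopt$.
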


\begin{proof}
  Consider the following procedure:
  \begin{itemize}
     \item Start with $S'=\sopt$ and $T'=\topt$.
     \item As long as $S'\neq\emptyset$ and the conditions are not met:
     \begin{itemize}
       \item Remove every vertex $v\in T'$ with at most $\delta d/2$ neighbors in $S'$ from $T'$.
       \item Remove every vertex $u\in S'$ with at least $\delta r$ neighbors in $\topt\setminus T'$ from $S'$.
     \end{itemize}
   \end{itemize}
  Call the vertices removed at iteration $t$ of the loop \emph{time $t$ vertices}. Note that time $t$ vertices in $\topt$ have at most $\delta d$ neighbors in $\sopt$ which are removed at some time $\geq t$, and time $t$ vertices in $\sopt$ have at least $\delta r$ neighbors in $\topt$ which were removed at time $\leq t$. Thus, if we look at the set of edges $E'=\{(u,v)\in E|u\in\sopt\setminus S',v\in\topt\setminus T',\mathrm{time}(u)\geq\mathrm{time}(v)\}$, we have \begin{align*}\delta r\cdot |\sopt\setminus S'|\leq|E'|&\leq (\delta d/2)|\topt\setminus T'|\\&\leq (\delta d/2)|\topt|=\delta kr/2.\end{align*}
Thus $|\sopt\setminus S'|\leq k/2$, and $|S'|\geq k/2$.
\end{proof}

For some constant $c>0$ to be determined later, let $S'$ and $T'$ be the sets derived from the above claim for $\delta=1/(2k^{c\eps})$. From now on, call a vertex $v\in\topt$ ``good" if $v\in T'$. Otherwise, call it ``bad".  Thus, we can restrict our attention to $S'$, allowing us to make the following simplifying assumption, and lose at most an additional constant factor in the approximation:
\begin{assumption}\label{asm:bad-neighbors} All vertices in $\sopt$ have at most $r/(2k^{c\eps})$ bad neighbors. 
\end{assumption}

Note that all the good vertices have at least $d/(4k^{c\eps})$ neighbors in $\sopt$. In addition, our assumption gives the following useful corollary:
\begin{corollary}\label{bad-expansion} Every set $S\subseteq\sopt$ with expansion at least $r/k^{c\eps}$ into some set $T$ has good expansion at least $r/(2k^{c\eps})$ into $T$.
\end{corollary}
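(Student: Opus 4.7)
The plan is a direct edge-counting argument using Assumption~\ref{asm:bad-neighbors}. Recall that every neighbor of a vertex in $\sopt$ lies in $\topt=N(\sopt)$, so for any $S\subseteq\sopt$ every vertex of $N(S)$ is either \emph{good} (in $T'$) or \emph{bad} (in $\topt\setminus T'$). The key observation is that bad vertices adjacent to $S$ can be charged to $S$ through edges, and Assumption~\ref{asm:bad-neighbors} caps this charge at $r/(2k^{c\eps})$ per vertex of $S$.

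More concretely, first I would bound the number of bad vertices in $N(S)$: by Assumption~\ref{asm:bad-neighbors}, each $u\in S\subseteq\sopt$ has at most $r/(2k^{c\eps})$ bad neighbors, so summing over $u\in S$ gives $|N(S)\cap(\topt\setminus T')|\le |S|\cdot r/(2k^{c\eps})$. In particular, intersecting with an arbitrary set $T$ preserves this bound, so $|N(S)\cap T\cap(\topt\setminus T')|\le |S|\cdot r/(2k^{c\eps})$. Combining this with the hypothesis $|N(S)\cap T|\ge |S|\cdot r/k^{c\eps}$ and the disjoint decomposition $N(S)\cap T = (N(S)\cap T\cap T')\,\cup\,(N(S)\cap T\cap(\topt\setminus T'))$ immediately yields $|N(S)\cap T\cap T'|\ge |S|\cdot r/(2k^{c\eps})$, which is exactly the good-expansion lower bound demanded by the corollary.

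Since the statement reduces to a single subtraction, I do not anticipate any serious technical obstacle; the only conceptual point that requires care is unambiguously interpreting ``good expansion into $T$'' as $|N(S)\cap T\cap T'|/|S|$ (i.e., the number of \emph{good} neighbors of $S$ that lie in $T$, measured per vertex of $S$), so that Assumption~\ref{asm:bad-neighbors} applied vertex-by-vertex combines correctly with the hypothesized lower bound on $|N(S)\cap T|$ to leave a good-expansion term of the stated size. It is also worth noting that the factor of $1/2$ in the conclusion matches the factor of $1/2$ in Assumption~\ref{asm:bad-neighbors}: it is exactly what is needed for the bad-neighbor count to absorb at most half of the total expansion into $T$.
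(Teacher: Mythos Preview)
Your proof is correct and follows essentially the same argument as the paper: bound the bad neighbors of $S$ in $T$ by $|S|\cdot r/(2k^{c\eps})$ using Assumption~\ref{asm:bad-neighbors} vertex-by-vertex, then subtract from the assumed total $|N(S)\cap T|\ge |S|\cdot r/k^{c\eps}$ to conclude the remaining good neighbors number at least $|S|\cdot r/(2k^{c\eps})$. Your interpretation of ``good expansion into $T$'' is the intended one.
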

\begin{proof} Every vertex in $S$ has at most $r/(2k^{c\eps})$ bad neighbors overall (and in particular in $T$), and so $S$ has at most $|S|r/(2k^{c\eps})$ bad neighbors in $T$. The rest must be good. \end{proof}

Finally, define $D:=n/k^{1-c\eps}$, and let $V_D=\{v\in V\mid \deg(v)\geq D\}$. Note that $nr=|E|\geq|V_D|D$, and so $$|V_D|\leq nr/D=rk^{1-c\eps},$$ and so every $k$-set in $U$ has expansion at most $r/k^{c\eps}$ into $V_D$. Thus,
while by Lemma~\ref{lem:at-most-exact} it suffices to find any set of size at most $k$ with expansion $O(k^{c\eps})$, it turns out that a weaker goal suffices:

\begin{claim}\label{clm:good-expansion} To find a $k$-subset of $U$ with expansion $\tilde O(r/k^{c\eps})$ (into $V$), it suffices to have an algorithm which returns a set of size at most $k$ with expansion $O(r/k^{c\eps})$ into $V\setminus V_D$.
\end{claim}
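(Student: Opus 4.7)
The plan is to reduce to an application of Lemma~\ref{lem:at-most-exact} on the subgraph $(U, V \setminus V_D)$, and then add the trivial bound on the $V_D$ contribution.

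First, I would view the hypothesized algorithm as solving the at-most version of SSBVE on the reduced bipartite graph $(U, V \setminus V_D)$: it produces a set $A \subseteq U$ with $|A| \leq k$ and $|N(A) \cap (V \setminus V_D)|/|A| = O(r/k^{c\eps})$. Since this is a genuine SSBVE instance (just on a smaller right-side vertex set), Lemma~\ref{lem:at-most-exact} applies, turning this into an algorithm that produces a set $S \subseteq U$ with $|S| = k$ and $|N(S) \cap (V \setminus V_D)|/k = \tilde{O}(r/k^{c\eps})$, losing only a polylog factor.

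Next, I would bound the contribution from $V_D$ trivially, using the preprocessing bound on $|V_D|$. For any set $S \subseteq U$ of size $k$, we have $|N(S) \cap V_D| \leq |V_D| \leq rk^{1-c\eps}$, and so $|N(S) \cap V_D|/k \leq r/k^{c\eps}$. Combining the two pieces, the total expansion of $S$ into $V$ satisfies $|N(S)|/k \leq r/k^{c\eps} + \tilde{O}(r/k^{c\eps}) = \tilde{O}(r/k^{c\eps})$, which is the bound claimed.

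The main subtlety lies in invoking Lemma~\ref{lem:at-most-exact}: we must ensure that the polylog overhead in going from the at-most version to the exact version does not implicitly depend on the true at-most/exact optimum values of the reduced instance, so that an at-most output value of $O(r/k^{c\eps})$ per vertex translates directly into an exact output value of $\tilde{O}(r/k^{c\eps})$ per vertex on a size-exactly-$k$ set. This is essentially the standard dyadic-enumeration argument underlying Lemma~\ref{lem:at-most-exact} (enumerate over possible sizes of the optimal at-most solution, run the at-most algorithm for each, and combine or pad the outputs as needed), applied to the instance $(U, V \setminus V_D)$ rather than to the full bipartite graph.
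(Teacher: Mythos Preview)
Your high-level plan is exactly the paper's: use the at-most-to-exact conversion underlying Lemma~\ref{lem:at-most-exact}, and then pay for the neighbors in $V_D$ once via the trivial bound $|V_D|\leq rk^{1-c\eps}$. The gap is in how you carry out the first step. You treat $V_D$ as a single fixed set and invoke Lemma~\ref{lem:at-most-exact} on the reduced instance $(U,V\setminus V_D)$ as a black box. But the hypothesized algorithm's guarantee is relative to the $V_D$ \emph{of the instance it is run on}, and the at-most-to-exact reduction you appeal to is iterative: one repeatedly finds a small set, removes it from $U$, decreases $k$, and re-runs. Each such call is on a new instance with smaller $|U|$ and smaller $k$, hence with its own $V_D$. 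Nothing in your write-up explains why the guarantee at iteration $i$, which is relative to the \emph{current} $V\setminus V_D^{(i)}$, transfers to the \emph{initial} $V\setminus V_D^{(0)}$ that you fixed at the outset.

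This is precisely the point the paper's proof addresses: it observes that as vertices are removed from $U$, degrees in $V$ can only drop and $D=n/k^{1-c\eps}$ can only grow, so $V_D^{(i)}\subseteq V_D^{(0)}$ for every iteration $i$. Hence $V\setminus V_D^{(0)}\subseteq V\setminus V_D^{(i)}$, and the per-iteration bound on $|N(S_i)\cap(V\setminus V_D^{(i)})|$ automatically upper-bounds $|N(S_i)\cap(V\setminus V_D^{(0)})|$. Without this monotonicity observation your black-box invocation of Lemma~\ref{lem:at-most-exact} does not go through. Your secondary worry --- that the lemma is stated for approximation ratios rather than absolute expansion values --- is the lesser issue; as you yourself note, it is handled by opening up the iterative argument, which is also what the paper does rather than citing the lemma as a black box.
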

\begin{proof}
Let us examine the reduction which allows us to find a $k'$-subset of $U$ with small expansion, where $k'\leq k$, rather than choosing $k$ vertices in one shot (proving Lemma~\ref{lem:at-most-exact}): we repeatedly find such a set, and remove its vertices from $U$ until we have removed $k$ vertices. Note that the definition of $V_D$ may change as vertices are removed: at each such iteration, the degrees in $V$ may decrease, the value of $k$ decreases, and therefore the value of $D$ increases. However, these changes can only cause vertices to be removed from $V_D$, not added. Thus, every vertex which is in $V_D$ at some iteration was also in $V_D$ at the start of the algorithm.

If at each iteration we find a small set with small expansion into $V\setminus V_D$, then by the argument in the first writeup, this is sufficient to bound the total number of neighbors in $V\setminus V_D$  (of our $k$-subset of $U$) at the end of the algorithm, while losing only an additional $O(\log n)$ factor. Now, while the expansion into $V_D$ may have been very bad at any given iteration, the \emph{total} number of neighbors accrued throughout all iterations in $V_D$ (as defined at the start) is still at most $|V_D|\leq rk^{1-c\eps}$. Since $|\topt|\leq kr/d=k^{1-\alpha-\eps}r$, this gives us a $k^{\alpha+\eps-c\eps}$-approximation, as we wanted.
\end{proof}
Thus, we may shift our focus to finding small subsets with good expansion into $V\setminus V_D$ even if their expansion into $V_D$ is huge. 

\subsection{The algorithm}

Before we describe the algorithm, let us once again state that thanks to Claim~\ref{clm:good-expansion} our goal is simply to find a set $S\subseteq U$ of size at most $k$ with expansion at most $r/k^{c\eps}$ into $V\setminus V_D$.

Our algorithm will proceed according the same caterpillar construction described for the random planted setting, though unlike the random setting, each step will require a technically complex algorithm to ensure that either we maintain the same set size bounds as one would expect in a random instance, or we can abort the process and find a small set with small expansion into $V\setminus V_D$.

\subsubsection{First step}

Consider the following algorithm:
\begin{itemize}[itemsep=-2pt, topsep=4pt, partopsep=0pt]
  \item Let $U_D=\{u\in U : |N(u)\setminus V_D|\leq r/(2k^{c\eps})\}$. If $|U_D|=\Omega(k)$, return an arbitrary subset of $U_D$ of size $\min\{|U_D|,k\}$.
  \item Otherwise, guess a vertex $v\in V\setminus V_D$, and proceed to the next step with set $S=N(v)$.
\end{itemize}

\begin{lemma}\label{lem:first-step} The above algorithm either returns a set $S$ with the required expansion, or for at least one guess, returns a set $S\subseteq U$ such that $|S|\leq D$ and $|S\cap\sopt|\geq d/(4k^{c\eps})$.
\end{lemma}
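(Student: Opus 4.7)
The plan is to case-split on the size of $U_D$. In the first case, when $|U_D| = \Omega(k)$, the algorithm returns a set $S \subseteq U_D$ of size at most $k$. Since every $u \in U_D$ has $|N(u) \setminus V_D| \leq r/(2k^{c\eps})$ by definition, the union $N(S) \setminus V_D$ has size at most $|S| \cdot r/(2k^{c\eps})$, so $S$ has expansion at most $r/(2k^{c\eps})$ into $V \setminus V_D$, which by Claim~\ref{clm:good-expansion} is exactly the bound we need.

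In the remaining case, $|U_D|$ is below some small constant times $k$ (say $|U_D| \leq k/4$), and the algorithm enumerates all $v \in V \setminus V_D$. I will argue that there exists a good vertex $v^* \in T' \setminus V_D$; then for the guess $v = v^*$, the returned set $S = N(v^*)$ satisfies $|S| = \deg(v^*) < D$ (since $v^* \notin V_D$) and $|S \cap \sopt| \geq |S \cap S'| \geq d/(4k^{c\eps})$, because $v^*$ is good, using Claim~\ref{clm:bad-neighbors} with $\delta = 1/(2k^{c\eps})$.

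To show $T' \setminus V_D \neq \emptyset$, let $S'' := S' \setminus U_D$; since $|S'| \geq k/2$ and $|U_D| \leq k/4$, we have $|S''| \geq k/4 > 0$. Pick any $u \in S''$. Because $u \notin U_D$, the vertex $u$ has more than $r/(2k^{c\eps})$ neighbors in $V \setminus V_D$, and all its neighbors lie in $\topt$ since $u \in \sopt$. By Assumption~\ref{asm:bad-neighbors}, $u$ has at most $r/(2k^{c\eps})$ bad neighbors overall, so at least one of its non-$V_D$ neighbors must be a good vertex in $\topt \setminus V_D$, i.e., in $T' \setminus V_D$. The only (mild) obstacle here is making sure the pruning threshold $\delta d/2 = d/(4k^{c\eps})$ from Claim~\ref{clm:bad-neighbors} is calibrated against the cutoff $r/(2k^{c\eps})$ defining $U_D$, so that the count of bad neighbors is strictly less than the count of non-$V_D$ neighbors; everything else is bookkeeping.
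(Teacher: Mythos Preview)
Your argument is correct and follows essentially the same approach as the paper. The only cosmetic difference is the direction of the case split: the paper splits on whether all good vertices lie in $V_D$ (and deduces $\sopt\subseteq U_D$, hence $|U_D|\geq k$), whereas you split on the algorithm's branch ($|U_D|$ large or small) and, in the small case, exhibit a vertex $u\in S'\setminus U_D$ whose neighborhood must contain a good vertex outside $V_D$; these are contrapositive versions of the same implication, and the ``calibration'' you flag is indeed just the strict-versus-nonstrict inequality between the $U_D$ threshold and the bad-neighbor bound from Assumption~\ref{asm:bad-neighbors}. One minor redundancy: once Assumption~\ref{asm:bad-neighbors} is in force you could work directly with $\sopt$ rather than going back to $S'$, but this changes nothing.
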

\begin{proof}
  If all good vertices (in $\topt$) belong to $V_D$, then all vertices in $\topt\setminus V_D$ are bad. Since by Assumption~\ref{asm:bad-neighbors} every vertex in $\sopt$ has at most $r/(2k^{c\eps})$ bad neighbors, then for every $u\in\sopt$ we have $|N(u)\setminus V_D|\leq r/(2k^{c\eps})$. That is, $\sopt\subseteq U_D$. Thus, the first step returns a set of size $k$ with at most $k\cdot r/(2k^{c\eps})$ neighbors in $V\setminus V_D$, as required.

Otherwise, there exists a good vertex in $\topt\setminus V_D$. Thus, guessing such a vertex $v$ ensures that $S=N(v)$ has the desired properties by definition.
\end{proof}

\subsubsection{Hair step}

In the random planted setting, a hair step involves guessing a vertex $v\in\topt$, and replacing our current set $S$ with $S\cap N(v)$. In that setting, the change in the cardinality of $S$ and $S\cap\sopt$ is concentrated around the expectation by Chernoff bounds. However, in a worst-case setting, we need to ensure that the updated sets have the same cardinality as in the random planted setting in order to proceed to the next step. We do this using degree classification as in the first step, and the least-expanding-set algorithm. If either of these approaches gives a small set with small expansion, we are done. Otherwise, we show that we have the required cardinality bounds. Specifically, the algorithm for this step is as follows:

\begin{itemize}[itemsep=-0pt, topsep=4pt, partopsep=0pt]
  \item Given: A set $\hat U\subseteq U$, where $\hat{n}=|\hat U|$ and $\hat k=|\hat U\cap \sopt|\geq k^{1-\alpha}$.
  \item Let $\hat D=\hat n/k^{1-c\eps}$, and $\hat V_{\hat D}=\{v\in V : |N(v)\cap\hat U|\geq\hat D\}$, and $\hat U_{\hat D}=\{u\in \hat U : |N(u)\setminus \hat V_{\hat D}|\leq r/k^{c\eps}\}$.
  \item If $|\hat U_{\hat D}|\geq k$, return an arbitrary $k$-subset $S_{\hat D}\subseteq\hat U_{\hat D}$.
  \item Otherwise, run the least-expanding-set algorithm on the subgraph induced on $(\hat U_{\hat D}, V)$. If the resulting set has sufficiently small expansion, return this set.
  \item Otherwise, guess a vertex $v\in V\setminus \hat V_{\hat D}$, and proceed to the next step with set $U_v=N(v)\cap\hat U$.
\end{itemize}

The outcome of this algorithm and its analysis are somewhat similar to those of the first step, and captured by the following lemma.

\begin{lemma}\label{lem:hair-step} If $\hat k\geq k^{1-\alpha}$, then the above algorithm either returns a set $S$ with the required expansion, or for at least one guess, returns a set $U_v\subseteq \hat U$ such that $|U_v|\leq \hat n\cdot k^{c\eps}/k$ and $|U_v\cap\sopt|\geq \hat k\cdot k^{(1-c)\eps}/(2k^{1-\alpha})$.
\end{lemma}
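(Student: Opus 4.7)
The plan is to analyze the three algorithmic branches (arbitrary $k$-subset, low-expansion LES output, guess and recurse) separately, and then show that only the third branch needs substantive work to establish the cardinality bounds on $U_v$.

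For the first branch ($|\hat U_{\hat D}| \geq k$), I would return an arbitrary $k$-subset $S_{\hat D} \subseteq \hat U_{\hat D}$. By the definition of $\hat U_{\hat D}$, each $u \in S_{\hat D}$ contributes at most $r/k^{c\eps}$ neighbors in $V \setminus \hat V_{\hat D}$, so $|N(S_{\hat D}) \setminus \hat V_{\hat D}| \leq kr/k^{c\eps}$. The only remaining worry is that vertices in $\hat V_{\hat D} \setminus V_D$ count against our expansion into $V \setminus V_D$. A degree-counting argument mirroring the bound on $|V_D|$ (each vertex in $\hat V_{\hat D}$ has at least $\hat D = \hat n/k^{1-c\eps}$ neighbors in $\hat U$, and $\hat U$ has total degree $\hat n r$) yields $|\hat V_{\hat D}| \leq r k^{1-c\eps}$, matching the bound on $|V_D|$, so the overall expansion of $S_{\hat D}$ into $V \setminus V_D$ is $O(r/k^{c\eps})$ as required. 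The second branch is immediate: the LES output lies in $\hat U_{\hat D}$ (so its size is strictly less than $k$) and has expansion into $V$ below the chosen threshold $O(r/k^{c\eps})$.

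The third branch is the substantive part. Since LES finds a minimum-expansion set, every subset of $\hat U_{\hat D}$ has expansion greater than the threshold $r/k^{c\eps}$. Applying this to $\hat U_{\hat D} \cap \sopt$ and using $N(\hat U_{\hat D} \cap \sopt) \subseteq \topt$ with $|\topt| = k^{1-\alpha-\eps} r$ bounds $|\hat U_{\hat D} \cap \sopt| \leq k^{1-\alpha-(1-c)\eps}$; combined with the hypothesis $\hat k \geq k^{1-\alpha}$, this gives $|(\hat U \cap \sopt) \setminus \hat U_{\hat D}| \geq \hat k/2$ (for $\eps$ small and $k$ large). By definition of $\hat U \setminus \hat U_{\hat D}$, each such vertex has more than $r/k^{c\eps}$ neighbors in $V \setminus \hat V_{\hat D}$, and Assumption~\ref{asm:bad-neighbors} says at most $r/(2k^{c\eps})$ of its neighbors are bad; hence each contributes at least $r/(2k^{c\eps})$ good neighbors in $V \setminus \hat V_{\hat D}$, totaling at least $\hat k r /(4 k^{c\eps})$ edges.

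The key pigeonhole then uses that good vertices lie in $T' \subseteq \topt$, so the good vertices in $V \setminus \hat V_{\hat D}$ number at most $|\topt| = k^{1-\alpha-\eps} r$. Averaging, some good $v \in V \setminus \hat V_{\hat D}$ satisfies $|N(v) \cap \hat U \cap \sopt| = \Omega\bigl(\hat k \cdot k^{(1-c)\eps}/k^{1-\alpha}\bigr)$, which, up to absorbing constants, gives the required lower bound on $|U_v \cap \sopt|$. The matching upper bound $|U_v| \leq \hat n k^{c\eps}/k$ is automatic from $v \notin \hat V_{\hat D}$, since $\hat D = \hat n/k^{1-c\eps} = \hat n k^{c\eps}/k$. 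The main obstacle is aligning the various powers of $k$: the LES threshold must be chosen exactly to make the upper bound on $|\hat U_{\hat D} \cap \sopt|$ strictly dominated by $\hat k$, and the Assumption~\ref{asm:bad-neighbors} slack must be precisely half the LES threshold so that enough "good" edges survive to drive the final pigeonhole.
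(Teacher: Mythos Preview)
Your argument is correct, but it reaches the guess $v$ by a genuinely different route than the paper. The paper never uses the global good/bad decomposition (Assumption~\ref{asm:bad-neighbors}) in the hair step. Instead it re-runs the pruning argument of Claim~\ref{clm:bad-neighbors} \emph{locally}: it calls a vertex of $\hat T=N(\hat U\cap\sopt)$ ``$\hat S$-good'' if it has at least $\hat k d/(2k^{1+c\eps})$ neighbors in $\hat S=\hat U\cap\sopt$, obtains a set $\hat S_{\mathrm{good}}\subseteq\hat S$ of size $\geq \hat k/2$ with few $\hat S$-bad neighbors, and then splits on whether all $\hat S$-good vertices lie in $\hat V_{\hat D}$. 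If so, $\hat S_{\mathrm{good}}\subseteq\hat U_{\hat D}$ is a witness of expansion $\leq 2r/k^\eps$, so LES succeeds; if not, any $\hat S$-good vertex outside $\hat V_{\hat D}$ is the desired guess, and its $\hat S$-goodness \emph{directly} gives $|U_v\cap\sopt|\geq \hat k d/(2k^{1+c\eps})=\hat k k^{(1-c)\eps}/(2k^{1-\alpha})$ with the exact constant $1/2$. Your route is arguably more elementary --- you use the contrapositive of LES (its failure bounds $|\hat U_{\hat D}\cap\sopt|$) and then pigeonhole over the global good vertices in $\topt$ --- but it leans on the global preprocessing and lands on the constant $1/4$ rather than $1/2$; the paper's local re-pruning is self-contained at this step and hits the stated constant on the nose.
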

\begin{proof}
First, note that $\hat nr=|E(\hat U,N(\hat U))|\geq |E(\hat U,\hat V_{\hat D})|\geq|\hat V_{\hat D}|\hat D$, and so $$|\hat V_{\hat D}|\leq \hat n r/\hat D=rk^{1-c\eps}.$$
Thus, if $|\hat U_{\hat D}|\geq k$ then every vertex in $S_{\hat D}$ has at most $r/k^{c\eps}$ neighbors in $V\setminus \hat V_{\hat D}$, and all together we get $rk^{1-c\eps}$ neighbors in this set. On the other hand, we get at most $|\hat V_{\hat D}|\leq rk^{1-c\eps}$ neighbors in $\hat V_{\hat D}$, so we get a total of $2rk^{1-c\eps}$ neighbors for $k$ vertices, and we are done.

Suppose, on the other hand, that $|\hat U_{\hat D}|\leq k$. Let $\hat S=\hat U\cap\sopt$ and $\hat T=N(\hat S)$. Note that since the $\hat k$ vertices in $|\hat S|$ have $\hat k r$ edges going into at most $kr/d$ vertices (in $\hat T\subseteq\topt$), the vertices in $\hat T$ have average back-degree at least $\hat k r/(kr/d)=\hat k d/k$ into $\hat S$. In this context, call the vertices in $\hat T$ with at least $\hat k d/2k^{1+c\eps}$ neighbors in $\hat S$ ``$\hat S$-good", and the rest ``$\hat S$-bad".

Similarly to Claim~\ref{clm:bad-neighbors}, it is easy to check that at least $\hat k/2$ vertices in $\hat S$ must have at most $r/k^{c\eps}$ $\hat S$-bad neighbors. Call this set $\hat S_{\mathrm{good}}$.

If all $\hat S$-good vertices (in $\hat T$) belong to $\hat V_{\hat D}$, then all vertices in $\hat T\setminus \hat V_{\hat D}$ are bad. Thus, by the above, for every vertex $u\in\hat S_{\mathrm{good}}$ we have $|N(u)\setminus \hat V_{\hat D}|\leq r/k^{c\eps}$. That is, $S_{\mathrm{good}}\subseteq \hat U_{\hat D}$. Note that $S_{\mathrm{good}}$ has at least $\hat k/2$ vertices and at most $|N(S_{\mathrm{good}})|\leq |\topt|=kr/d$ neighbors, and so it has expansion at most $2kr/(\hat kd)\leq 2k^{\alpha}r/d=2r/k^{\eps}<r/k^{c\eps}$ (where the first inequality follows since $\hat k\geq k^{1-\alpha}$). Since $|\hat U_{\hat D}|\leq k$, the least-expanding-set algorithm will return a set of size at most $k$ with at most the same expansion.

Otherwise, there exists an $\hat S$-good vertex in $\hat T_{\hat D}\setminus \hat V_{\hat D}$. Thus, guessing such a vertex $v$ ensures that $U_v=N(v)\cap\hat U$ has the desired properties by definition: $v$ is $\hat S$-good, so $$|N(v)\cap\hat U\cap\sopt|=|N(v)\cap\hat S|\geq \frac{\hat kd}{2k^{1+c\eps}}= \frac{\hat k\cdot k^{(1-c)\eps}}{2k^{1-\alpha}},$$ and $v\not\in \hat V_{\hat D}$, so $|N(v)\cap \hat U|\leq\hat D=\hat n\cdot k^{c\eps}/k$.
\end{proof}

\subsubsection{Backbone step}

In the random planted setting, a backbone step involves replacing our current set $S\subseteq U$ with $N(N(S))$. As in the hair step, the change in the cardinality of $S$ and $S\cap\sopt$ is concentrated around the expectation in the random planted setting, while in a worst-case setting, we need to ensure that the updated sets have the same cardinality as in the random planted setting in order to proceed to the next step. This is also done with the least-expanding-set algorithm between $S$ and $N(S)$. If this procedure gives a small set with small expansion, we are done. Otherwise, by binning vertices in $N(N(S))$ by degree, we produce $\log r$ sets, at least one of which we show will have the required cardinality bounds. The algorithm for this step is as follows:

\begin{itemize}[itemsep=-0.5pt, topsep=4pt, partopsep=0pt]
  \item Given: A set $\hat U\subseteq U$, where $\hat n=|\hat U|\leq k$ and $\hat k=|\hat U\cap\sopt|$.
  \item Let $\hat V=N(\hat U)\setminus V_D$.
  \item Run the least-expanding-set algorithm on the subgraph induced on $(\hat U,\hat V)$. If the resulting set has sufficiently small expansion, return this set.
  \item Otherwise, guess some $i\in[\lceil\log r\rceil]$, let $r_i=r/2^{i-1}$, and let $U_i=\{u\in N(\hat V)\mid |N(u)\cap\hat V|\in [r_i/2,r_i]\}$. Subsample this set by retaining every vertex independently with probability $r'/r$, and let $U'_i$ be the resulting set. Proceed to the next step with $U'_i$.
\end{itemize}

The guarantee of the backbone step algorithm is as follows.

\begin{lemma}\label{lem:backbone-step} If $\hat n\leq k$, then the above algorithm either returns a set $S$ with the required expansion, or for at least one guess, returns a set $U'_i\subseteq \hat U$ such that w.h.p.\ $|U'_i|\leq 2\hat n\cdot n^{\alpha}k^{c\eps}$ and $|U'_i\cap\sopt|\geq \hat k\cdot k^{\alpha+(1-2c)\eps}/(8\log r)$.
\end{lemma}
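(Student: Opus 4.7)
The plan is to split on the outcome of the least-expanding-set subroutine applied to the induced subgraph on $(\hat U, \hat V)$. If the returned set has expansion at most $r/k^{c\eps}$, then since $\hat V = N(\hat U)\setminus V_D \subseteq V\setminus V_D$ and since any subset of $\hat U$ has size at most $\hat n\leq k$, this is exactly a small set with small expansion into $V\setminus V_D$, which by Claim~\ref{clm:good-expansion} is enough. So we may assume the minimum expansion of any subset of $\hat U$ into $\hat V$ strictly exceeds $r/k^{c\eps}$; applying this to $\hat S := \hat U \cap \sopt$ gives $|N(\hat S)\cap \hat V| > \hat k r/k^{c\eps}$.

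By Assumption~\ref{asm:bad-neighbors}, each vertex of $\sopt$ has at most $r/(2k^{c\eps})$ bad neighbors, so at most $\hat k r/(2k^{c\eps})$ of these are bad; thus the set $\hat T_g := N(\hat S)\cap \hat V \cap T'$ has size at least $\hat k r/(2k^{c\eps})$. Since each $v\in T'$ has at least $d/(4k^{c\eps})$ neighbors in $\sopt$ (by Claim~\ref{clm:bad-neighbors}), the number of edges between $\hat T_g$ and $N(\hat V)\cap \sopt$ is at least $\hat k r d/(8k^{2c\eps})$. I would then bin these edges by the index $i$ such that the endpoint in $U$ lies in $U_i$, and since there are only $\lceil \log r\rceil$ bins, pigeonhole yields some $i^*$ catching at least a $1/\log r$ fraction. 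Each $u\in U_{i^*}$ contributes at most $r_{i^*}$ edges to this count (since $\hat T_g\subseteq\hat V$), giving
\[
|U_{i^*}\cap \sopt|\;\geq\;\frac{\hat k r d}{8\,k^{2c\eps}\,r_{i^*}\log r}.
\]

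For the size upper bound I would count edges between $U_i$ and $\hat V$ using the $V_D$-exclusion: every $v\in \hat V$ has degree below $D$, so $|E(U_i,\hat V)| < |\hat V|\,D \leq \hat n r D$, while each $u\in U_i$ contributes at least $r_i/2$ such edges, giving $|U_i|\leq 2\hat n r D/r_i$. Subsampling with probability $r_i/r$ (reading the algorithm's ``$r'$'' as $r_i$) then yields the expectations
\[
\mathbb{E}[|U'_{i^*}|]\;\leq\;2\hat n D \;=\; 2\hat n\,n^{\alpha}\,k^{c\eps},\qquad \mathbb{E}[|U'_{i^*}\cap\sopt|]\;\geq\;\frac{\hat k\,d}{8\,k^{2c\eps}\,\log r}\;=\;\frac{\hat k\,k^{\alpha+(1-2c)\eps}}{8\log r},
\]
using $D = n/k^{1-c\eps}$, $n/k = n^{\alpha}$, and $d = k^{\alpha+\eps}$. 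Since both $|U'_{i^*}|$ and $|U'_{i^*}\cap \sopt|$ are sums of independent Bernoullis, standard Chernoff bounds give concentration around these expectations (up to constant factors which are absorbed).

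The main obstacle is in the Chernoff step: to get w.h.p.\ lower-bound concentration on $|U'_{i^*}\cap\sopt|$ we need the expectation $\hat k k^{\alpha+(1-2c)\eps}/\log r$ to be at least $\Omega(\log n)$. This is delicate because $\hat k$ itself is only guaranteed to grow multiplicatively through the preceding steps. However, the algorithm is only invoked for $q = O(1/\eps)$ steps total, and after the first step $\hat k \geq d/(4k^{c\eps}) = k^{\alpha-(c-1)\eps}/4$ by Lemma~\ref{lem:first-step}, which is already polynomially large in $k$, so after any subsequent step $\hat k$ will be comfortably above $\polylog(n)$. The analogous concentration for the upper bound on $|U'_{i^*}|$ is routine once the expectation itself dominates $\log n$, which similarly follows from the preprocessing regime fixed in Lemma~\ref{lem:preprocessing}.
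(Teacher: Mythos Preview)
Your argument is correct and follows essentially the same route as the paper's proof: split on whether the least-expanding-set subroutine succeeds, otherwise use Assumption~\ref{asm:bad-neighbors} (equivalently Corollary~\ref{bad-expansion}) to extract many good neighbors $\hat T_g$, count edges from $\hat T_g$ into $\sopt$, pigeonhole over the degree bins $U_i$, and finish with the $|\hat V|\leq \hat n r$ and $\deg_V\leq D$ bounds plus subsampling and Chernoff. You are in fact slightly more careful than the paper about tracking the $1/\log r$ loss from the pigeonhole and about justifying why the expectations are large enough for Chernoff concentration; the paper simply asserts the last step.
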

\begin{proof} Let $\hat S=\hat U\cap\sopt$ and let $\hat T=N(\hat S)\setminus V_D$.

First, suppose $|\hat T|\leq r\hat k/k^{c\eps}$. In this case, $\hat S$ has expansion at most $r/k^{c\eps}$ into $V\setminus V_D$, and the least-expanding-set algorithm will return a set of at most $|\hat U|\leq k$ vertices with at most the above expansion into $V\setminus V_D$, as required.

Otherwise, $|\hat T|\geq \hat rk/k^{c\eps}$, and $\hat S$ has expansion at least $r/k^{c\eps}$ into $V\setminus V_D$, and so by Corollary~\ref{bad-expansion}, $\hat S$ has \emph{good} expansion at least $rk/(2k^{c\eps})$ into $V\setminus V_D$. That is, it has at least $r\hat k/(2k^{c\eps})$ good neighbors in $\hat V\cap \topt$. Let us call this set of good neighbors $\hat T_{\mathrm{good}}$.

Consider the $\log r$ sets $\{U_i\cap\sopt\mid i\in[\lceil\log r\rceil]\}$. We know that at least one of them must cover at least a $1/\log r$-fraction of the edges between $\sopt\cap N(\hat V)$ and $\hat T_{\mathrm{good}}$. Choose~$i$ such that this is the case. Then since all the vertices in $T_{\mathrm{good}}$ are good, we know that
$$|E(\sopt\cap N(\hat V),\hat T_{\mathrm{good}})|\geq \frac{d}{4k^{c\eps}}\cdot|\hat T_{\mathrm{good}}|\geq \frac{d}{4k^{c\eps}}\cdot \frac{r\hat k}{2k^{c\eps}}.$$
On the other hand, we know that every vertex in $U_i\cap\sopt$ contributes at most $r'$ edges to this set, so we also have $|E(\sopt\cap N(\hat V),\hat T_{\mathrm{good}})|\leq|U_i\cap\sopt|r'$. Putting these together, we get the bound
$$|U_i\cap\sopt|\geq \frac{r}{r'}\cdot \frac{d\hat k}{8k^{2c\eps}}=\frac{r}{r'}\cdot \hat k\cdot \frac{k^{\alpha+(1-2c)\eps}}{8}.$$

Now consider $U_i$ itself. Since vertices in $U$ have degree $r$, we know that $|\hat V|\leq|N(\hat U)|\leq \hat nr$. Furthermore, since $\hat V\subseteq V\setminus V_D$, we know that every vertex in $\hat V$ has degree at most $D$, and so we have $$|E(U_i,\hat V)|\leq|E(N(\hat V),\hat V)|\leq |\hat V|D\leq \hat nrD.$$
On the other hand, every vertex in $U_i$ contributes at least $r'/2$ edges to this set, and so $|E(U_i,\hat V)|\geq |U_i|r'/2$. Putting these together, we get
$$|U_i|\leq \frac{2r}{r'}\cdot \hat nD=\frac{2r}{r'}\cdot \frac{\hat nnk^{c\eps}}{k}=\frac{2r}{r'}\cdot \hat nn^{\alpha}k^{c\eps}.$$

The required bounds on $|U'_i|$ and $|U'_i\cap\sopt|$ now follow from Chernoff bounds.
\end{proof}

\subsection{Putting everything together: the final step}

Before examining the final step, let us consider the effect of the first $q-1$ steps, assuming none of them stopped and returned a small set with small expansion, and assuming all the guesses were good (giving the guarantees in the various lemmas). Let $U_i$ be the set passed on from step $i$ to step $i+1$, and let $\hat n_i=|U_i|$ and $\hat k=|U_i\cap\sopt|$. Then to summarize the effect of the various steps, the first step gives
$$\hat n_1\leq n^{\alpha}k^{c\eps} \quad\text{and}\quad \hat k_1= \Omega(k^{\alpha+(1-c)\eps}),$$
a hair step $i$ gives
$$\hat n_i\leq \hat n_{i-1}\cdot\frac{k^{c\eps}}{n^{1-\alpha}}\quad\text{and}\quad\hat k_i= \hat k_{i-1}\cdot \Omega\left(\frac{k^{(1-c)\eps}}{k^{1-\alpha}}\right),$$
and a backbone step $i$ gives
$$\hat n_i= \hat n_{i-1}\cdot O\left(n^{\alpha}k^{c\eps}\right)\quad\text{and}\quad\hat k_i= \hat k_{i-1}\cdot \tilde\Omega\left(k^{\alpha+(1-2c)\eps}\right).$$

By induction, we can see that 
 after step $t$ we have
%\begin{align*}
$$
%&
\hat n_t=O(n^{\frac{tp}q-\lfloor\frac{tp}q\rfloor}\cdot k^{tc\eps})\quad\text{and}\quad%\\&
\hat k_t=\tilde\Omega(k^{\frac{tp}q-\lfloor\frac{tp}q\rfloor}\cdot k^{(t-(t+\lfloor\frac{tp}q\rfloor)c)\eps}).
$$
%\end{align*}

In particular, choosing $c>0$ such that $c<\min\{1/(q^2\eps),1/2\}$, this ensures the correctness of the assumptions $\hat n\leq k=n^{1-p/q}$ for backbone steps and $\hat k\geq k^{1-\alpha}=k^{1-p/q}$ for hair steps. When $t=q-1$, we get
\begin{align*}&\hat n_{q-1}=O(n^{1-p/q}\cdot k^{(q-1)c\eps})=O(k^{1+(q-1)c\eps})\quad\text{and}\quad\\&\hat k_{q-1}=\tilde\Omega(k^{1-p/q}\cdot k^{((q-1)-(p+q-2)c)\eps}).\end{align*}

Given $U_{q-1}$ with the above cardinality bounds, the final step is to simply run the least-expanding-set algorithm:
\begin{lemma} If $U_{q-1}$ has the above cardinality bounds, then running the least-expanding-set algorithm on $U_{q-1}$ and removing vertices arbitrarily to reduce the cardinality of the resulting set to $k$ gives us a subset of $U$ of size at most $k$ with expansion at most $$\tilde O\left(\frac{r}{k^{(q-(p+2q-3)c)\eps}}\right).$$
\end{lemma}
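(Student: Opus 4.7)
The plan is to use $U_{q-1}\cap \sopt$ as a feasibility witness for the least-expanding-set algorithm (Lemma~\ref{lem:charikar}), and then simply prune the LES output down to size $k$. Since all previous steps have reduced the problem (via Claim~\ref{clm:good-expansion}) to finding a small-expansion set into $V\setminus V_D$, I would run LES on the bipartite graph induced on $(U_{q-1},V\setminus V_D)$.

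First, I would bound the expansion of the witness $U_{q-1}\cap \sopt$ into $V\setminus V_D$. Since $U_{q-1}\cap \sopt\subseteq \sopt$, its neighborhood in $V\setminus V_D$ is contained in $\topt\setminus V_D$, which has size at most $|\topt|=kr/d = rk^{1-p/q-\eps}$. Dividing by the assumed lower bound $|U_{q-1}\cap \sopt|\geq \hat k_{q-1}=\tilde\Omega(k^{1-p/q}\cdot k^{((q-1)-(p+q-2)c)\eps})$ immediately gives expansion at most $\tilde O(r/k^{(q-(p+q-2)c)\eps})$. By Lemma~\ref{lem:charikar}, the LES algorithm then returns some $S'\subseteq U_{q-1}$ whose expansion into $V\setminus V_D$ is at most this value.

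The second move is the pruning. If $|S'|\leq k$ we are done. Otherwise, removing $|S'|-k$ vertices arbitrarily leaves the neighborhood unchanged but multiplies the expansion by at most $|S'|/k \leq |U_{q-1}|/k = O(k^{(q-1)c\eps})$, yielding a final bound
\begin{equation*}
\tilde O\!\left(\frac{r}{k^{(q-(p+q-2)c)\eps}}\right)\cdot O\!\left(k^{(q-1)c\eps}\right) \;=\; \tilde O\!\left(\frac{r}{k^{(q-((p+q-2)+(q-1))c)\eps}}\right) \;=\; \tilde O\!\left(\frac{r}{k^{(q-(p+2q-3)c)\eps}}\right),
\end{equation*}
which matches the claimed expression.

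Honestly, there is no real obstacle here; this lemma is an accounting step that records exactly how much additional slack the pruning consumes from the exponent. The only thing to be careful about is that the $k^{(q-1)c\eps}$ blow-up from pruning lands inside the exponent as expected, reducing the exponent from $(q-(p+q-2)c)\eps$ to $(q-(p+2q-3)c)\eps$. All of the genuine work has already been done in the hair-step, backbone-step, and preprocessing lemmas that together guarantee the cardinality bounds on $U_{q-1}$ and $U_{q-1}\cap \sopt$ used above.
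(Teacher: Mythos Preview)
Your proposal is correct and follows essentially the same argument as the paper: bound the expansion of the witness $U_{q-1}\cap\sopt$ by $|\topt|/\hat k_{q-1}$, invoke Lemma~\ref{lem:charikar}, and then absorb the $O(k^{(q-1)c\eps})$ blow-up from pruning $U_{q-1}$ down to $k$ vertices. The only cosmetic difference is that you run LES on $(U_{q-1},V\setminus V_D)$ whereas the paper bounds the full neighborhood by $|\topt|$ directly; both give the same exponent.
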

\begin{proof}
Note that $|U_{q-1}\cap\sopt|$ has expansion at most %\begin{align*}
$$\frac{|\topt|}{|U_{q-1}\cap\sopt|}%&
=\tilde O\left(\frac{kr/d}{k^{1-\alpha}\cdot k^{((q-1)-(p+q-2)c)\eps}}\right)%\\&
=\tilde O\left(\frac{r}{k^{(q-(p+q-2)c)\eps}}\right).$$
%\end{align*}
Thus the least-expanding-set algorithm on $U_{q-1}$ will return a set with at most this expansion. However, this expansion will increase if we have more that $k$ vertices, and need to remove all but $k$ of them, without necessarily affecting the size of the neighborhood. By the above bound on $|U_{q-1}|=\hat n$, the maximum possible increase is by an $O(k^{(q-1)c\eps})$-factor, which would give us a $k$-subset of $U$ with at most the expansion guaranteed by the lemma. 
\end{proof}

Thus, to achieve the desired $r/k^{c\eps}$ upper bound on the expansion, it suffices to make sure that in addition to the above bounds on $c$, we also have $c<\frac{q}{p+2q-2}$.

% !TEX root = SSBVE.tex

\section{SDP and Sherali--Adam gaps}\label{sec:gaps}
In this section, we present a semidefinite programming (SDP) and Sherali--Adams (SA) integrality gap constructions for Small Set
Bipartite Vertex Expansion.
We prove that
\begin{itemize}
\item the SDP integrality gap is $\tilde O(\min(k,n/k))$ (where $\tilde O$ notation hides $\mathrm{polylog}(n)$ factors);\\ in particular, the gap is $\tilde O(\sqrt{n})$ for $k=\sqrt{n}$;
\item the SA integrality gap is $n^{1/4-O(\varepsilon)}$ after $r=\Omega(\varepsilon \log n/\log \log n)$ rounds \\(in this construction, $k = n^{1/2 - O(\varepsilon)}$).
\end{itemize}

We show both integrality gaps for the same family of instances --- random bipartite graphs $G = (U, V, E)$ with
$|U| = n$, $|V| = s \approx k$, in which every two vertices $u\in U$ and $v\in V$ are connected by an edge with probability $d_L/s$
(where $d_L = \Theta(\log n))$;
the expected degree of vertices in $U$ is $d_L$, and the expected degree of vertices in $V$ is $d_R = d_L n/s$.

\subsection{Integrality Gap Construction}
In this section, we describe the integrality gap instance, prove a lower bound on the cost of the optimal combinatorial solution, and
state some basic properties of the instance.

Given $n$, $k$, and $s$, we consider a random bipartite graph $G = (U, V, E)$ with $|U| = n$, $|V| = s$,
in which every two vertices $u\in U$ and $v\in V$ are connected with probability $d_L/s$, where $d_L \geq 20 \log_e n$.
\begin{lemma}\label{lem:basic-properties}
The following properties hold with probability at least $1 - O(1/n)$:
\begin{enumerate}
\item The cost of the optimal combinatorial solution is at least $\min(k,s)/2$.
\item Every vertex in $U$ has degree between $\frac{d_L}{2}$ and $\frac{3d_L}{2}$; every vertex in $U$ has degree
between $\frac{d_R}{2}$ and $\frac{3d_R}{2}$, where $d_R = n d_L /s$.
\item If $s \leq \sqrt{n}$, then every two vertices in $U$ are connected by a path of length 4 in $G$.
\end{enumerate}
\end{lemma}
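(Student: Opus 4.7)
The plan is to prove the three claims separately using Chernoff concentration and union bounds; only part (3) requires a more careful two-stage edge exposure, which I expect to be the main technical point.

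\textbf{Part (2).} For each $u\in U$, $\deg(u) \sim \mathrm{Bin}(s,d_L/s)$ with mean $d_L$; for each $v\in V$, $\deg(v)\sim \mathrm{Bin}(n,d_L/s)$ with mean $d_R = nd_L/s$. Both means are at least $20\log_e n$. Standard multiplicative Chernoff bounds with $\delta=1/2$ give per-vertex failure probability $\exp(-\Omega(d_L))$; with the constant in $d_L\geq 20\log_e n$, this is at most $n^{-c}$ for some fixed $c>1$, so a union bound over the $n+s\leq 2n$ vertices gives overall failure $O(1/n)$.

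\textbf{Part (1).} Since edges incident to distinct $u\in U$ are independent, for any $S\subseteq U$ with $|S|=k$ and any $T\subseteq V$ with $|T|=t$,
$\Pr[N(S)\subseteq T] = (1-d_L/s)^{k(s-t)}\leq e^{-kd_L(s-t)/s}$.
If the optimum is less than $\min(k,s)/2$, then some such pair $(S,T)$ exists with $t<\min(k,s)/2\leq s/2$, so $(s-t)/s\geq 1/2$ and the probability is at most $e^{-kd_L/2}\leq n^{-10k}$. A crude union bound over $\binom{n}{k}\binom{s}{t}\leq n^{k+t}\leq n^{3k/2}$ choices of $(S,T)$ (using $t\leq k/2$), summed over the $\leq n$ relevant values of $t$, bounds the failure probability by $n\cdot n^{3k/2-10k}=n^{1-17k/2}=O(1/n)$ for $k\geq 1$.

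\textbf{Part (3).} The main obstacle is that for a fixed pair of endpoints $u_1,u_2$, the probability that a specific $u_3\in U$ completes a length-$4$ path between them is small, and it depends on the random sets $N(u_1),N(u_2)$, so I would expose the randomness in two stages. First expose the edges incident to $u_1$ and $u_2$, revealing $A:=N(u_1)$ and $B:=N(u_2)$ with $|A|,|B|\in[d_L/2,\,3d_L/2]$ by part (2). A length-$4$ path $u_1\text{-}v_1\text{-}u_3\text{-}v_2\text{-}u_2$ exists whenever some $u_3\in U\setminus\{u_1,u_2\}$ has a neighbor in both $A$ and $B$. Conditionally on $(A,B)$, the edges out of distinct $u_3$ are independent, and for each fixed $u_3$ the events ``$u_3$ has a neighbor in $A$'' and ``$u_3$ has a neighbor in $B$'' are both monotone increasing in the $u_3$-edges, so by FKG their joint probability is at least $(1-(1-d_L/s)^{d_L/2})^2$. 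For $s\leq\sqrt{n}$ and $d_L=\Theta(\log n)$ we have $d_L^2/s=o(1)$, so $1-(1-d_L/s)^{d_L/2}\geq d_L^2/(4s)$, and the expected number of valid $u_3$ is at least $(n-2)\cdot d_L^4/(16s^2)\geq d_L^4/16=\Omega(\log^4 n)$ using $s^2\leq n$. Independence across $u_3$ together with a Chernoff bound then makes the probability of no valid $u_3$ at most $n^{-\omega(1)}$; a union bound over the $\binom{n}{2}$ endpoint pairs preserves overall failure $O(1/n)$. (A minor bookkeeping point, that the two chosen neighbors $v_1\in A$ and $v_2\in B$ of $u_3$ be distinct so that the path is simple, is handled by the fact that $|A\cap B|=O(d_L^2/s)=o(1)$ in expectation, or alternatively by working with $A\setminus B$ and $B$.)
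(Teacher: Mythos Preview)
Your argument is correct. Parts (1) and (2) match the paper's proof essentially verbatim (the paper also reduces to $k\le s$ and bounds $\Pr[N(S)\subseteq T]$ by $e^{-d_Lk/2}$, then union-bounds over $n^{3k/2}$ choices; degrees are handled by the same Chernoff-plus-union-bound).

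For Part (3) you take a somewhat different route. The paper's proof is shorter: after exposing the edges at $u_1,u_2$ it simply fixes \emph{one} neighbor $v_1\in N(u_1)$ and one $v_2\in N(u_2)\setminus\{v_1\}$ (which exists since $\deg u_2\ge 2$), and then observes that, over the yet-unexposed edges from the remaining $u'\in U\setminus\{u_1,u_2\}$, the events $(u',v_1)\in E$ and $(u',v_2)\in E$ are genuinely independent with probability $(d_L/s)^2\ge d_L^2/n$. Hence $\Pr[\text{no common neighbor}]\le(1-d_L^2/n)^{n-2}\le e^{-\Omega(d_L^2)}$, and the union bound over pairs finishes. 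This sidesteps both the FKG step and the $v_1\neq v_2$ bookkeeping entirely, and works uniformly for all $s\le\sqrt{n}$. Your approach---looking at the full neighborhoods $A,B$ and using Harris/FKG to lower-bound $\Pr[u_3\text{ hits both}]$---is also valid and actually yields a stronger tail ($n^{-\omega(1)}$ rather than $e^{-\Omega(\log^2 n)}$), but at the cost of the extra correlation argument and the simplicity patch. One small caution: your remark that $|A\cap B|=O(d_L^2/s)=o(1)$ in expectation is only true when $s\gg d_L^2$ (e.g.\ $s=\sqrt{n}$, the regime the paper actually uses); for general $s\le\sqrt{n}$ you should rely on your second fix (or just pick a single $v_1\ne v_2$ as the paper does).
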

\begin{proof}
\noindent 1. Note that we may assume that $k \leq s$: if $k > s$, we let $k' = s$ and prove that every subset $S'\subset U$
of size $k'$ has at least $\min(s,k')/2 = s/2$ neighbors in $V$; we get, $|N(S)| \geq |N(S')| \geq s/2$.

If the optimal solution has cost less then $k/2$, then there exist a subset $S\subset U$ of size $k$
and a subset $T\subset V$ of size $\lfloor k/2\rfloor$ such that $N(S)\subset T$ ($S$ is the optimal solution and $T$ is a superset of $N(S)$
of size $\lfloor k/2\rfloor$). Let us bound the probability that $N(S) \subset T$ for fixed sets $S$ and $T$ with $|S|=k$, $|T| = \lfloor k/2\rfloor$.

If $N(S) \subset T$, then there are no edges between $S$ and $V\setminus T$. There are at least
$ks/2$ pairs of vertices $(u,v)\in S\times (V\setminus T)$; the probability that there is no edge between any of them is
at most
$ (1 - d_L/s)^{ks/2} \leq e^{-d_L k/2}$.

There are at most $n^k$ ways to choose a subset $S$ of size $k$ in $U$ and
$s^{k/2} \leq n^{k/2}$ to choose a subset $T$ of size $\lfloor k/2\rfloor$ in $V$. By the union bound,
$\prob(N(S) \subset T \text{ for some } S\subset U, T\subset V, |S| = k, |T| = \lfloor k/2\rfloor)
\leq n^{\frac{3}{2}k} e^{- d_L k/2} = n^{\frac{3}{2}k - 10 k} = n^{-\Omega(k)}$.

\medskip
\noindent 2. Consider a vertex $u$. For every $v\in V$, let $\xi_v$ be the indicator variable for the event that
$(u,v) \in E$. Note that all random variables $\{\xi_v\}_{v\in V}$ are independent,
and $\Probab{\xi_v = 1} =d_L/s$.
By the Chernoff bound,
%\begin{align*}
$$
\Probab{\deg u \notin \left[\frac{d_L}{2}, \frac{3d_L}{2}\right]} %&
= \Probab{\Bigl|\sum_{v\in V} \xi_v - d_L\Bigr| > \frac{d_L}{2}} %\\&
\leq 2 e^{-d_L/10}.
$$
%\end{align*}
By the union bound, 
$$%\begin{align*}
\prob\bigl(\deg u \notin [d_L/2, 3/2 d_L] %&
\text{ for some }u\in U\bigr) %\\ &
\leq 2 e^{-d_L/10} n = O(1/n).
$$%\end{align*}
Similarly, we show that
$$\Probab{\deg v \notin [d_R/2, 3/2 d_R] \text{ for some }v\in V} \leq  O(1/n).$$

\medskip

\noindent 3. Consider  $u_1,u_2\in U$. We assume that item 2 holds; in particular, $\deg u_1 \geq 2$ and
$\deg u_2 \geq 2$. Choose a random neighbor $v_1\in V$ of $u_1$, and a random neighbor $v_2\in V\setminus\{v_1\}$ of $u_2$.
We are going to show that $v_1$ and $v_2$ have a common neighbor $u'\in U\setminus \{u_1,u_2\}$ with high probability (given $v_1$ and $v_2$), and, therefore,
$u_1$ and $u_2$ are connected with a path $u_1\to v_1\to u'\to v_2 \to u_2$ of length 4 with high probability.
Note that events $\{(v_1, u') \in E\}$ for all $u'\in U\setminus \{u_1,u_2\}$,
events $\{(v_2, u') \in E\}$ for all $u'\in U\setminus \{u_1,u_2\}$ are independent.
Therefore, the probability that for a fixed $u'\in U$, $(v_1, u') \in E$ and $(v_2, u') \in E$ is $(d_L/s)^2 \geq  d_L^2/n$;
the probability that for some $u'\in U$, $(v_1, u') \in E$ and $(v_2, u') \in E$ is
$1 - (1 - d_L^2/n)^{n-2} \geq  1 - e^{-\Omega(d_L^2)} \geq 1 - O(1/n)$.
\end{proof}

\noindent In Section~\ref{sec:sdp}, we describe the standard SDP relaxation for SSBVE and prove that its value for the graph $G$ is $O(\max(d_L^2, d_Lks/n))$; therefore, for $k=s=n^{\delta}$ and $d_L = \Theta(\log n)$, the SDP integrality gap is
$\tilde O(\min(k,n/k))$ (see Theorem~\ref{thm:main-sdp}).
In Section~\ref{sec:SA}, we describe the SA relaxation and show that its gap is at least $n^{1/4 - O(\varepsilon)}$ after $\varepsilon \log n/\log \log n$ rounds (see Theorem~\ref{thm:LP-main}).

\subsection{SDP relaxation and solution}\label{sec:sdp}
In this section, we construct an SDP solution for a random bipartite graph $G$ described in the previous section.
It will be convenient for us to assume that the graph $G$ is biregular.
To make this assumption, we do the following.
First we choose $n$, $k$, $d_L$ and $d_R$ so that $d_L$ and $d_R$ are even integers.
By Lemma~\ref{lem:basic-properties}, $\deg u \leq d'_L = 3d_L/2$ and $\deg v \leq d'_R = 3d_R/2$ for every
$u\in U$ and $v\in V$.  We greedily add extra edges to the graph so that the obtained graph $G'$ is biregular with
vertex degrees $d'_L$ in $U$ and $d'_R$ in $V$. We construct an SDP solution for the graph $G'$;
this solution is also a feasible solution for the original graph $G$.
To simplify the notation, we denote the obtained graph by $G$ and the degrees of its vertices by $d_L$ and $d_R$.
Consider the following SDP relaxation for SSBVE.
\begin{align*}
\intertext{\textbf{SDP relaxation for SSBVE}}
\text{min }& \sum_{v \in V} \|\bar v\|^2\\
\text{s.t.}& \\
&\sum_{u\in U } \bar u = k \bar v_0\\
&\sum_{u\in U} \|\bar u\|^2 = k\\
&\langle \bar u, \bar v\rangle =\|\bar u\|^2 && \text{for all } (u,v) \in E, u\in U, v\in V\\
&\langle \bar w, \bar v_0 \rangle = \|\bar w\|^2 && \text{for all } w\in U \cup V\\
&\langle \bar w_1, \bar w_2 \rangle \geq 0 && \text{for all } w_1, w_2 \in U \cup V\\
&\|v_0\|^2 = 1
\end{align*}
Let us now state and prove the main result in this section, which implies Theorem~\ref{thm:SDP-main}.
\begin{theorem}\label{thm:main-sdp}
Assume that $d_L^2\ll k \ll n$ and $1\ll s$ (where $a \ll b$ denotes that $b/a \geq \log^2 n$).

The value of the SDP solution for $G$ is at most $4\max(d_L^2, d_Lks/n)$. In particular, when $k=s=n^{\delta}$ (with $\delta\in(0,1)$) and $d_L$ is $\mathrm{polylog}(n)$,
the SDP value is $\tilde O(\max(n^{2\delta-1},1))$ and the optimal combinatorial solution has value $\Omega(n^\delta)$. Thus, the gap is
$$\tilde \Omega(n^{\min(\delta,1-\delta)}) = \tilde \Omega(\min(k,n/k)),$$
\end{theorem}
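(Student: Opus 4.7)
The plan is to exhibit an explicit feasible SDP solution of value $O(\max(d_L^2, d_Lks/n))$; the gap lower bound $\tilde\Omega(\min(k,n/k))$ then follows immediately from Lemma~\ref{lem:basic-properties}(1). The key idea is to split the component of each $\bar u$ and $\bar v$ orthogonal to $\bar v_0$ into two pieces living in orthogonal subspaces: a ``graph'' piece built from the biadjacency matrix, and an ``identity'' piece built from mutually nearly-orthogonal spikes. Using only the graph piece gives a naive bound of $\tilde O(kd_L)$; combining it with the identity piece is what achieves the improved bound.

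More concretely, I would pick a unit vector $\bar v_0$ and orthonormal systems $\{h_v\}_{v\in V}$, $\{g_u\}_{u\in U}$ living in two subspaces orthogonal to each other and to $\bar v_0$. Setting
\begin{align*}
\bar u &= \tfrac{k}{n}\bar v_0 + \lambda_1\Bigl(\textstyle\sum_{v\in N(u)}h_v - \tfrac{d_L}{s}\sum_{v\in V}h_v\Bigr) + \lambda_2\Bigl(g_u - \tfrac{1}{n}\textstyle\sum_{u'\in U}g_{u'}\Bigr),\\
\bar v &= y\,\bar v_0 + \mu_1 h_v + \mu_2\Bigl(\textstyle\sum_{u\in N(v)}g_u - \tfrac{d_R}{n}\sum_{u'\in U}g_{u'}\Bigr),
\end{align*}
the centering immediately yields $\sum_u \bar u = k\bar v_0$. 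I would then tune $(\lambda_1,\lambda_2)$ to force $\|\bar u\|^2 = \langle\bar u,\bar v_0\rangle = k/n$ (a single linear relation in $\lambda_1^2,\lambda_2^2$), and tune $(\mu_1,\mu_2,y)$ so that $\|\bar v\|^2 = y$ and the edge identity $\langle \bar u,\bar v\rangle = k/n$ holds for every $(u,v)\in E$. Thanks to the biregularity obtained by the preprocessing before the theorem statement, this last identity collapses to the single linear relation $\lambda_1\mu_1(1-d_L/s) + \lambda_2\mu_2(1-d_R/n) = (k/n)(1-y)$, independent of which edge was chosen.

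With those identities enforced, the remaining freedom is a one-parameter family in $\lambda_1$, and a short Lagrangian calculation gives $y = (k/n)^2/(\lambda_1^2 + \lambda_2^2/d_R)$ at the optimum, with $\mu_i$ proportional to $\lambda_i$ suitably weighted by $d_R$. Pushing all mass to $\lambda_1$ would give $sy = sd_Lk/n$, which is the second term of the theorem. The binding non-negativity constraint turns out to be $\langle\bar u,\bar u'\rangle\geq 0$ in the worst case $N(u)\cap N(u') = \emptyset$; it forces $\lambda_1^2 \lesssim (k/n)^2 s/d_L^2$. When $ks \geq nd_L$ this threshold is slack and we obtain $sy = O(d_L k s/n)$; when $ks < nd_L$, pushing $\lambda_1$ to the threshold and placing the remaining budget into $\lambda_2$ yields $sy = O(d_L^2)$ after recomputing $y$. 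The two regimes together give the claimed bound $O(\max(d_L^2,d_Lks/n))$.

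The main obstacle, I expect, is the simultaneous verification of all three families of non-negativity constraints using only the pseudo-random properties of $G$. The inner products $\langle\bar u,\bar u'\rangle$ and $\langle \bar v,\bar v'\rangle$ depend on the codegrees $|N(u)\cap N(u')|$ and $|N(v)\cap N(v')|$, which are Binomial random variables concentrating around $d_L^2/s$ and $d_R^2/n$; one must show via Chernoff plus a union bound (using $d_L = \Omega(\log n)$) that the additive fluctuations are dominated by the slack in the inequality $\lambda_1^2 \lesssim (k/n)^2 s/d_L^2$, which in turn is where the hypothesis $d_L^2 \ll k$ enters. The vertex-edge non-negativity for non-edges reduces to a clean algebraic inequality that is automatic at the Lagrangian optimum above. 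Finally, it suffices to build the solution on the biregularized graph $G'$ described just before the SDP relaxation, since any SDP solution for $G'$ is automatically feasible for the edge-subgraph $G$.
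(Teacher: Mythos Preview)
Your approach is correct and reaches the same bound, but the execution differs from the paper's in a meaningful way. The paper never constructs vectors: it writes down the target Gram matrix $X$ directly --- with $U\times U$ entries depending linearly on the common-neighbor count $\nu_{u_1u_2}$, a \emph{constant} $V\times V$ block (diagonal $\tau$, off-diagonal $\tau/2$), and two fixed values in the $U\times V$ block according to edge/non-edge --- and then proves $X\succeq 0$ by an algebraic decomposition $X = Y + Z + \sum_{v\in V} X^{(v)}$, each summand checked PSD via a $2\times 2$ determinant. In particular, in the paper's solution $\langle\bar v,\bar v'\rangle$ does not depend on codegrees at all; all graph structure is carried by the $U\times U$ block and the per-$v$ rank-$2$ pieces. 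No probabilistic ingredient is used beyond the biregularization step.

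Your explicit vector construction is a legitimate alternative and, once the parameters are fixed, is arguably more transparent about \emph{why} the two regimes $d_L^2$ and $d_Lks/n$ appear: they correspond exactly to which of the two upper bounds on $\lambda_1^2$ (the non-negativity threshold $(k/n)^2 s/d_L^2$ versus the norm budget $k/(nd_L)$) is binding. One simplification you should make: the Chernoff step you anticipate for codegrees is unnecessary. In both $\langle\bar u,\bar u'\rangle$ and $\langle\bar v,\bar v'\rangle$ the codegree enters with a \emph{non-negative} coefficient ($\lambda_1^2$ and $\mu_2^2$ respectively), so the worst case is codegree zero and all non-negativity checks are deterministic given biregularity; likewise the non-edge value $\langle\bar u,\bar v\rangle$ is the same for every non-edge. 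What actually uses $d_L^2\ll k$ is only that you must take $\lambda_1^2$ strictly below $(k/n)^2 s/d_L^2$ by a $(1-\Theta(1/k))$ factor to absorb the $-\lambda_2^2/n$ term in $\langle\bar u,\bar u'\rangle$, and this perturbation moves $y$ only at lower order.
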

\begin{proof}
Let $\nu_{u_1u_2}$ be the number of common neighbors of vertices $u_1,u_2\in V$. Let $\alpha = \frac{1}{2}\min(d_L n/(ks), 1)$.
Define matrices $A$, $B$, and $C$.
Let $A=(a_{u_1u_2})$ be a square $n\times n$ matrix with entries
$$
a_{u_1u_2} =
\begin{cases}
k/n, & \text{if } u_1 = u_2\\
\frac{\alpha k^2}{d_L (d_R-1)n} \nu_{u_1u_2} +
\frac{(1-\alpha)k^2-k}{n(n-1)}
, &\text{if } u_1\neq u_2
\end{cases}
$$
$B$ be an $s\times s$ matrix with diagonal entries equal to $\tau = 2d_L^2/(\alpha s) = 4\max(d_L^2/s, d_L k/n)$
and off-diagonal entries equal to $\tau/2$,
and  $C$ be an $n\times s$ matrix with
$$c_{uv} =
\begin{cases}
k/n,& \text{if } (u,v) \in E,\\
k(\tau - d_R/n)/(n - d_R), &\text{if } (u,v) \notin E.
\end{cases}
$$
Note that $k(\tau - d_R/n)/(n - d_R) > 0 $ since $\tau - d_R/n = 2d_L^2/(\alpha s) - nd_L/(ns) > 0$.

Finally, let $X = \begin{pmatrix} A & C\\C^T & B \end{pmatrix}$.
In Lemma~\ref{lem:psd}, we will show  that $X$ is positive semidefinite. Now we
describe how to construct an integrality gap assuming that $X$ is positive semidefinite.
Since $X$ is positive semidefinite, it is the Gram matrix of some vectors $\{\bar u, \bar v:u\in U, v\in V\}$; specifically,
\begin{align*}
\langle \bar u_1, \bar u_2\rangle &= a_{u_1u_2} && \text{for } u_1, u_2\in U\\
\langle \bar v_1, \bar v_2\rangle &= b_{v_1v_2} && \text{for } v_1, v_2 \in V\\
\langle \bar u, \bar v\rangle &= c_{uv} && \text{for } u\in U, v \in V.
\end{align*}
Define $v_0 = \frac{1}{k}\sum_{u\in U} \bar u$. We obtain an SDP solution. Let us verify that it satisfies all the SDP constraints.
\begin{enumerate}
\item Constraint $\sum_{u\in U } \bar u = k \bar v_0$ holds since $v_0 = \frac{1}{k}\sum_{u\in U} \bar u$.
\item We show that $\sum_{u\in U} \|\bar u\|^2 = k$. We have,
$$\sum_{u\in U} \|\bar u\|^2 = \sum_{u\in U} a_{uu} = \sum_{u\in U} \frac{k}{n} = k.$$
\item For every edge $(u,v)$, we have,
$$\langle \bar u, \bar v\rangle = c_{uv} = k/n = c_{uu} = \|\bar u\|^2,$$
as required.
\item For every $u\in U$, we have
\begin{align*}
\langle \bar u, \bar v_0 \rangle = \frac{1}{k} \sum_{u'\in U} \langle \bar u, \bar u'\rangle &= \frac{1}{k} \sum_{u'\in U} a_{uu'} \\%conf
&{} =\frac{1}{k}\Bigl(\frac{k}{n} +
 \frac{(1-\alpha)k^2-k}{n(n-1)} (n-1) 
 %\\&
%\qquad\quad{}  
 +\frac{\alpha k^2}{d_L (d_R-1)n}\sum_{u'\in U\setminus\{u\}}\nu_{uu'}  \Bigr)\\
&{}=\frac{1}{k}\Bigl(\frac{k}{n} +\frac{(1-\alpha)k^2-k}{n}%\\
%&\qquad\quad {}
+ \frac{\alpha k^2}{d_L (d_R-1)n} \cdot d_L(d_R-1)
\Bigr) \\&= \frac{k}{n}.
\end{align*}

For every $v\in V$, we have
\begin{align*}
\langle \bar v, \bar v_0 \rangle = \frac{1}{k} \sum_{u'\in U} \langle \bar v, \bar u'\rangle &= \frac{1}{k} \sum_{u'\in U} c_{vu'}\\
&=\frac{1}{k}\Bigl(\sum_{u'\in U:(u',v) \in E} \frac{k}{n}%\\
%&\qquad 
+ \sum_{u'\in U:(u',v) \notin E} \frac{k(\tau - d_R/n)}{n - d_R}\Bigr)\\&
=\frac{1}{k}\left(d_R \times \frac{k}{n} + (n-d_R)\times  \frac{k(\tau - d_R/n)}{n - d_R}\right) \\
&= \tau = \|\bar v\|^2.
\end{align*}

\item Obviously, $\langle \bar w_1, \bar w_2 \rangle \geq 0$ since all entries of matrices $A$, $B$, and $C$ are non-negative.
\item Finally, we verify that $\|v_0\|^2 = 1$. We have,
$$ \|v_0\|^2 = \frac{1}{k} \sum_{u\in U} \langle \bar u, \bar v_0\rangle = \frac{1}{k} \sum_{u\in U} \|\bar u\|^2 = \frac{1}{k} \cdot n \cdot \frac{k}{n} = 1.$$
\end{enumerate}

The value of the SDP solution is
$$\mathsf{SDP} = \sum_{v\in V} \|\bar v\|^2 = s \tau = 4\max(d_L^2, d_L k s/n).$$
\end{proof}
\paragraph{Proof that $X$ is positive semidefinite.}
\begin{lemma}\label{lem:psd}
Matrix $X$ is positive semidefinite.
\end{lemma}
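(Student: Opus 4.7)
The plan is to exhibit $X$ as a sum of three manifestly PSD matrices $X = yy^T + Z_1 + Z_2$. The intuition comes from the structure of the SDP solution itself in the preceding proof: the distinguished vector $\bar v_0$ contributes a rank-one piece, the bipartite edge structure contributes a Gram-product piece, and the remainder should be block diagonal with simple structure.

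For the rank-one piece, let $y \in \mathbb{R}^{n+s}$ be defined by $y_u = k/n$ for $u \in U$ and $y_v = \tau$ for $v \in V$; this is exactly the vector of $\bar v_0$-projections, and $yy^T \succeq 0$ trivially. For the second piece, let $M \in \{0,1\}^{n \times s}$ denote the bipartite adjacency matrix and let $\hat M = M - (d_R/n)\,\mathbf{1}_n \mathbf{1}_s^T$ be its centered version, which by biregularity has zero row and column sums. Set
\begin{equation*}
Z_1 \;=\; \lambda \begin{pmatrix} \hat M \\ c\, I_s \end{pmatrix} \begin{pmatrix} \hat M^T & c\, I_s \end{pmatrix}
\end{equation*}
with $\lambda = \frac{\alpha k^2}{d_L(d_R-1)n}$ (chosen to match the coefficient of $\nu_{u_1 u_2}$ in $A$) and $c$ chosen so that the $Z_1$ cross-block reproduces the cross-block of $X - yy^T$; a short calculation (the cross-block of $X-yy^T$ equals $\frac{k(1-\tau)}{n-d_R}\hat M$) gives $c = \frac{k(1-\tau)}{\lambda(n-d_R)}$. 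Then $Z_1$ is PSD as an outer product.

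It then remains to check that $Z_2 := X - yy^T - Z_1$ is PSD, which I would argue by showing it is block-diagonal with PSD blocks $A_2$ and $B_2$. The cross-block of $Z_2$ vanishes by choice of $c$. Using $(\hat M \hat M^T)_{u_1 u_2} = \nu_{u_1 u_2} - d_L^2/s$ for $u_1 \neq u_2$, the $\nu_{u_1 u_2}$ terms cancel and $A_2$ takes the simple form $a I + b(J - I)$. A direct calculation shows that the constant $\frac{(1-\alpha)k^2 - k}{n(n-1)}$ in the definition of $A$ is calibrated precisely so that each row of $A_2$ sums to $0$; hence $A_2$ has eigenvalue $0$ along $\mathbf{1}$ and eigenvalue $a - b$ on $\mathbf{1}^\perp$. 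Similarly $B_2$ has the form $p' I + q' J$ with $q' = \tau/2 - \tau^2 \geq 0$.

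The main obstacle will be verifying $a - b \geq 0$, which unwinding becomes the inequality
\begin{equation*}
\frac{k\bigl(n - (1-\alpha)k\bigr)}{n(n-1)} \;\geq\; \lambda d_L \;=\; \frac{\alpha k^2}{(d_R-1)\, n}.
\end{equation*}
This reduces (for large $n$) to $\alpha \leq d_L n / (ks)$, which is built into the definition $\alpha = \tfrac{1}{2}\min(d_L n/(ks),\,1)$. The analogous bound $\tau/2 - \lambda c^2 \geq 0$ needed for $B_2$ follows from $\lambda c^2 \approx (1-\tau)^2\tau/2 < \tau/2$ once $\tau$ is small, which is ensured by the hypotheses $d_L^2 \ll k \ll n$. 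All the remaining identities are routine algebra; the art is in guessing the decomposition and the parameters $\lambda$ and $c$ so that both cancellations happen simultaneously.
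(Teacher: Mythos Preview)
Your decomposition is correct, and in fact the two scalar inequalities you isolate are exactly the ones the paper ends up checking---your $a-b\geq 0$ is literally the paper's $\zeta\geq 0$, and your $\tau/2-\lambda c^2\geq 0$ is, after the same simplification $\frac{k}{n}-\frac{k(\tau-d_R/n)}{n-d_R}=\frac{k(1-\tau)}{n-d_R}=\lambda c$, precisely the paper's $\det M_1>0$. The organizing principle, however, is different. The paper writes $X=Y+Z+\sum_{v\in V}X^{(v)}$, one piece per right-vertex $v$ supported on $N(v)\cup\{v\}$, plus a ``two-block constant'' matrix $Y$ and a diagonal $Z$; PSD of each $X^{(v)}$ reduces to the $2\times 2$ matrix $M_1$, PSD of $Y$ to another $2\times 2$ matrix $M_2$, and $Z$ to $\zeta\geq 0$. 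Your route---peel off the rank-one $\bar v_0$ contribution $yy^T$, then a single Gram piece $\lambda\binom{\hat M}{cI}\binom{\hat M}{cI}^{\!T}$ built from the centered adjacency (which by biregularity has zero row and column sums, killing the $\nu_{u_1u_2}$ terms and forcing the remainder to be block-diagonal with constant blocks)---is cleaner in that it avoids the separate $M_2$ check entirely and makes transparent why the row-sums of $A_2$ vanish. One small caveat: your asymptotic $\lambda c^2\approx (1-\tau)^2\tau/2$ hides a factor $\frac{n^2}{(n-d_R)^2}\cdot\frac{d_R-1}{d_R}$, so to make $\lambda c^2<\tau/2$ rigorous you need $\tau>d_R/n=d_L/s$, which does follow from $\tau\geq 4d_L^2/s$ and $d_L\geq 1$; this is worth stating explicitly rather than folding into ``$\tau$ small''.
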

\begin{proof}
Now, we show that $X$ is positive semidefinite.
For every $v\in V$, define a symmetric $(n+s)\times (n+s)$ matrix $X^{(v)}$ with entries $X_{w_1w_2}^{(v)}$:
$$
x_{w_1w_2}^{(v)}=
\begin{cases}
\frac{\alpha k^2}{d_L(d_R-1)n}, &\text{if } w_1,w_2\in U \text{ and both } \\
&w_1 \text{ and } w_2 \text{ are adjacent to } v;\\ 
&\text{possibly, } w_1=w_2 \\
\frac{k}{n} - \frac{k(\tau-d_R/n)}{n-d_R}, &\text{if } w_1 \in U, w_2 =v \text{ and } \\
&w_1, w_2 \text{ are adjacent}\\
\frac{\tau}{2}, &\text{if } w_1 = w_2 = v\\
0, &\text{otherwise}
\end{cases}
$$
Note that all columns and rows for vertices $u\in V$ non-adjacent to $v$ consist of zeros; all
columns and rows for vertices $v'\in V\setminus\{v\}$ also consist of zeros. Ignoring these zero columns and rows $X^{(v)}$ equals
\newdimen\extrabelowhlineskip
\extrabelowhlineskip=\normalbaselineskip \advance\extrabelowhlineskip by 0.2em\relax
$$
\left(
\begin{array}{ccc|c}
\frac{\alpha k^2}{d_L(d_R-1)n} & \cdots & \frac{\alpha k^2}{d_L(d_R-1)n} & \frac{k}{n} - \frac{k(\tau-d_R/n)}{n-d_R}\\
\vdots& & \vdots &\vdots\\
\frac{\alpha k^2}{d_L(d_R-1)n} & \cdots & \frac{\alpha k^2}{d_L(d_R-1)n} & \frac{k}{n} - \frac{k(\tau-d_R/n)}{n-d_R}\\[0.4em]
\hline\rule{0pt}{\extrabelowhlineskip}
\frac{k}{n} - \frac{k(\tau-d_R/n)}{n-d_R} & \cdots  &
\frac{k}{n} - \frac{k(\tau-d_R/n)}{n-d_R} & \frac{\tau}{2}
\end{array}\right)
$$
Observe that $X^{(v)}$ is positive semidefinite if and only if the matrix
$$M_1 = \begin{pmatrix} \frac{\alpha k^2}{d_L(d_R-1)n} &  \frac{k}{n} - \frac{k(\tau-d_R/n)}{n-d_R}\\
 \frac{k}{n} - \frac{k(\tau-d_R/n)}{n-d_R} & \frac{\tau}{2}\end{pmatrix}$$
 is positive semidefinite. We verify that $M_1$ is positive semidefinite. First, the diagonal entries of $M_1$ are positive; second, its determinant is at least
 (recall that    $\tau = 2d_L^2/(\alpha s)$ and $d_R = nd_L/s$)
\begin{multline*}
\frac{\alpha k^2}{d_L(d_R-1)n}  \times \frac{\tau}{2} - \left(\frac{k}{n} - \frac{k(\tau-d_R/n)}{n-d_R}\right)^2\\
> \frac{\alpha k^2}{d_L(nd_L/s)n}\times\frac{2d_L^2}{\alpha s} - \frac{k^2}{n^2} = \frac{k^2}{n^2} - \frac{k^2}{n^2} = 0.
\end{multline*}
We conclude that $X^{(v)}$ is positive semidefinite.
Additionally, we define two matrices, a symmetric matrix $Y$ and diagonal matrix $Z$:% (see Figure~\ref{fig:YZ}).
%\begin{figure*}
\begin{align*}
Y&=\left(
\begin{array}{ccc|ccc}
\frac{(1-\alpha) k^2-k}{n(n-1)} & \cdots & \frac{(1-\alpha) k^2-k}{n(n-1)} & \frac{k(\tau - d_R/n)}{n-d_R} & \cdots & \frac{k(\tau - d_R/n)}{n-d_R}\\
\vdots& &\vdots &\vdots&&\vdots\\
\frac{(1-\alpha) k^2-k}{n(n-1)} & \cdots & \frac{(1-\alpha) k^2-k}{n(n-1)} & \frac{k(\tau - d_R/n)}{n-d_R} & \cdots & \frac{k(\tau - d_R/n)}{n-d_R}\\[0.4em]
\hline\rule{0pt}{\extrabelowhlineskip}
\frac{k(\tau - d_R/n)}{n-d_R} & \cdots & \frac{k(\tau - d_R/n)}{n-d_R} & \tau/2&\cdots&\tau/2\\
\vdots& &\vdots &\vdots&&\vdots\\
\frac{k(\tau - d_R/n)}{n-d_R} & \cdots & \frac{k(\tau - d_R/n)}{n-d_R}& \tau/2&\cdots&\tau/2
\end{array}\right)
\\[0.5em]
%\end{align*}\begin{align*}
Z&=\left(
\begin{array}{ccc|ccc}
\zeta &  &  & & & \\
& \ddots& &&&\\
&  & \zeta &  & \\[0.4em]
\hline\rule{0pt}{\extrabelowhlineskip}
&  & & 0&&\\
& & &&\ddots&\\
&  & & &&0
\end{array}\right)
\qquad \text{where } \zeta = \frac{k}{n}-\frac{\alpha k^2}{(d_R-1)n} - \frac{(1-\alpha)k^2-k}{n(n-1)}.
\end{align*}
%\caption{Matrices $Y$ and $Z$.}
%\label{fig:YZ}
%\end{figure*}
We prove that matrices $Y$ and $Z$ are positive semidefinite. Observe that matrix $Y$ is positive semidefinite if and only if the following matrix $M_2$ is positive semidefinite
$$
M_2=\begin{pmatrix}
\frac{(1-\alpha)k^2-k}{n(n-1)} & \frac{k(\tau - d_R/n)}{n-d_R}\\[0.2em]
\frac{k(\tau - d_R/n)}{n-d_R} & \tau/2.
\end{pmatrix}.
$$
We verify that $M_2$ is positive semidefinite by computing its determinant (note that $\tau = 2d_L^2/(\alpha s) < 1/8$ and $\alpha \leq 1/2$):
\begin{align*}
&\frac{(1-\alpha)k^2-k}{n(n-1)} \times \frac{\tau}{2} - \left(\frac{k(\tau - d_R/n)}{n-d_R}\right)^2 \\
&\quad \geq
\frac{(1-\alpha)k^2-k}{n^2} \times \frac{\tau}{2} - \left(\frac{k\tau}{n}\right)^2\\
&\quad=\frac{k^2\tau}{2n^2}\left((1-\alpha) - \frac{1}{k}- 2\tau \right)\geq 0.
\end{align*}
 To verify, that $Z$ is positive semidefinite we check that $\zeta \geq 0$:
\begin{align*}
\zeta &= \frac{k}{n}-\frac{\alpha k^2}{(d_R-1)n} - \frac{(1-\alpha)k^2-k}{n(n-1)} \\
&\geq \frac{k}{n}\left(1-\frac{\alpha ks}{d_Ln} - \frac{(1-\alpha)k}{n-1}\right)\geq 0,
\end{align*}
 since $2\alpha ks \leq d_Ln$ and $k < (n-1)/2$.

  Finally, we prove that $X = Y+ Z + \sum_{v\in V} X^{(v)}$ and, therefore, $X$ is positive semidefinite.
We do that by showing that $x_{w'w''} = y_{w'w''}+ z_{w'w''} + \sum_{v\in V} x^{(v)_{w'w''}}$ for all $w'$ and $w''$.
%We consider the possible cases in Figure~\ref{fig:cases}.
Consider the possible cases:
$$
\setlength\arraycolsep{4pt}%\def\arraystretch{1.7}
\begin{array}{l||c|c|c||c}\hline\rule{0pt}{\extrabelowhlineskip}
\text{case} & \sum_{v\in V} x^{(v)}_{w'w''} & y_{w'w''} & z_{w'w''} &  x_{w'w''}\\[0.4em]
\hline\rule{0pt}{\extrabelowhlineskip}
w'=w''\in U & \frac{\alpha k^2}{d_L(d_R-1)n} d_L & \frac{(1-\alpha)k^2-k}{n(n-1)} & \frac{k}{n}-\frac{\alpha k^2}{(d_R-1)n} - \frac{(1-\alpha)k^2-k}{n(n-1)} & k/n\\[0.8em]
w'\neq w''\in U & \frac{\alpha k^2\cdot \nu_{w'w''}}{d_L(d_R-1)n}  & \frac{(1-\alpha)k^2-k}{n(n-1)} & 0 & \frac{\alpha k^2 \nu_{w'w''} }{d_L (d_R-1)n} +
\frac{(1-\alpha)k^2-k}{n(n-1)}\\[0.8em]
(w',w'')\in E& \frac{k}{n} - \frac{k(\tau-d_R/n)}{n-d_R} &\frac{k(\tau - d_R/n)}{n-d_R}&0&k/n\\[0.8em]
(w',w'')\in (U\times V) \setminus E& 0 &\frac{k(\tau - d_R/n)}{n-d_R}&0&\frac{k(\tau - d_R/n)}{n - d_R}\\[0.8em]
w'=w''\in V & \tau/2 & \tau/2 & 0 & \tau\\[0.8em]
w'\neq w''\in V & 0 & \tau/2 & 0 & \tau/2\\[0.4em]
\hline
\end{array}
$$
In all of these cases, we have $x_{w'w''} = y_{w'w''}+ z_{w'w''} + \sum_{v\in V} x^{(v)}_{w'w''}$.
\end{proof}

\subsection{Sherali--Adams gap}\label{sec:SA}

\begin{figure}[h]
\begin{align*}
\text{minimize } & \sum_{v\in V} x_v&&\\
\text{subject to:}&\\
&\sum_{u\in U} x_u \geq k&&\\
&x_v \geq x_u&& \text{for all }(u,v) \in E, u\in U, v\in V\\
&0 \leq x_w\leq 1&& \text{for all }w\in U \cup V
%\end{align*}
%\caption{Basic LP relaxation for Bipartite Small Set Vertex Expansion}
%\label{fig:BasciLP}
\\[1.2em]
\hline
\\[-0.2em]
%\end{figure}
%\begin{figure}
%\begin{align*}
\text{minimize } & \sum_{v\in V} x_{\set{v}}&&\\
\text{subject to:}&\\
&\sum_{u\in U} x_{S \cup \set{u},T} \geq k x_{S,T}&&\\
&x_{S \cup \set{v},T} \geq x_{S \cup \set{u},T}&& \text{for all }(u,v) \in E, u\in U, v\in V
\\
&0 \leq x_{S ,T}\leq 1&& \text{for all subsets }S, T \text{ s.t. }|S|+|T| \leq r+1\\
&x_{\varnothing} = 1.
\end{align*}
\caption{Basic and Sherali--Adams LP relaxations for SSBVE. In the SA relaxation,
the constraints hold for all subsets $S,T \subset U \cup V$ such that $|S|+|T|\leq r$, unless specified otherwise.}
\label{fig:SA-LP}
\end{figure}

In this section, we present an $O(n^{1/4-\varepsilon})$ gap for the Sherali--Adams relaxation for SSBVE
after $r=\Omega(\varepsilon \log n/\log \log n)$ rounds. Let us start with describing the basic linear programming (LP) and the lifted Sherali--Adams relaxations for the problem.

In the basic LP presented in Figure~\ref{fig:SA-LP}, we have a variable $x_w$ for every vertex $w\in U\cup V$.
In the Sherali-Adams LP, we have variables $x_S$ for all subsets $S\subset U \cup V$ of size at most $r+1$
and auxiliary variables $x_{S,T}$ that are linear combinations of variables $x_S$:
$$x_{S,T} = \sum_{J\subset T} (-1)^{|J|} x_{S\cup J}
$$
for subsets $S$ and $T$ such that $|S|+|T|\leq r+1$.
In intended integral solution,
$x_S = \prod_{w\in S} x_w$
and
$x_{S,T} = \prod_{w\in S} x_w\prod_{w\in T} (1-x_w)$.
Note that $x_{S,T} = 0$ if $S\cap T \neq 0$ and $x_{S,\varnothing} = x_S$.

As described above, let $G=(U, V, E)$ be a random bipartite graph with $|U|=n$, $|V| = s=\sqrt{n}$,
in which the expected degree of vertices in $U$ equals $d_L = \Omega(\log n)$
and the expected degree of vertices in $V$ equals $d_R = \sqrt{n} d_L$.
Let $k = n^{1/2 - O(\varepsilon)}$ (we will specify the exact of $k$ below).
We proved in Lemma~\ref{lem:basic-properties} that
 every two vertices in $U$ are connected with a path of length $4$ with high probability.
We will assume below that this statement holds.
For a set $S\subset U \cup V$, we denote $S_U = S\cap U$ and $S_V = (S \cap V) \setminus N(S_U)$.

\begin{definition}
Let us say that $({\cal T}, S')$ is a cover for a set $S\subset U \cup V$ if ${\cal T}$ is a tree in $G$ (possibly, ${\cal T}$ is empty), $S'\subset S_V$ (possibly, $S' = \varnothing$), and each vertex in $S_U \cup S_V$ lies in ${\cal T}$ or in $S'$;
we require that if ${\cal T}$ is not empty, it contains at least one vertex from $U$.
The cost of a cover $({\cal T}, S')$ is $|{\cal T}\cap U| +|S'| + 1$. A minimum cover of $S$ is a cover of minimum cost;
we denote the cost of a minimum cover by $\cost(S)$.
\end{definition}

Now we are ready to describe the Sherali--Adams solution.
We assume that $r \leq \varepsilon \log n/\log \log n$.
Let $\alpha = \frac{1}{2(r+1)}$, $\beta = \alpha^{r+1}  = n^{-O(\varepsilon)}$,
and $k = \beta \sqrt{n} /4$.

We define
$$x_S = \beta^{|S_U|} \alpha^{|S_V|} \frac{1}{n^{\cost(S)/4}}.$$
\begin{claim}\label{claim:monotone}
$\cost(S)$ is a non-decreasing function: $\cost(S) \leq cost(\tilde S)$ if $S\subset \tilde S$.
\end{claim}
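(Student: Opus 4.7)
\textbf{Plan for proving Claim~\ref{claim:monotone}.} The natural approach is to take a minimum cover $(\mathcal{T}, \tilde S')$ of $\tilde S$ and transform it into a cover of $S$ whose cost is no larger. Two easy containments to record first: since $S\subset \tilde S$, we have $S_U\subseteq \tilde S_U$, and since $\tilde S'\subseteq \tilde S_V\subseteq V$, every vertex of $\tilde S_U$ must be covered by $\mathcal{T}$; in particular $S_U\subseteq \tilde S_U\subseteq \mathcal{T}$. The subtle issue is on the $V$-side: we do \emph{not} in general have $S_V\subseteq \tilde S_V$, because once $S_U\subsetneq \tilde S_U$ a vertex $v$ can belong to $S_V$ (i.e.\ $v\in S\cap V$ and $v\notin N(S_U)$) while $v\in N(\tilde S_U)$, so that $v\notin \tilde S_V$ and hence $v$ is not required to be covered by $(\mathcal{T},\tilde S')$.

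The construction is therefore: set $S' := \tilde S'\cap S_V$, and let $B := S_V\setminus \tilde S_V$. For every $v\in B$ we have $v\in \tilde S\cap V$ and $v\notin \tilde S_V$, so $v\in N(\tilde S_U)\subseteq N(\mathcal{T}\cap U)$; pick any such neighbor $u_v\in \tilde S_U\cap \mathcal{T}$ and attach $v$ to $\mathcal{T}$ as a new leaf via the edge $(u_v,v)$. Let $\mathcal{T}^*$ be the resulting tree. Then $\mathcal{T}^*$ is still a tree in $G$ (we only added $V$-leaves to an existing tree), $S'\subseteq S_V$ by construction, and every vertex of $S_U\cup S_V$ is covered: vertices of $S_U$ lie in $\mathcal{T}\subseteq \mathcal{T}^*$; vertices of $S_V\cap \tilde S_V$ lie in $\mathcal{T}$ or in $\tilde S'\cap S_V=S'$; and vertices of $B$ lie in $\mathcal{T}^*$.

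For the cost bound, note that $\mathcal{T}^*\setminus \mathcal{T}\subseteq V$, so $|\mathcal{T}^*\cap U| = |\mathcal{T}\cap U|$, and $|S'|=|\tilde S'\cap S_V|\leq |\tilde S'|$; hence
\[
\cost(S)\leq |\mathcal{T}^*\cap U|+|S'|+1 \leq |\mathcal{T}\cap U|+|\tilde S'|+1 = \cost(\tilde S).
\]
The only edge case to sanity-check, which I expect to be the main (minor) obstacle in writing up cleanly, is that attaching leaves to $\mathcal{T}$ requires $\mathcal{T}$ to be nonempty when $B\neq\varnothing$. But if $\mathcal{T}=\varnothing$ then $\tilde S_U=\varnothing$ (since $\tilde S_U\subseteq \mathcal{T}$), forcing $S_U=\varnothing$ and $N(\tilde S_U)=\varnothing$, hence $B=\varnothing$ and $\mathcal{T}^*=\mathcal{T}$ remains empty. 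So no nonempty-tree condition is ever violated, and the argument goes through.
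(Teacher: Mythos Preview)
Your proof is correct and follows the same high-level approach as the paper: take a minimum cover $(\tilde{\cal T},\tilde S')$ of $\tilde S$ and restrict it to obtain a cover of $S$. The paper's proof simply asserts that $(\tilde{\cal T},\tilde S'\cap S_V)$ is a cover for $S$; as you correctly observe, this glosses over the case $v\in S_V\setminus \tilde S_V$, where $v$ is not required to be covered by $(\tilde{\cal T},\tilde S')$ at all. Your leaf-attachment step (adding each such $v$ to $\tilde{\cal T}$ via an edge to a neighbor in $\tilde S_U\subseteq\tilde{\cal T}$) is precisely what is needed to make the argument airtight, and since only $V$-vertices are added, the cost comparison $|\mathcal T^*\cap U|=|\tilde{\cal T}\cap U|$ goes through unchanged. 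Your treatment of the empty-tree edge case is also correct. In short, your proof is a more careful version of the paper's, filling a small gap the paper leaves implicit.
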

\begin{proof}
Let $(\tilde{\cal T}, \tilde S')$ be a minimum cover for $\tilde S$. Then
$(\tilde{\cal T}, \tilde S'\cap S_V)$ is a cover for $S$ of cost at most $\cost(\tilde S)$.
\end{proof}
\begin{claim}\label{claim:exp-decay}
Consider a set $S$ of size at most $r$ and vertex $w\notin S$.
\begin{enumerate}
\item If $w\notin N(S_U)$ then $x_{S\cup\set{w}} \leq \alpha x_S$.
\item If $w\in N(S_U)$ then $x_{S\cup\set{w}} =  x_S$.
\end{enumerate}
(If $w\in S$, then trivially $x_{S\cup\set{w}} =  x_S$.)
\end{claim}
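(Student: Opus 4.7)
The plan is to dispatch the claim by splitting on the location of $w$, with three cases. First, I record the effect of adding $w$ to $S$ on the decomposition $(S)_U$, $(S)_V$: since $(S \cup \{w\})_U = S_U \cup (\{w\} \cap U)$ and $(S \cup \{w\})_V = ((S \cap V) \cup \{w\}) \setminus N((S \cup \{w\})_U)$, a short calculation determines both sizes in each case. I will then compare the three exponents that define $x_S = \beta^{|S_U|}\alpha^{|S_V|} n^{-\cost(S)/4}$, using Claim~\ref{claim:monotone} as the only external input.

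Suppose first $w \in N(S_U) \subseteq V$. Then $(S \cup \{w\})_U = S_U$, and $(S \cup \{w\})_V = (S_V \cup \{w\}) \setminus N(S_U) = S_V$ since $w$ is absorbed by the neighborhood exclusion. Moreover, the collection of vertices that a cover must touch is $S_U \cup S_V$ in both instances, so every cover of $S$ is a cover of $S \cup \{w\}$ and vice versa, yielding $\cost(S \cup \{w\}) = \cost(S)$. All three factors defining $x_\bullet$ agree and $x_{S \cup \{w\}} = x_S$. Suppose next $w \in V \setminus N(S_U)$. Then $|S_U|$ is unchanged, $|S_V|$ increases by exactly one, and by Claim~\ref{claim:monotone} the quantity $\Delta := \cost(S \cup \{w\}) - \cost(S)$ is nonnegative, so
\[
\frac{x_{S \cup \{w\}}}{x_S} \;=\; \alpha \cdot n^{-\Delta/4} \;\leq\; \alpha.
\]

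Finally, suppose $w \in U$. Let $t = |S_V \cap N(\{w\})|$. Then $|(S \cup \{w\})_U| = |S_U| + 1$ while $|(S \cup \{w\})_V| = |S_V \setminus N(\{w\})| = |S_V| - t$, since the vertices of $S_V$ adjacent to $w$ are now swallowed by $N(S_U \cup \{w\})$. Again $\Delta \geq 0$ by monotonicity, and substituting $\beta = \alpha^{r+1}$ yields
\[
\frac{x_{S \cup \{w\}}}{x_S} \;=\; \beta \cdot \alpha^{-t} \cdot n^{-\Delta/4} \;=\; \alpha \cdot \alpha^{r-t} \cdot n^{-\Delta/4}.
\]
Because $t \leq |S_V| \leq |S| \leq r$ we have $r - t \geq 0$, and since $\alpha < 1$ this gives $\alpha^{r-t} \leq 1$; together with $n^{-\Delta/4} \leq 1$ the ratio is at most $\alpha$.

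The one place where the estimate is tight — and the only real obstacle — is the third case: the drop of $|S_V|$ by $t$ threatens to inflate the ratio by $\alpha^{-t}$, and $t$ can be as large as $|S_V|$. This is precisely the reason for choosing $\beta = \alpha^{r+1}$: the exponent $r+1$ is large enough to absorb up to $t \leq r$ such lost factors and still leave an extra $\alpha$ to spare. Beyond this bookkeeping, the proof uses only the monotonicity already recorded in Claim~\ref{claim:monotone} together with the explicit definition of $x_S$; no properties of the random graph $G$ are invoked.
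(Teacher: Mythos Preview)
Your proof is correct and follows essentially the same case analysis as the paper, using monotonicity of $\cost$ together with $\beta=\alpha^{r+1}$ and $|S|\leq r$. Your argument for part~2 is in fact cleaner than the paper's: rather than explicitly extending a minimum cover of $S$ to one of $S\cup\{w\}$ by attaching the edge $(u,w)$ to the tree, you observe directly that the definition of a cover depends only on $S_U$ and $S_V$, which are unchanged when $w\in N(S_U)$.
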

\begin{proof}
Denote $\tilde S = S \cup\set{w}$. We first prove that $x_{S\cup\set{w}} \leq \alpha x_S$
if $w\notin N(S_u)$.
Consider two cases.

\smallskip
\noindent\textbf{Case 1.} Assume that $w\in U$. Then $|\tilde S_U| = |S_U| +  1$.
We have,
\begin{multline*}
x_{\tilde S} = \beta^{|\tilde S_U|} \alpha^{|\tilde S_V|} n^{-\cost(\tilde S)/4} \leq
\beta^{|S_U|+1} n^{-\cost(S)/4} = \\ \beta^{|S_U|}\alpha^{r+1} n^{-\cost(S)/4}
\leq \alpha \beta^{|S_U|}\alpha^{|S_V|} n^{-\cost(S)/4} = \alpha x_S.
\end{multline*}
\noindent\textbf{Case 2.} Assume that $w\in V\setminus N(S_U)$. Then $w\in \tilde S_V$ and thus $|\tilde S_V| = |S_V| +  1$.
We also have $\tilde S_U = S_U$. Hence,
\begin{align*}
x_{\tilde S} &= \beta^{|\tilde S_U|} \alpha^{|\tilde S_V|} n^{-\cost(\tilde S)/4} \leq
\beta^{|S_U|} \alpha^{|S_V|+1} n^{-\cost(S)/4}  \\&= \alpha \beta^{|S_U|}\alpha^{|S_V|} n^{-\cost(S)/4}= \alpha x_S.
\end{align*}

Now we prove that $x_{S\cup\set{w}} =  x_S$ if $w\in N(S_U)$. Let $u$ be a neighbor of $w$ in $S_U$.
Consider a minimal cover $({\cal T}, S')$ for $S$. By the definition of a cover,
$u$ lies in $\cal T$. Define a tree $\tilde{\cal T}$ as follows:
if $w$ is not in $\cal T$, let $\tilde{\cal T} = {\cal T} + (w,u)$;
if $w$ is in $\cal T$, let $\tilde{\cal T} = \cal T$.
Then $(\tilde{\cal T}, S')$ is a cover for $\tilde S$.
It has the same cost as $({\cal T}, S')$ since $S_U \cap {\cal T} = \tilde S_U \cap \tilde{\cal T}$.
Therefore, $\cost(\tilde S) \leq \cost(S)$ and, by Claim~\ref{claim:monotone}, $\cost(S\cup \set{v}) = \cost(S)$.
Finally, observe that $S_U = \tilde S_U$ and $S_V = \tilde S_V$ (since $w\notin \tilde S_V$);
therefore, $x_S = x_{\tilde S}$.
\end{proof}
\begin{claim} \label{claim:SA-zero}
Consider sets $S$ and $T$, $|S| + |T| \leq r + 1$.
If $T\cap N(S) \neq \varnothing$ then $x_{S,T} = 0$.
\end{claim}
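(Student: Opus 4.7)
The plan is a direct inclusion--exclusion cancellation driven by Claim~\ref{claim:exp-decay}(2). I would fix any witness $w \in T \cap N(S)$; in the bipartite graph this means either $w \in V \cap N(S_U)$ or $w \in U \cap N(S_V)$, and I would take $w \in V \cap N(S_U)$ (this is the case directly handled by the exponential-decay claim, and matches the asymmetric SA constraints $x_{S \cup \{v\},T} \geq x_{S \cup \{u\},T}$ for $(u,v) \in E$ with $u \in U, v \in V$). Partition the subsets $J \subset T$ appearing in the definition of $x_{S,T}$ according to whether $w \in J$ or not:
$$x_{S,T} \;=\; \sum_{J \subset T} (-1)^{|J|} x_{S \cup J} \;=\; \sum_{J \subset T \setminus \{w\}} (-1)^{|J|} \bigl(x_{S \cup J} - x_{S \cup J \cup \{w\}}\bigr).$$

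Next I would argue that every paired difference vanishes term-by-term. Apply Claim~\ref{claim:exp-decay}(2) to $S' := S \cup J$ and the vertex $w$: its hypothesis is $w \in N(S'_U)$. Since $S_U \subset (S \cup J)_U$, neighborhood monotonicity gives $N(S_U) \subset N((S \cup J)_U)$, and we have $w \in N(S_U)$ by our choice. Hence Claim~\ref{claim:exp-decay}(2) yields $x_{S \cup J \cup \{w\}} = x_{S \cup J}$ for every $J \subset T \setminus \{w\}$, making each summand zero and giving $x_{S,T} = 0$.

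The main conceptual point, rather than an obstacle, is that the definition $x_S = \beta^{|S_U|}\alpha^{|S_V|} n^{-\cost(S)/4}$ was engineered precisely so that adding a vertex $w$ already lying in $N(S_U)$ does not change $x_S$: such $w$ neither enters $S'_V$ (by definition of the $(\cdot)_V$ restriction) nor increases $\cost(S')$ (since the existing minimum cover $(\mathcal{T},S')$ can absorb $w$ via the single extra edge to its neighbor in $S_U$, which contributes $0$ to the cost). This equality is exactly what the telescoping argument consumes, so once the pairing is set up the argument is essentially a one-line application of Claim~\ref{claim:exp-decay}(2).
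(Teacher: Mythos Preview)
Your proof is correct and follows essentially the same approach as the paper: pick a witness $w\in T\cap N(S_U)$, pair subsets $J\subset T\setminus\{w\}$ with $J\cup\{w\}$ in the inclusion--exclusion sum, and cancel term-by-term via Claim~\ref{claim:exp-decay}(2). Your explicit check that $w\in N(S_U)\subset N((S\cup J)_U)$ and the closing intuition are nice additions, but the core argument is identical to the paper's.
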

\begin{proof}
Assume that $T\cap N(S) \neq \varnothing$. Let $v$ be an arbitrary vertex in $T\cap N(S)$ and $u$ be its neighbor in $S$.
By Claim~\ref{claim:exp-decay}, item 2, $x_{S\cup J} = x_{S\cup J\cup \set{v}}$ for every subset $J \subset T\setminus\set{v}$.

We get that $x_{S,T}$ equals
 $$\sum_{J\subset T} (-1)^{|J|} x_{S\cup J}= \sum_{J\subset T\setminus\set{v}} (-1)^{|J|} \bigl(x_{S\cup J} - x_{S\cup J\cup \set{v}}\bigr) = 0.$$
\end{proof}
\begin{claim}\label{claim:SA-x-approx}
Consider disjoint sets $S$ and $T$ ($|S| + |T| \leq r+1$) such that $T\cap N(S) = \varnothing$.
We have,
$x_S/2 \leq x_{S,T} \leq x_S$.
\end{claim}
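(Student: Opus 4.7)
The plan is to prove by induction on $|T|$ the slightly stronger bound $(1 - |T|\alpha) x_S \leq x_{S,T} \leq x_S$; since $|T|\alpha \leq (r+1)\alpha = 1/2$, this immediately yields the claimed inequalities. The base case $|T| = 0$ is trivial since $x_{S,\varnothing} = x_S$.

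For the inductive step, I would pick any $w \in T$, set $T' = T \setminus \{w\}$, and split the defining alternating sum according to whether $J \ni w$ to obtain the recurrence $x_{S,T} = x_{S,T'} - x_{S \cup \{w\}, T'}$. The pair $(S, T')$ still satisfies the hypotheses of the claim, since $T' \cap N(S) \subset T \cap N(S) = \varnothing$ and $|S|+|T'| \leq r+1$, so the inductive hypothesis applies to it directly. For $(S \cup \{w\}, T')$ the sets remain disjoint and $|S \cup \{w\}|+|T'| \leq r+1$, but $T' \cap N(S \cup \{w\})$ may be nonempty if $T'$ contains neighbors of $w$; in that case Claim~\ref{claim:SA-zero} directly gives $x_{S \cup \{w\}, T'} = 0$, and otherwise the IH applies. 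Either way, $0 \leq x_{S \cup \{w\}, T'} \leq x_{S \cup \{w\}}$.

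The upper bound then follows immediately: $x_{S,T} \leq x_{S,T'} \leq x_S$. For the lower bound, I would invoke Claim~\ref{claim:exp-decay}(1), which applies because $w \in T$ and $T \cap N(S) = \varnothing$ together imply $w \notin N(S_U)$, to obtain $x_{S \cup \{w\}} \leq \alpha x_S$. Combining this with the IH lower bound on $x_{S,T'}$ gives $x_{S,T} \geq (1 - |T'|\alpha) x_S - \alpha x_S = (1 - |T|\alpha) x_S$, closing the induction.

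The main subtlety (and the reason the strengthened IH is necessary) is that a naive induction directly on the claimed bound $x_S/2 \leq x_{S,T} \leq x_S$ does \emph{not} close: it would only yield $x_{S,T} \geq x_S/2 - \alpha x_S < x_S/2$. The linear form $(1-|T|\alpha)x_S$ is exactly what absorbs the $\alpha x_S$ loss at each inductive step, thanks to the increment from $|T'|$ to $|T'|+1 = |T|$. The second minor obstacle, that the IH does not formally apply to $(S \cup \{w\}, T')$ whenever $T' \cap N(w) \neq \varnothing$, is handled cleanly by Claim~\ref{claim:SA-zero}, which pins $x_{S \cup \{w\}, T'}$ to zero in that case and therefore trivially satisfies both $x_{S \cup \{w\}, T'} \geq 0$ and $x_{S \cup \{w\}, T'} \leq x_{S \cup \{w\}}$.
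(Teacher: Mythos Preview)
Your proof is correct and takes a genuinely different route from the paper. The paper argues directly: it groups the alternating sum $\sum_{J\subset T}(-1)^{|J|}x_{S\cup J}$ by the parity of $|J|$, pairs each even-$J$ term with all its one-element extensions, and then applies Claim~\ref{claim:exp-decay} termwise to obtain $x_{S\cup J}-\sum_{w\in T\setminus J}x_{S\cup J\cup\{w\}}\geq x_{S\cup J}/2$ for each even $J$. Your argument is instead an induction on $|T|$ via the standard recurrence $x_{S,T}=x_{S,T'}-x_{S\cup\{w\},T'}$, with the strengthened hypothesis $(1-|T|\alpha)x_S\leq x_{S,T}$ designed precisely so that the $\alpha x_S$ loss at each step is absorbed. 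What your approach buys is that Claim~\ref{claim:exp-decay}(1) is only ever applied to the pair $(S,w)$, where the hypothesis $w\notin N(S_U)$ follows immediately from $T\cap N(S)=\varnothing$; the paper's direct argument applies it to $(S\cup J,w)$ for arbitrary $J\subset T$, where one must also worry about edges internal to $T$. You handle the one place this complication resurfaces---when $T'\cap N(S\cup\{w\})\neq\varnothing$---cleanly by invoking Claim~\ref{claim:SA-zero}, which the paper's proof does not use. As a bonus, your induction yields the sharper bound $x_{S,T}\geq(1-|T|\alpha)x_S$.
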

\begin{proof}
First, we show that $x_{S,T} \geq x_S/2$. We have,
\begin{align*}x_{S,T} &=\sum_{J\subset T} (-1)^{|J|} x_{S\cup J}\\
 &\geq \sum_{\substack{J\subset T\\|J| \text{ is even}}}
\Bigl(x_{S\cup J} - \sum_{w\in T\setminus J} x_{S\cup J \cup \set{w}}\Bigr)\\
&\hspace{-3mm}\stackrel{\text{\tiny Claim~\ref{claim:exp-decay}}}{\geq}
\sum_{\substack{J\subset T\\|J| \text{ is even}}}
\Bigl(x_{S\cup J} - \alpha (r+1) x_{S\cup J}\Bigr) \\ &\geq
\sum_{\substack{J\subset T\\|J| \text{ is even}}}
\frac{x_{S\cup J}}{2}\geq \frac{x_S}{2}.
\end{align*}
Second, we show that $x_{S,T} \leq x_S$. We have,
\begin{align*}x_{S,T} &=\sum_{J\subset T} (-1)^{|J|} x_{S\cup J} \\
&\leq x_S -
\sum_{\substack{J\subset T\\|J| \text{ is odd}}}
\Bigl(x_{S\cup J} - \sum_{w\in T\setminus J} x_{S\cup J \cup \set{w}}\Bigr)\\
&\hspace{-3mm}\stackrel{\text{\tiny Claim~\ref{claim:exp-decay}}}{\leq}
x_S - \sum_{\substack{J\subset T\\|J| \text{ is odd}}}
\Bigl(x_{S\cup J} - \alpha (r+1) x_{S\cup J}\Bigr) \leq
x_S
\end{align*}
\end{proof}
\begin{claim}
Let $S$ be a set of size at most $r$ and $u\in U$. Then
$$x_{S\cup\set{u}} \geq \frac{\beta}{\sqrt{n}} x_S.$$
\end{claim}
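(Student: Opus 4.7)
The plan is to write $x_{S\cup\{u\}}/x_S$ explicitly using the definition, and then bound each factor. Assume $u\notin S$ (otherwise the statement is trivial). Let $\tilde S = S\cup\{u\}$. Then $|\tilde S_U| = |S_U|+1$, and since $\tilde S_V = (S\cap V)\setminus N(S_U\cup\{u\}) = S_V\setminus N(u)$, we have $|\tilde S_V| = |S_V| - |S_V\cap N(u)| \leq |S_V|$. Therefore
\[
\frac{x_{\tilde S}}{x_S} = \beta\cdot \alpha^{-|S_V\cap N(u)|}\cdot n^{-(\cost(\tilde S)-\cost(S))/4} \geq \beta\cdot n^{-(\cost(\tilde S)-\cost(S))/4},
\]
since $\alpha\leq 1$. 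It thus suffices to prove $\cost(\tilde S)\leq \cost(S)+2$, so that the right-hand side is at least $\beta/\sqrt{n}$.

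The main (though easy) step will be the bound on the cover cost. Let $({\cal T},S')$ be a minimum cover for $S$. I will produce a cover $(\tilde{\cal T}, \tilde S')$ for $\tilde S$ with cost at most $\cost(S)+2$ by considering three cases according to the relationship between $u$ and $\cal T$.

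\textbf{Case 1: $u\in{\cal T}$.} Then $({\cal T}, S')$ already covers $\tilde S$: every vertex of $S_U$ still lies in $\cal T$, so does $u$, and $\tilde S_V\subset S_V$ is covered by $S'$. The cost is unchanged.

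\textbf{Case 2: $\cal T$ is empty.} Then $S_U=\emptyset$ and $S' = S_V$. Take $\tilde{\cal T} = \{u\}$ (a one-vertex tree containing a vertex of $U$) and $\tilde S' = \tilde S_V = S_V\setminus N(u)$. The cost becomes $1 + |\tilde S'| + 1 \leq \cost(S) + 1$.

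\textbf{Case 3: $u\notin{\cal T}$ and ${\cal T}$ contains some $u'\in U$.} This is where we use the assumption (guaranteed by Lemma~\ref{lem:basic-properties}, item 3) that every two vertices of $U$ are connected by a path of length $4$ in $G$. Take such a path from $u$ to $u'$, traverse it starting from $u$, and add vertices to $\cal T$ one at a time until we first hit a vertex already in ${\cal T}$. This attaches at most $4$ new vertices, of which at most $2$ lie in $U$ (the path alternates between $U$ and $V$, and $u'\in U$ is already in $\cal T$). The result $\tilde{\cal T}$ is still a tree, and $|\tilde{\cal T}\cap U|\leq |{\cal T}\cap U|+2$. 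Taking $\tilde S' = S'\setminus N(u) \subset \tilde S_V$ gives a valid cover of cost at most $\cost(S)+2$.

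In every case $\cost(\tilde S)\leq \cost(S)+2$, so $n^{-(\cost(\tilde S)-\cost(S))/4}\geq n^{-1/2}$ and $x_{\tilde S}/x_S\geq \beta/\sqrt{n}$, as claimed. The only mild obstacle is Case 3, which crucially relies on the $4$-path connectivity of $U$; without it the bound on the cost increase would be uncontrolled.
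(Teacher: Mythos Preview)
Your proof is correct and follows essentially the same approach as the paper: both arguments reduce to showing $\cost(\tilde S)\leq\cost(S)+2$ by extending a minimum cover for $S$ using the length-$4$ path between $u$ and some $u'\in{\cal T}\cap U$ (or starting a new tree at $u$ if ${\cal T}$ is empty). Your case split is slightly more explicit, and your choice $\tilde S'=S'\setminus N(u)$ in Case~3 is in fact a bit more careful than the paper, which reuses $S'$ without noting that one should intersect with $\tilde S_V$; the same tiny adjustment would also tidy up your Case~1, where $({\cal T},S')$ is not literally a cover of $\tilde S$ unless you replace $S'$ by $S'\cap\tilde S_V$ (this only decreases the cost, so nothing changes).
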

\begin{proof}
Let $\tilde S = S\cup\set{u}$. We have, $|\tilde S_U| \leq |S_U| + 1$ and $|\tilde S_V| \leq |S_V|$.

Now we upper bound $\cost(\tilde S)$. Consider a minimum cover $({\cal T}, S')$ for $S$. If $\cal T$ is empty,
let $\tilde{\cal T} = \{u\}$. Otherwise, let $u'\in U$ be a vertex from $\cal T$ (by the definition of a cover,
$\cal T$ contains vertices of $U$ if it is not empty). There is a path $P$ of length $4$ between $u$ and $u'$ in $G$:
$u\to v_1\to u'' \to v_2 \to u'$.  We connect $u$ to $\cal T$ with the path $P$ or its subpath (if one of the vertices,
$v_1$, $u''$, or $v_2$, is already in $\cal T$) and obtain a tree $\tilde{\cal T}$. Since we added at most two vertices
from $U$ to $\cal T$ (namely, vertices $u$ and $u''$), we have
$|\tilde{\cal T} \cap U| \leq |{\cal T} \cap U| + 2$. We get a cover $({\tilde{\cal T}}, S')$ for $\tilde S$ of cost at most
$\cost(S) + 2$.

Therefore,
\begin{align*}
x_{\tilde S} &= \beta^{|\tilde S_U|} \alpha^{|\tilde S_V|} \frac{1}{n^{\cost(\tilde S)/4}}\\
&\geq \beta^{|S_U|+1} \alpha^{|S_V|} \frac{1}{n^{(\cost(S)+2)/4}} = \frac{\beta}{\sqrt{n}}.
\end{align*}
\end{proof}
\begin{lemma}
The solution $x_{S,T}$, which  we presented, is a feasible solution for the Sherali--Adams relaxation.
Its value is at most $\alpha n^{1/4} < n^{1/4}$.
\end{lemma}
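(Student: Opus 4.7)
The proof verifies the four families of Sherali--Adams constraints --- the normalization $x_{\varnothing}=1$, the range $0\le x_{S,T}\le 1$, the edge constraints $x_{S\cup\{v\},T}\ge x_{S\cup\{u\},T}$ for $(u,v)\in E$, and the density constraints $\sum_{u\in U} x_{S\cup\{u\},T}\ge k\,x_{S,T}$ --- and then bounds the objective. The main tools are Claim~\ref{claim:SA-zero} (the solution vanishes when $T\cap N(S)\ne\varnothing$), Claim~\ref{claim:SA-x-approx} (sandwiching $x_{S,T}$ between $x_S/2$ and $x_S$ in the consistent regime), and the preceding claim that $x_{S\cup\{u\}}\ge(\beta/\sqrt{n})\,x_S$ for any $u\in U$.

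The normalization $x_{\varnothing}=1$ is immediate. For the range, I case-split. If $S\cap T\ne\varnothing$, pick $w\in S\cap T$ and pair the terms of the alternating sum defining $x_{S,T}$ as $(J,\,J\cup\{w\})$ for $J\subset T\setminus\{w\}$; since $w\in S$, $x_{S\cup J}=x_{S\cup J\cup\{w\}}$ and the sum telescopes to $0$. If $T\cap N(S)\ne\varnothing$, Claim~\ref{claim:SA-zero} gives $x_{S,T}=0$. Otherwise Claim~\ref{claim:SA-x-approx} yields $0<x_S/2\le x_{S,T}\le x_S\le 1$. For the edge constraints, fix $(u,v)\in E$. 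The cases $u\in T$ or $v\in T$ are trivial (both sides vanish, using Claim~\ref{claim:SA-zero} on the $u$-side when $v\in T$). Otherwise the splitting identity $x_{S\cup\{u\},T}=x_{S\cup\{u,v\},T}+x_{S\cup\{u\},T\cup\{v\}}$, immediate from the definition, combines with $x_{S\cup\{u\},T\cup\{v\}}=0$ --- which follows either from the cancellation argument above (if $v\in S\cup\{u\}$) or from Claim~\ref{claim:SA-zero}, since $v\in N(u)\subseteq N(S\cup\{u\})$ --- to give $x_{S\cup\{u\},T}=x_{S\cup\{u,v\},T}$. The analogous identity $x_{S\cup\{v\},T}=x_{S\cup\{u,v\},T}+x_{S\cup\{v\},T\cup\{u\}}$ combined with non-negativity then yields $x_{S\cup\{v\},T}\ge x_{S\cup\{u,v\},T}=x_{S\cup\{u\},T}$.

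The density constraint is the main obstacle, and is handled by a counting argument. If $x_{S,T}=0$ it is vacuous, so assume $S\cap T=\varnothing$, $T\cap N(S)=\varnothing$, and therefore $x_{S,T}\le x_S$. Call $u\in U$ \emph{good} if $u\notin S\cup T$ and $u$ has no neighbor in $T\cap V$; then $(S\cup\{u\})\cap T=\varnothing$ and $T\cap N(S\cup\{u\})=\varnothing$, so Claim~\ref{claim:SA-x-approx} applies and the preceding claim then gives $x_{S\cup\{u\},T}\ge x_{S\cup\{u\}}/2\ge(\beta/(2\sqrt{n}))\,x_S$. The number of bad $u$'s is at most $|S|+|T|+\sum_{v\in T\cap V}\deg(v)\le 2r+r\cdot(3d_R/2)=o(n)$, since $d_R=\sqrt{n}\,d_L$ with $d_L=\Theta(\log n)$ and $r=O(\log n/\log\log n)$. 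Hence at least $n/2$ vertices are good, so $\sum_{u\in U} x_{S\cup\{u\},T}\ge(n/2)\cdot(\beta/(2\sqrt{n}))\,x_S=(\beta\sqrt{n}/4)\,x_S=k\,x_S\ge k\,x_{S,T}$ using $k=\beta\sqrt{n}/4$. Finally, the trivial lower bound $\cost(\{v\})\ge 1$ gives $x_{\{v\}}\le\alpha\,n^{-1/4}$, so $\sum_{v\in V} x_{\{v\}}\le\sqrt{n}\cdot\alpha\,n^{-1/4}=\alpha\,n^{1/4}<n^{1/4}$, completing the proof. The main difficulty is the density step: unlike in the standard D$k$S gap, the edge constraints here force $x_v\gg x_u$, so we must use the sparsity of the graph ($d_L=\Theta(\log n)$) and the smallness of $|V|=\sqrt{n}$ to ensure that vertices in $T\cap V$ rule out only $o(n)$ of the $u\in U$.
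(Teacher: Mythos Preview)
Your proof is correct and follows essentially the same approach as the paper's: both verify the range, edge, and density constraints via Claims~\ref{claim:SA-zero} and~\ref{claim:SA-x-approx} together with the bound $x_{S\cup\{u\}}\ge(\beta/\sqrt{n})\,x_S$, restricting the density sum to vertices $u\notin N(T\cap V)$ (your ``good'' vertices) and counting that at most $o(n)$ vertices are excluded. One minor slip: in the edge-constraint case $u\in T$, the $v$-side $x_{S\cup\{v\},T}$ need not vanish, but since the $u$-side does and the $v$-side is non-negative, the inequality still holds.
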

\begin{proof}
First, we verify that the $X_{S,T}$ is a feasible SDP solution. We check that $0\leq x_{S,T} \leq 1$. Clearly, $0\leq x_S \leq 1$ and $x_{\varnothing} = 1$.
By  Claims~\ref{claim:SA-zero} and~\ref{claim:SA-x-approx}, either $x_{S,T} = 0$ or
$0\leq x_S/2 \leq x_{S,T} \leq x_S \leq 1$.

Now consider an edge $(u,v)$ with $u\in U$ and $v\in V$.
By Claim~\ref{claim:exp-decay}, item 2,
$x_{S\cup J \cup\set{u,v}} = x_{S\cup J\cup \set{u}}$ for every subset $J\subset T$;
hence, $x_{S \cup\set{u,v}, T} = x_{S \cup\set{u}, T}$.
We have,
\begin{align*}
x_{S\cup\set{v},T} - x_{S\cup\set{u},T} &= x_{S\cup\set{v},T} - x_{S\cup\set{u,v},T}\\&
=x_{S\cup\set{v},T\cup \set{u}} \geq 0.\end{align*}
We show that $\sum_{u\in U} x_{S \cup \set{u},T} \geq k x_{S,T}$. We have,
\begin{align*}
\sum_{u\in U} x_{S \cup \set{u},T} & \geq \sum_{u\in U\setminus N(T)} x_{S \cup \set{u},T}\\
&\hspace{-3mm}\stackrel{\text{\tiny Claim~\ref{claim:SA-x-approx}}}{\geq} \sum_{u\in U\setminus N(T)} x_{S \cup \set{u}}/2\\
&\geq \frac{1}{2}\sum_{u\in U\setminus N(T)} \frac{\beta}{\sqrt{n}}  x_{S} \geq  \frac{\beta}{2\sqrt{n}}\times \frac{n}{2} \geq k.
\end{align*}

Finally, we note that $x_{\set{v}} = \alpha n^{-1/4}$ for every $v\in V$. Therefore,
the cost of the SDP solution is $\alpha n^{1/4} < n^{1/4}$.
\end{proof}

\noindent We now prove Theorem~\ref{thm:LP-main}.
\begin{proof}[Proof of Theorem~\ref{thm:LP-main}]
By Lemma~\ref{lem:basic-properties}, the value of the optimal combinatorial solution
is at least $\min(k,s)/2 = n^{1/2 - O(\varepsilon)}$; the value of the Sherali--Adams solution that we have constructed is
less than $n^{1/4}$. Therefore, the integrality gap is $n^{1/4 - O(\varepsilon)}$.
\end{proof}

% !TEX root = SSBVE.tex

\section{Small Set Vertex Expansion} \label{sec:SSVE}
In this section we prove Theorem~\ref{thm:SSVE-main}, giving a simple bicriteria approximation algorithm for the SSVE problem.  To do this, we will first state the following result from~\cite{LM14}, which we will use as a black box.

\begin{theorem}[\cite{LM14}]
There is an $O(\sqrt{\log n} \cdot \delta^{-1} \log \delta^{-1} \log \log \delta^{-1})$-bicreteria approximation algorithm for SSVE, where $\delta = k/n$.
\end{theorem}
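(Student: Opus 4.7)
My plan is to follow an SDP rounding approach in the spirit of the orthogonal-separator technique of Chekuri--Ene--Vondr\'ak and Louis--Makarychev, adapted to the vertex expansion setting. First, I would write down an SDP relaxation tailored to SSVE: for each vertex $u \in V$ introduce a vector $\bar v_u$ with $\|\bar v_u\|^2$ playing the role of $\mathbf{1}[u \in S]$, and add a spreading constraint $\sum_u \|\bar v_u\|^2 = k$, non-negativity constraints $\langle \bar v_u, \bar v_{u'}\rangle \geq 0$, and $\ell_2^2$ triangle inequalities. To encode \emph{vertex} expansion, I associate with each $u' \in V$ an auxiliary ``boundary'' vector $\bar b_{u'}$ and impose $\|\bar v_{u'}\|^2 + \|\bar b_{u'}\|^2 \geq \|\bar v_u\|^2$ for every neighbor $u \sim u'$; the objective is then $\min \sum_{u'} \|\bar b_{u'}\|^2$, which lower-bounds $|N(S)\setminus S|$ for the optimal integral $S$.

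Next, I would round the SDP with a Gaussian orthogonal separator: sample $g\sim\mathcal{N}(0,I)$ and pick a threshold $t$ so that $S'=\{u : \langle \bar v_u / \|\bar v_u\|, g\rangle \geq t\}$ has expected size $\Theta(\delta n)$. The central property of the orthogonal separator is that for two vertices with comparable $\|\bar v_u\|$, the probability $\Pr[u \in S',\ u'\notin S']$ can be charged to $\|\bar v_u - \bar v_{u'}\|^2$, and the SDP constraint then ties this to the boundary contribution $\|\bar b_{u'}\|^2$. Integrating the Gaussian tail at a threshold matching probability $\delta$ costs an $O(\sqrt{\log n})$ factor, and normalizing by $\Pr[u\in S']$ costs a $\delta^{-1}$ factor. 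Summing over all neighbor pairs gives an expected vertex boundary of $O(\sqrt{\log n}\cdot \delta^{-1})$ times the SDP optimum, before addressing non-uniformity in $\|\bar v_u\|$.

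The main obstacle is the additional $\log \delta^{-1} \log \log \delta^{-1}$ slack. This is typically absorbed by bucketing the vertices into $O(\log \delta^{-1})$ classes by the magnitude of $\|\bar v_u\|^2$ (since the orthogonal separator bound is only tight within a bucket of comparable norm), with a union bound across buckets contributing the first $\log \delta^{-1}$, and a boosting step (repeating the rounding $O(\log \log \delta^{-1})$ times and taking the best) contributing the doubly logarithmic factor. To convert this into the stated bicriteria guarantee, I would amplify by independent repetition until a union of samples covers roughly $(1+\epsilon)k$ vertices, then apply Markov-style pruning to discard a constant-fraction of vertices that contribute excessive private boundary. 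Verifying that pruning destroys only a constant fraction of $|S'|$ and a constant factor in the ratio, and that amplification is consistent with the $\delta^{-1}\log\delta^{-1}\log\log\delta^{-1}$ dependence, is the last technical step.
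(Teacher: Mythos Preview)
The paper does not prove this theorem. It is quoted verbatim from \cite{LM14} and invoked as a black box in Section~\ref{sec:SSVE}: ``we will first state the following result from~\cite{LM14}, which we will use as a black box.'' There is therefore nothing in the paper to compare your attempt against.

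As for your sketch itself: the high-level architecture (an SDP with spreading and $\ell_2^2$ constraints, rounding via an orthogonal separator, bucketing by vector norm) is indeed the machinery Louis and Makarychev use. However, several steps are too vague to count as a proof. The SDP you wrote down is not obviously a relaxation of SSVE: your constraint $\|\bar v_{u'}\|^2 + \|\bar b_{u'}\|^2 \geq \|\bar v_u\|^2$ does not force $\|\bar b_{u'}\|^2$ to be at least $\mathbf{1}[u'\in N(S)\setminus S]$ in the intended integral solution unless you also require $\|\bar b_{u'}\|^2 \geq \|\bar v_u\|^2 - \|\bar v_{u'}\|^2$ to hold for the \emph{maximum} over neighbors $u$, and you need additional structure (e.g.\ $\bar b_{u'}$ orthogonal to $\bar v_{u'}$, or a min--max formulation) to make the objective meaningful. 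More importantly, the orthogonal separator bound you invoke, $\Pr[u\in S',\,u'\notin S'] \lesssim \|\bar v_u - \bar v_{u'}\|^2$, controls \emph{edge}-type cuts; translating this into a \emph{vertex} boundary bound requires the symmetrized SDP of \cite{LM14} (where one relaxes $\max_{u\sim u'} |x_u - x_{u'}|$ rather than a sum over edges), and your sketch does not address this. Finally, your accounting for the $\log\delta^{-1}\log\log\delta^{-1}$ factor is speculative; in \cite{LM14} these factors arise from the specific parameters of the orthogonal separator construction (the distortion and the probability scale), not from a generic bucketing-plus-boosting argument.
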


Thus if $k \geq \sqrt{n}$, this algorithm already gives an $\tilde O(\sqrt{n})$-bicriteria approximation for SSVE.

Now suppose that this is not the case, i.e.~$k \leq \sqrt{n}$.  Recall the well-studied edge expansion version (Small Set Expansion).

\begin{definition}
In the \emph{Small Set Expansion} problem (SSE), we are given a graph $G = (V, E)$ and an integer $k \leq |V|$.  The goal is to find a set $S \subseteq V$ with $|S| \leq k$ which minimize $|E(S, \bar S)| / |S|$.
\end{definition}

SSE is conjectured to be hard to approximate within some constant (for appropriate $k$), but for our purposes we do not need such a strong approximation.  Either of the following upper bounds, due to Bansal et al.~\cite{sse} and R\"acke~\cite{Racke}, 
will suffice for us.

\begin{theorem}[\cite{sse}] \label{thm:sse}
There is an $O(\sqrt{\log n \log(n/k)}) \leq O(\log n)$-bicriteria approximation algorithm for SSE.
\end{theorem}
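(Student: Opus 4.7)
The plan is to follow the SDP-based approach of Bansal et al.~\cite{sse} (the alternative, R\"acke-style tree decomposition proof is quite different but also viable). First, I would write a vector relaxation of SSE in which each vertex $v \in V$ is associated with a vector $\bar v$ and an ``inside'' vector $\bar v_0$; the intended meaning is $\bar v = \bar v_0$ iff $v \in S$. The objective is to minimize $\sum_{(u,v)\in E}\|\bar u-\bar v\|^2$, subject to (i) $\ell_2^2$ ``negative-type'' triangle inequalities $\|\bar u-\bar v\|^2+\|\bar v-\bar w\|^2\geq \|\bar u-\bar w\|^2$ on all triples, (ii) a size constraint encoding $|S|\approx k$ (for instance, $\sum_v \|\bar v - \bar v_0\|^2 \geq n-k$ after a suitable normalization), and (iii) spreading/unit-ball constraints on the $\bar v$'s. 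This quantity lower-bounds the optimal cut $|E(S,\bar S)|$ for any $S$ of size $k$.

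The next step is rounding. I would apply an ARV-style rounding adapted to the imbalanced (non-uniform sparsest cut) regime: project the SDP vectors onto a random Gaussian direction, identify a cluster of $\Theta(k)$ vectors that projects close to $\bar v_0$, and then do a Cheeger-like threshold sweep outward along the projection to find a set $S$ with $|S|=\Theta(k)$ whose cut is at most $O(\sqrt{\log n\,\log(n/k)})$ times the SDP value. Combined with the SDP lower bound, this yields the claimed bicriteria guarantee.

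The main technical obstacle, and the crux of the proof, is an unbalanced ARV structure theorem: given an $\ell_2^2$ negative-type metric on $V$ with a spread solution of size $k$, one can find subsets $A,B\subseteq V$ with $|A|=\Omega(k)$ and $|B|=\Omega(n)$ that are $\ell_2^2$-separated by distance $\Omega(1/\sqrt{\log n\,\log(n/k)})$. The original ARV argument handles the case $k=\Theta(n)$ with separation $\Omega(1/\sqrt{\log n})$; here one side has size only $k$, which may be much smaller than $n$. Extending the matching/chaining argument requires carefully recomputing the concentration bounds for random projections when one side is much smaller, and showing that the loss in the separation gap is only an extra $\sqrt{\log(n/k)}$ factor. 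Once this unbalanced structure theorem is established, the reduction from SSE to the sweep rounding is routine and yields the stated $O(\sqrt{\log n\,\log(n/k)})$ bicriteria approximation, which is at most $O(\log n)$ in the worst case.
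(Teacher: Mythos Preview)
The paper does not prove this theorem; it is stated as a known result from Bansal et al.~\cite{sse} and used as a black box in the proof of Theorem~\ref{thm:SSVE-main}. Your sketch is a faithful outline of the argument in~\cite{sse}: an SDP with $\ell_2^2$ triangle inequalities and a size constraint, followed by an unbalanced ARV-style structure theorem yielding well-separated sets of sizes $\Omega(k)$ and $\Omega(n)$ with separation $\Omega(1/\sqrt{\log n \log(n/k)})$, and finally a threshold sweep. So there is nothing to compare against in this paper, and your proposal is correct in spirit as a summary of the cited work.
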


\begin{theorem}[\cite{Racke}] \label{thm:sse-racke}
There is an $O(\log n)$-approximation algorithm for SSE.
\end{theorem}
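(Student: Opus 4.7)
The plan is to reduce SSE on a general graph $G$ to SSE on a tree via a cut-approximating probabilistic tree embedding, and then solve the tree case exactly by dynamic programming. Specifically, I would invoke Räcke's cut-based hierarchical tree decomposition theorem, which yields a polynomial-time samplable distribution $\mathcal{D}$ over weighted trees on the vertex set $V(G)$ with two key properties: (i) for every $T$ in the support of $\mathcal{D}$ and every $S \subseteq V$, $\mathrm{cut}_T(S) \geq \mathrm{cut}_G(S)$; and (ii) for every $S \subseteq V$, $\mathbb{E}_{T \sim \mathcal{D}}[\mathrm{cut}_T(S)] \leq O(\log n) \cdot \mathrm{cut}_G(S)$. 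Thus tree-cuts pointwise dominate graph-cuts, while approximating them in expectation.

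On a weighted tree $T$, I would solve exactly, for each target size $j \in \{1,\dots,k\}$, the problem of finding $S \subseteq V(T)$ with $|S| = j$ minimizing $\mathrm{cut}_T(S)$. Root $T$ arbitrarily and let $f(v, j, b)$ denote the minimum total weight of cut edges within the subtree at $v$ (plus the $v$-to-parent edge when applicable), when exactly $j$ vertices of that subtree are selected and $b \in \{0,1\}$ records whether $v$ itself is selected. Children are combined via a standard knapsack-style convolution over $j$, with the edge-to-parent contribution determined by matching $b$ against the parent's bit. This is a classical polynomial-time tree DP.

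Combining the two pieces: sample $T \sim \mathcal{D}$, run the DP, and output the $S_T$ (with $|S_T| \leq k$) that minimizes $\mathrm{cut}_T(S)/|S|$. Let $S^*$ be the SSE optimum on $G$. Examining the size slice $j = |S^*|$, the DP returns some $S_T$ with $|S_T| = |S^*|$ and $\mathrm{cut}_T(S_T) \leq \mathrm{cut}_T(S^*)$, so by property (i),
\[
\frac{\mathrm{cut}_G(S_T)}{|S_T|} \;\leq\; \frac{\mathrm{cut}_T(S_T)}{|S_T|} \;\leq\; \frac{\mathrm{cut}_T(S^*)}{|S^*|}.
\]
Taking expectations and applying (ii) gives $\mathbb{E}[\mathrm{cut}_G(S_T)/|S_T|] \leq O(\log n) \cdot \mathrm{OPT}_{\mathrm{SSE}}$; standard repetition with Markov's inequality boosts this to a high-probability $O(\log n)$-approximation, and derandomization can be done in the usual way via the algorithmic constructions underlying the tree decomposition.

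The main obstacle is the existence of the $O(\log n)$-distortion cut-approximating tree distribution itself—this is precisely the hard content of Räcke's theorem (and requires machinery such as multicommodity flow / cut-matching games to achieve the right distortion). Taking that as a black box, the reduction to a tree instance and the tree DP are routine, and the final approximation ratio follows immediately from properties (i) and (ii); importantly, because property (i) is pointwise, we preserve the size constraint $|S| \leq k$ exactly and obtain a true (non-bicriteria) $O(\log n)$-approximation.
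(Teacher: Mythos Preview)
The paper does not prove this statement at all; it is simply quoted from R\"acke's work and used as a black box in the proof of Theorem~\ref{thm:SSVE-main}. So there is nothing to compare against in the paper itself.

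Your derivation is essentially the standard way to extract an SSE algorithm from R\"acke's result, and the logic is sound. One minor inaccuracy: R\"acke's cut-approximating trees are hierarchical decomposition trees whose \emph{leaves} are the vertices of $G$, with internal Steiner nodes corresponding to clusters; they are not literally trees on $V(G)$. This does not break your argument --- the DP you describe goes through with the trivial modification that only leaves may be selected (so $f(v,j,b)$ at an internal node simply aggregates its children without the option $b=1$), and the cut-domination and expected-distortion properties (i) and (ii) are stated exactly for leaf sets. With that adjustment, the reduction plus tree DP indeed yields a true (non-bicriteria) $O(\log n)$-approximation for SSE.
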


Let $S^*$ be the optimal solution to the SSVE instance, so it minimizes $|N(S^*) \setminus S^*| / |S^*|$ among sets of size at most $k$.  Let $L = N(S^*) \setminus S^*$.  Then the number of edges out of $S^*$ is $|E(S^*, \bar S^*)| \leq |L| |S^*|$.  Thus if we run the algorithm from Theorem~\ref{thm:sse}, we will get back a set $S$ with $|S| \leq k$ and $|E(S, \bar S)| / |S| \leq O(\log n) |L|$.  Since the edge expansion is an upper bound on the vertex expansion, this implies that
\begin{align*}
\frac{|N(S) \setminus S|}{|S|} &\leq \frac{|E(S, \bar S)|}{|S|} \leq O(\log n) |L| \\&\leq O(k \log n) \frac{|L|}{|S^*|} = O(k \log n)  \frac{|N(S^*) \setminus S^*|}{|S^*|}.
\end{align*}

Since $k \leq \sqrt{n}$, this implies that the set $S$ is a $\tilde O(\sqrt{n})$-approximation.  This completes the proof of Theorem~\ref{thm:SSVE-main}. 

% !TEX root = SSBVE.tex

\section{Discussion}\label{sec:discussion}
As noted earlier, our approximation guarantee for SSBVE is a function of $n=|U|$, which when transformed into M$k$U translates to an approximation guarantee which is a function of $m$, the number of hyperedges (sets).  In particular, we give an $m^{1/4 + \eps}$-approximation for arbitrarily small $\eps > 0$.  As we the goal is to minimize the number of elements (vertices) in the union, it may not seem natural to consider approximation guarantees which are a function of $m$. In particular, 
an arbitrary choice of $k$ sets clearly gives an $n$-approximation (OPT must be at least $1$ and the union of any $k$ sets is at most the universe size $n$).  So, for example, if $m \geq n^4$ then our algorithm does not beat this trivial algorithm. Unfortunately, as noted earlier, Conjecture~\ref{con:HDvR} rules out any polynomial time $O(n^{1-\eps})$ approximation for M$k$U for any constant $\eps > 0$, as we show in Appendix~\ref{sec:random-app}.

On a more positive note, consider the following problem, which we call \emph{SSVE-Union} (SSVE-U).  Given a graph $G = (V, E)$ and an integer $k$, the goal is to find the set $S \subseteq V$ with $|S| \leq k$ which minimizes $|N(S)| / |S|$.  This is clearly a special case of SSBVE, in which we have a bipartite graph with two copies of $V$ and the corresponding edges. Since both sides have the same cardinality, we get an $O(n^{1/4+\eps})$-approximation for this problem. 

Moreover, there is a simple reduction in the other direction: given an instance $(G(U,V,E),k)$ of SSBVE, construct a new graph in which all vertices in $V$ are attached to a sufficiently large clique. This shows that an $f(n)$-approximation for SSVE-U translates to an $f(n)$-approximation for SSBVE. In other words, the problems are computationally equivalent, and $n^{1/4+\eps}$ is always a non-trivial approximation guarantee for SSVE-U. 

An important remaining open problem concerns approximating M$k$U for hypergraphs with bounded edge size. Currently, the only tight approximations (according to Conjecture~\ref{con:DvR}) are for the graph case~\cite{BCCFV10, CDK12}, while even for 3-uniform hypergraphs, the current best known approximation~\cite{CDKKR16} (in terms of $n$, the number of vertices) is $\tilde O(n^{2/5})$, whereas the lower-bound matching Conjecture~\ref{con:HDvR} would be $n^{2-\sqrt{3}}\approx n^{0.2679}$. Plugging our main algorithm (for general hypergraphs) into the algorithm of~\cite{CDKKR16} for the 3-uniform case of M$k$U brings the approximation guarantee down to $O(n^{3/8+\eps})$, though this mild improvement is still far from the predicted lower bound, and it seems new techniques may be needed to achieve tight approximations for such special cases.

\bibliographystyle{plain}
\bibliography{refs}

\begin{thebibliography}{10}

\bibitem{Applebaum13}
Benny Applebaum.
\newblock Pseudorandom generators with long stretch and low locality from
  random local one-way functions.
\newblock {\em SIAM Journal on Computing}, 42(5):2008--2037, 2013.

\bibitem{sse}
Nikhil Bansal, Uriel Feige, Robert Krauthgamer, Konstantin Makarychev,
  Viswanath Nagarajan, Joseph~Seffi Naor, and Roy Schwartz.
\newblock Min-max graph partitioning and small set expansion.
\newblock {\em SIAM Journal on Computing}, 43(2):872--904, 2014.

\bibitem{Bhaskara-thesis}
Aditya Bhaskara.
\newblock {\em Finding Dense Structures in Graphs and Matrices}.
\newblock PhD thesis, Princeton University, 2012.

\bibitem{BCCFV10}
Aditya Bhaskara, Moses Charikar, Eden Chlamtac, Uriel Feige, and Aravindan
  Vijayaraghavan.
\newblock Detecting high log-densities: an
  \emph{O}(\emph{n}\({}^{\mbox{1/4}}\)) approximation for densest
  \emph{k}-subgraph.
\newblock In {\em Proceedings of the 42nd {ACM} Symposium on Theory of
  Computing, {STOC} 2010, Cambridge, Massachusetts, USA, 5-8 June 2010}, pages
  201--210, 2010.

\bibitem{BCVGZ12}
Aditya Bhaskara, Moses Charikar, Aravindan Vijayaraghavan, Venkatesan
  Guruswami, and Yuan Zhou.
\newblock Polynomial integrality gaps for strong {SDP} relaxations of densest
  \emph{k}-subgraph.
\newblock In Yuval Rabani, editor, {\em Proceedings of the Twenty-Third Annual
  {ACM-SIAM} Symposium on Discrete Algorithms, {SODA} 2012, Kyoto, Japan,
  January 17-19, 2012}, pages 388--405. {SIAM}, 2012.

\bibitem{Charikar00}
Moses Charikar.
\newblock Greedy approximation algorithms for finding dense components in a
  graph.
\newblock In {\em Proceedings of the Third International Workshop on
  Approximation Algorithms for Combinatorial Optimization}, APPROX '00, pages
  84--95, London, UK, UK, 2000. Springer-Verlag.

\bibitem{CHK11}
Moses Charikar, MohammadTaghi Hajiaghayi, and Howard~J. Karloff.
\newblock Improved approximation algorithms for label cover problems.
\newblock {\em Algorithmica}, 61(1):190--206, 2011.

\bibitem{CDK12}
E.~Chlamt{\'a}c, M.~Dinitz, and R.~Krauthgamer.
\newblock Everywhere-sparse spanners via dense subgraphs.
\newblock In {\em Proceedings of the 53rd Annual Symposium on Foundations of
  Computer Science (FOCS)}, pages 758--767, 2012.

\bibitem{CDKKR16}
Eden Chlamt\'a\v{c}, Michael Dinitz, Christian Konrad, Guy Kortsarz, and George
  Rabanca.
\newblock {The Densest $k$-Subhypergraph Problem}.
\newblock In {\em Approximation, Randomization, and Combinatorial Optimization.
  Algorithms and Techniques (APPROX/RANDOM 2016)}, Dagstuhl, Germany, 2016.
  Schloss Dagstuhl--Leibniz-Zentrum fuer Informatik.

\bibitem{CMVZ16}
Julia Chuzhoy, Yury Makarychev, Aravindan Vijayaraghavan, and Yuan Zhou.
\newblock Approximation algorithms and hardness of the k-route cut problem.
\newblock {\em ACM Trans. Algorithms}, 12(1):2:1--2:40, December 2015.

\bibitem{Feige98}
Uriel Feige.
\newblock A threshold of ln n for approximating set cover.
\newblock {\em J. ACM}, 45(4):634--652, July 1998.

\bibitem{FKP01}
Uriel Feige, Guy Kortsarz, and David Peleg.
\newblock The dense \emph{k}-subgraph problem.
\newblock {\em Algorithmica}, 29(3):410--421, 2001.

\bibitem{FS97-sdpgap}
Uriel Feige and Michael Seltser.
\newblock On the densest $k$-subgraph problem.
\newblock Technical Report CS97-16, Weizmann Institute of Science, Rehovot,
  Israel, 1997.

\bibitem{LM14}
Anand Louis and Yury Makarychev.
\newblock Approximation algorithms for hypergraph small set expansion and small
  set vertex expansion.
\newblock In {\em Approximation, Randomization, and Combinatorial Optimization.
  Algorithms and Techniques, {APPROX/RANDOM} 2014}, volume~28 of {\em LIPIcs},
  pages 339--355, 2014.

\bibitem{MMS14}
Konstantin Makarychev, Rajsekar Manokaran, and Maxim Sviridenko.
\newblock Maximum quadratic assignment problem: Reduction from maximum label
  cover and lp-based approximation algorithm.
\newblock {\em {ACM} Trans. Algorithms}, 10(4):18:1--18:18, 2014.

\bibitem{MM13}
Pasin Manurangsi and Dana Moshkovitz.
\newblock Improved approximation algorithms for projection games - (extended
  abstract).
\newblock In Hans~L. Bodlaender and Giuseppe~F. Italiano, editors, {\em
  Algorithms - {ESA} 2013 - 21st Annual European Symposium, Sophia Antipolis,
  France, September 2-4, 2013. Proceedings}, volume 8125 of {\em Lecture Notes
  in Computer Science}, pages 683--694. Springer, 2013.

\bibitem{Racke}
Harald R\"{a}cke.
\newblock Optimal hierarchical decompositions for congestion minimization in
  networks.
\newblock In {\em Proceedings of the Fortieth Annual ACM Symposium on Theory of
  Computing}, STOC '08, pages 255--264, New York, NY, USA, 2008. ACM.

\end{thebibliography}

\appendix

% !TEX root = SSBVE.tex

\section{Conditional Hardness Based on Random Models}\label{sec:random-app}

We examine approximation lower bounds arising from Conjecture~\ref{con:HDvR}. Let us start by seeing that the multiplicative gap between the M$k$U objective for the dense and random case tends to $m^{1/4}$ as the constant hyperedge size $r$ grows and $\eps$ decreases. Rather than computing the maximum possible gap for every $r$, let us consider a simple setting for which the conjecture predicts that we cannot distinguish: Let $\alpha=1-\eps$ and $\beta=1-2\eps$ (assume $\eps<1/r$). Since we are already using $k$ for the number of vertices in the ``dense" case, let our requirement for the number of edges in the subgraph be $\ell=k^{1+\beta}=\sqrt{m}$, giving us $k=m^{1/(4-4\eps)}$. Note that the edge probability in the random case is $n^{1+\alpha-r}=m/n^r=m^{1-r/(1+\alpha)}=m^{1-r/(2-\eps)}$. Thus, in the random case, for a $\hat k$-subgraph to have $\ell=m^{1/2}$ hyperedges, we need $$\hat k^rm^{1-r/(2-\eps)}=m^{1/2}\qquad\Rightarrow\qquad \hat k=m^{1/(2-\eps)-1/(2r)},$$ giving us a distinguishing gap of $$\hat k/k=m^{1/(2-\eps)-1/(4-4\eps)-1/(2r)}=m^{1/4-O(\eps^2)-O(1/r)}.$$

On the other hand, let us see that as a function of $n$, the conjecture gives us a distinguishing gap of $n^{1-\eps}$ for any $\eps>0$, ruling out the possibility of a good approximation as a function only of $n$. For sufficiently large $r$, let $\alpha=\sqrt{r}-1$, and $\beta=\sqrt{r}-1-\eps$. Also, let $k=n^{1/\sqrt{r}}$. 
Again, denote our requirement for the number of edges in the subgraph by $\ell=k^{1+\beta}=k^{\sqrt{r}-\eps}=n^{1-\eps/\sqrt{r}}$. Note that the edge probability in the random case is $n^{1+\alpha-r}=n^{\sqrt r-r}$. Thus, in the random case, for a $\hat k$-subgraph to have $\ell=n^{1-\eps/\sqrt{r}}$ hyperedges, we need $$\hat k^rn^{\sqrt r-r}=n^{1-\eps/\sqrt{r}}\qquad\Rightarrow\qquad \hat k=n^{1-1/\sqrt{r}+1/r-\eps/r^{3/2}},$$ giving us a distinguishing gap of $$\hat k/k=n^{1-2/\sqrt{r}+1/r-\eps/r^{3/2}}.$$

% !TEX root = SSBVE.tex

\section{Proof of Lemma~\ref{lem:preprocessing}}\label{sec:preprocessing-app}

Let us start with the $r$-uniformity claim. Note that the vertices of $U$ can be partitioned into $\log |V|$ buckets $B_1,\ldots,B_{\lceil\log |V|\rceil}$ such that the degree of every node in $B_i$ is in the range $[2^{i-1},2^i]$. Let $\sopt$ be an optimum $k$ subset of $U$. Then at least one bucket $B_i$ contains a $1/\log|V|$ fraction of the nodes in $\sopt$, and thus has expansion at most $$\frac{|N(\sopt)\cap B_i|}{|\sopt\cap B_i|}\leq\frac{|N(\sopt)|}{|\sopt\cap B_i|}\leq\log|V|\cdot\frac{|N(\sopt)|}{|\sopt|}.$$
Thus, since it suffices to approximate the least expanding set of cardinality \emph{at most} $k$, it suffices to focus on the subgraph induced on $(B_i,V)$ (up to a $\log V$ factor). However, this does not make the left nodes $r(=2^i)$-uniform. This can be easily fixed by adding $r$ nodes to $U$, and adding $2^{i}-\deg(u)$ edges from every $u\in B_i$ to this set. It is easy to see that this does not affect the expansion of any set in $B_i$ by more than a constant factor.

The assumption that the cardinality of $\topt=N(\sopt)$ is known is easily justified by trying all possible values $\{r,\ldots,|V|\}$ in our algorithm, and returning the least expanding set from among these $|V|-r+1$ iterations. As noted, the average ``back-degree" (the number of neighbors in $\sopt$) among vertices in $\topt$ is then known to be $d=kr/|\topt|$.

Let $\alpha=\log_n(n/k)$, so that $k=n^{1-\alpha}$. Our goal then is to obtain a $k^{\alpha+\eps}$-approximation (for arbitrarily small constant $\eps>0$). Again, note that the trivial algorithm of choosing an arbitrary $k$-subset of $U$ (which may have at most $kr$ neighbors) gives a $d$ approximation. Thus we may assume that $d\geq k^{\alpha+\eps}$.

In fact, in our analysis we will \emph{require} $d$ to be $k^{\alpha+\eps}$ for some sufficiently small $\eps$, so if $d$ is much larger, we need to somehow reduce it. This is accomplished via a subsampling idea which appeared in \cite{Bhaskara-thesis} (in the context of D$k$S): If $d=k^{\alpha+\eps+\beta}$ for some $\beta\in(0,1-\alpha-\eps]$, prune $U$ by retaining every $u\in U$ independently with some probability $n^{-\gamma}$, and let $U_\gamma$ be the remaining vertices in $U$. Note that $|U_\gamma|$ is tightly concentrated around $n_\gamma:=n^{1-\gamma}$, and similarly, $$k_\gamma:=|\sopt\cap U_\gamma|\approx kn^{-\gamma}=n^{1-\alpha-\gamma}\approx n_\gamma^{1-\frac{\alpha}{1-\gamma}}.$$
In particular, $\topt$ has average back-degree roughly $dn^{-\gamma}$ into $\sopt\cap U_\gamma$.
Furthermore, a Chernoff bound argument gives the following:
\begin{claim} With high probability, for any set $S\subseteq U_\gamma$ and $T=N(S)$, if $T$ has average back-degree $d'$ into $S$ for some sufficiently large $d'$ (at least $log(n)$), then there is a set $S'\subseteq U$ such that $|S'|\geq |S|n^{\gamma}$ and $N(S')=T$.
\end{claim}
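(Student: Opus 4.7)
The natural candidate for $S'$ is the maximal subset of $U$ whose neighborhood is contained in $T$, namely $S' := \{u \in U : N(u)\subseteq T\}$. This set depends only on $T$ and the (fixed) graph $G$, not on the random subsample $U_\gamma$, and by construction $N(S')\subseteq T$. Any $S\subseteq U_\gamma$ with $N(S)=T$ must satisfy $S\subseteq S'$, hence $T = N(S)\subseteq N(S')\subseteq T$, giving $N(S') = T$. So the only non-trivial task is the size bound $|S'|\geq |S| n^\gamma$. Since $S\subseteq S'\cap U_\gamma$, it suffices to prove the contrapositive: with high probability over $U_\gamma$, for every $T$ with $|S'| < |S| n^\gamma$, we have $|S'\cap U_\gamma| < |S|$ (so no ``bad'' $S$ can arise).

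The sampling of $U_\gamma$ makes $|S'\cap U_\gamma|$ a sum of $|S'|$ independent Bernoullis of mean $n^{-\gamma}$, hence concentrated around $\mu := |S'|\, n^{-\gamma}$. For a fixed $T$ with $|S'| < |S|\, n^\gamma / (2e)$, the Chernoff bound in the regime $K \gg \mu$ gives
\[
\Pr\bigl[|S'\cap U_\gamma| \geq |S|\bigr] \;\leq\; \bigl(e\mu/|S|\bigr)^{|S|} \;\leq\; 2^{-|S|}.
\]
The back-degree hypothesis pins down $|T| = |S|r/d'$, so the number of relevant $T$ at this scale is at most $\binom{|V|}{|T|} \leq (e|V|/|T|)^{|T|}$. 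Combining the Chernoff bound with this union bound requires $|S| \gtrsim (|S|r/d')\log n$, i.e.\ $d' \gtrsim r\log n$. With the paper's setting $r = \Theta(\log n)$, interpreting ``sufficiently large $d'$'' as $d' \geq C(\log n)^2$ (rather than literally $\log n$) makes each failure probability at most $n^{-\omega(1)}$, and taking an additional union bound over $O(\log n)$ dyadic scales of $|S|$ (and hence $|T|$) gives the desired statement simultaneously for all relevant $(S,T)$ with probability $1 - n^{-\omega(1)}$.

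The main obstacle is making the union bound go through uniformly across all possible candidate sets $T$, since there are exponentially many. The key leverage is the back-degree assumption, which restricts the relevant $|T|$ to a bounded range in terms of $|S|$ and $d'$, and ensures that the per-$T$ Chernoff tail decays fast enough (in $|S|$) to absorb the combinatorial factor $\binom{|V|}{|T|}$. A minor subtlety to verify carefully is that the claim as stated should be read up to a constant factor on the right-hand side (since the Chernoff argument naturally yields $|S'|\geq |S|n^\gamma/O(1)$); this is benign for the downstream use in Lemma~\ref{lem:preprocessing}, which only loses constant factors in the approximation ratio.
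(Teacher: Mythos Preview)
The paper does not actually prove this claim; it only says ``a Chernoff bound argument gives the following'' and states the claim. Your proposal supplies exactly the natural such argument --- take $S'=\{u\in U:N(u)\subseteq T\}$, observe $S\subseteq S'\cap U_\gamma$, and use a Chernoff upper-tail bound on $|S'\cap U_\gamma|$ together with a union bound over candidate neighborhoods $T$ --- and your analysis of when the union bound closes is correct. Both caveats you raise (that the argument naturally gives $|S'|\geq |S|n^{\gamma}/O(1)$ rather than $|S'|\geq |S|n^{\gamma}$, and that the required threshold on $d'$ is $\Omega(r\log|V|)$ rather than the stated $\log n$) are genuine; the claim as written in the paper is slightly loose on both counts, and you are right that neither affects the downstream use in Lemma~\ref{lem:preprocessing}.

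One small correction: you write ``with the paper's setting $r=\Theta(\log n)$'', but that is the left-degree in the random planted model of Section~2. In the context of this claim (Appendix~B, the preprocessing for worst-case instances), $r$ is the bucketed left-degree and can be arbitrary. So the correct statement of the threshold is $d'\gtrsim r\log|V|$ in general, not $d'\gtrsim\log^2 n$. This does not change the structure of your argument, only the interpretation of the final threshold.
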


Thus, if we can choose $\gamma$ such that the new optimal back-degree $dn^{-\gamma}$ is $k_\gamma^{\frac{\alpha}{1-\gamma}+\eps}$, 
 then a $k_\gamma^{\frac{\alpha}{1-\gamma}+\eps}/f$-approximation\footnote{This is consistent with our general plan. When $k_\gamma=n_\gamma^{1-c}$ for some constant $c$,  and the optimal back-degree is $k_\gamma^{1+\eps}$, we want something that is roughly a $k_\gamma^{c+\eps}$ approximation.} for some $f\geq\log n$ will give us a solution with back degree at least $f$, from which we can recover a solution in the original graph with back-degree \begin{align*}f\cdot n^{\gamma}=f\cdot d/k_\gamma^{\frac{\alpha}{1-\gamma}+\eps}&\approx f\cdot d/n^{\alpha-\frac{\alpha^2}{1-\gamma}+(1-\alpha-\gamma)\eps}\\&>f\cdot d/n^{\alpha(1-\alpha)+\eps}.\end{align*}
Thus, it remains to show that there is indeed some $\gamma$ such that $dn^{-\gamma}=k_\gamma^{\frac{\alpha}{1-\gamma}+\eps}$. Note that the left hand side is larger for $\gamma=0$. However, by increasing $\gamma$, equality must be achieved at some point, since raising $\gamma$ so that $n^{\gamma}=d$, we get $1$ on the left, and $(k/d)^{\frac{\alpha}{1-\gamma}+\eps}<1$ on the right.

Thus, we will assume without loss of generality that $d=k^{\alpha+\eps}$ for some arbitrarily small constant $\eps>0$, and our goal will be to find a $k^{\alpha+\eps}/f$-approximation for some $f\geq\log n$. Since we will indeed need $\eps>0$ to be a constant (see the next paragraph), we may as well take larger $f$ to improve our approximation guarantee. 
So our goal will be to obtain a $k^{\alpha+\eps-c\eps}$-approximation for some small constant $c>0$.

Finally, our algorithm will be tailored to parameters of the form $k=n^{1-\alpha}$ where $\alpha=p/q$ for some bounded (mutually prime) integers $q>p>0$. This can also be achieved as above by choosing $\gamma$ carefully (so that $k_\gamma=n_\gamma^{1-p/q}$ for some bounded integers $p,q$). In this case we will have $\eps=O(1/q)$.

\end{document}